\newcommand\eg{\textit{e.g.}}
\newcommand{\abs}[1]{\left \lvert #1 \right \rvert}
\DeclareMathOperator{\var}{Var}
\DeclareMathOperator{\trace}{Tr}
\DeclareMathOperator{\sgn}{sgn}
\DeclareMathOperator{\Pf}{Pf}
\DeclareMathOperator{\lcm}{lcm}
\DeclareMathOperator*{\res}{Res}
\theoremstyle{plain}
\newtheorem{theorem}{Theorem}[section]
\newtheorem{corollary}[theorem]{Corollary}
\newtheorem{proposition}[theorem]{Proposition}
\newtheorem{lemma}[theorem]{Lemma}
\theoremstyle{definition}
\newtheorem{remark}[theorem]{Remark}
\numberwithin{equation}{section}
\begin{document}
\title{Tau-Function Theory of Chaotic Quantum Transport with
  $\beta=1,2,4$} %19/03 - "Quantum Chaotic" or "Chaotic Quantum"? :)
    
\author{F.~Mezzadri}
\address{School of Mathematics, University of Bristol, Bristol BS8
1TW, UK}
\email{f.mezzadri@bristol.ac.uk}

\author{N.~J.~Simm}
\address{School of Mathematical Sciences, Queen Mary, University of
 London, Mile End Road, London E1 4NS, UK}
\email{n.simm@qmul.ac.uk}

 \thanks{Research partially supported by EPSRC, grant no: EP/G019843/1}

\begin{abstract}
  We study the cumulants and their generating functions of the
  probability distributions of the conductance, shot noise and Wigner
  delay time in ballistic quantum dots. Our approach is based on the
  integrable theory of certain matrix integrals and applies to all the
  symmetry classes $\beta\in \left \{1,2,4\right \}$ of Random Matrix
  Theory. We compute the weak localization corrections to the mixed
  cumulants of the conductance and shot noise for $\beta=1,4$, thus
  proving a number of conjectures of Khoruzhenko \textit{et
    al.}~\cite{KSS09}.  %We
  %also calculate the leading order of the cumulants of the Wigner
  %delay time in the limit of large numbers of quantum channels for $\beta=2$. 
%I suggest mentioning the above commented out sentence a bit further
%down. I ha e also emphasized proof of conjectures.
  We derive differential equations that characterize the cumulant
  generating functions for all $\beta \in \left \{1,2,4\right \}$.
  Furthermore, when $\beta=2$ we show that the cumulant generating
  function of the Wigner delay time can be expressed in terms of the
  Painlev\'e III${}^{\prime}$ transcendant. This allows us to study
  properties of the cumulants of the Wigner delay time in the
  asymptotic limit $n \to \infty$. Finally, for all
  the %%I added a comma after "orders"'
  symmetry classes and for any number of open channels, we derive a
  set of recurrence relations that are very efficient for computing
  cumulants at all orders. %modified this slightly
\end{abstract}
\maketitle
\tableofcontents

\section{Introduction}
\label{sec:intro}
\subsection{Background}
\label{sse:bg}
Many successful approaches to tackle problems in quantum transport are
based on the the Landauer-B\"uttiker scattering theory of electronic
conduction (see, for example, the review article~\cite{Bee97} and
references therein).  It applies to two very general types of systems:
the first are mesoscopic conductors confined in space, often referred %replaced "conductor" with "conductors"
to as \textit{quantum dots}, through which the electric current flows
via two point contacts; the second  are quasi-one dimensional
wires containing scattering impurities.  The physical dimensions of
these systems are small enough that the quantum mechanical phase
coherence of the electron is important and its properties cannot be
understood in terms of classical mechanics; at the same time they are
large enough that a statistical description of the electrical current
becomes meaningful.

In 1985 Altshuler~\cite{Alt85} and independently Lee and
Stone~\cite{LS85} discovered that the statistical fluctuations of the
conductance in disordered wires are \textit{universal}.  More
precisely, they are independent of the dimensions of the sample or the
strength of the disorder provided that the aspect ratio
$\mathrm{length/width} \gg 1 $, the length is much longer than the
electron mean free path and much shorter than the localization length. %"main" replaced with "mean"
Soon afterwards this phenomenon was observed
experimentally~\cite{WW86}.  Approximately at the same time, in a
different area of physics, it was discovered that as $\hbar \to 0$ the
correlations of the spectra of quantum systems with a chaotic
classical limit are the same as those of the eigenvalues of random
matrices from appropriate \textit{ensembles}~\cite{BGS84,Ber85b}. It
was then realised that Random Matrix Theory (RMT) could provide the %RM replaced with RMT
mathematical framework to develop a statistical theory of quantum
transport that would account for the universality of the fluctuations %%"would account of the" replaced with "would account for the"
of the electric current~\cite{IWZ90a,IWZ90b,BM94,Bee93,JPB94}.  

The main hypothesis behind the RMT approach to quantum transport is
that the average dwell time of the scattering electron is much larger
than the Ehrenfest time, so that the system-specific features of the
conductor become negligible and physical quantities can computed
through appropriate ensemble averages, which depend only on the
symmetries of the system.  In this paper we focus on quantum dots
connected to the environment via \textit{ideal leads}, or,
equivalently, through \textit{ballistic point contacts.}  Under these
assumptions, the scattering matrix is uniformly distributed in one of
Dyson's circular
ensembles~\cite{BS88,BS90,Bee93,BM94,JPB94}: %removed `of`
the COE ($\beta=1$), CUE ($\beta=2$) and CSE ($\beta=4$), depending on
whether the system has time-reversal and/or spin-rotational
symmetry. In superconducting (Andreev) quantum dots additional
constraints imply that the relevant symmetry classes are
those %%typo: additional constrains replaced with additional constraints.
discovered by Zirnbauer~\cite{Zir96} and Altland and
Zirnbauer~\cite{AZ96,AZ97}.

The subjects of this paper are the cumulants and their generating
functions of the probability distributions of  
the \textit{electrical conductance,}   the \textit{shot noise,} which is the time average of
the current fluctuations at zero temperature, and  the \textit{Wigner
  delay time,} which is a measure of the extra time an electron spends
in the scattering region. The Landauer formula expresses the
conductance as a sum of the \textit{transmission eigenvalues,}
$T_1,\dotsc,T_n$, where $n$ is the number of travelling modes (quantum
channels) available to the electron wavefunction incoming, or
outgoing, at one of the two edges of the conductor.  It follows that
both conductance and shot noise are linear statistics of
$T_1,\dotsc,T_n$.  More precisely, they are
\begin{equation}
  \label{eq:c_sn}
  G/G_0 := \sum_{j=1}^n T_j \quad \text{and}\quad  P/P_0:=\sum_{j=1}^n T_j(1-T_j),
\end{equation}
respectively.  The constant $G_0:= 2e^2/h$ is the unit of quantum
conductance and $P_0:= 2eVG_0$.  Here $V$ is the voltage difference
between the two edges of the conductor. In what follows we shall
always choose units where $G_0=P_0=1$.  The Wigner delay time is the %%replaced Wigner time delay with Wigner delay time for consistency.
average of the \textit{proper delay times,}
$\tau_1,\dotsc,\tau_n$, namely
\begin{equation}
  \label{eq:w_td}
  \tau_{\mathrm{W}} := \frac{1}{n}\sum_{j=1}^n \tau_j,
\end{equation}
where in this case $n$ is the total number of quantum channels. %%replaced "total number of the quantum channels" with "total number of quantum channels"
As we shall discuss in detail in~Sec.~\ref{sse:c_s_w}, RMT predicts
that the random variables $T_j$ are distributed like the spectra of
matrices in the \textit{Jacobi
  Ensembles}~\cite{BM94,Bee97,For06b,DBB10}, while the proper delay
times like the inverse of the eigenvalues of matrices in the 
\textit{Laguerre Ensembles}~\cite{BFB97}.  In the limit $n \to \infty$
the conductance, shot noise and Wigner delay time converge to normal
random variables (see, \textit{e.g.},~\cite{Pol89,VMB08,VMB10}).

In this article we compute the first corrections to the semiclassical
limit $n\to \infty$, often referred to as \textit{weak localisation
  corrections}, of the mixed cumulants of $G$ and $P$ for
$\beta=1,4$.  We also calculate the leading order contributions to the
cumulants of the Wigner delay time.  In addition, we derive a set of
differential equations that are satisfied by the cumulant generating
functions for all $\beta \in \left \{1,2,4\right \}$.  In particular,
we prove that when $\beta = 2$ the cumulant generating function of
$\tau_{\mathrm{W}}$ can be expressed in terms of a solution of the %replaced "in terms of solution of the.." with "in terms of a solution of the.."
Painlev\'e III${}^{\prime}$ equation.

For over twenty years the universal properties of both linear
and %replaced "of linear and nonlinear" with of both linear and non-linear". Feel free to revert. Replaced statistic with statistics.
non-linear statistics of the transmission eigenvalues and the
proper %replaced `linear statistics of Wigner delay time` with `linear statistics of proper times`
delay times have been extensively studied using both
RMT~\cite{LSSS95,FS97,FSS97,SFS01,KSS09,Nov07,Nov08,
  OK08,SSS01,OK09,KP10b,KP10a,SS03,SS06,SWS07,SSW08,LV11,MS11,
  MS12,VMB08,VMB10} and periodic orbit
theory~\cite{VOdAL98,RS02,VL01,LV04,MHBHA04,MHBHA05,BHMH06,HMBH06,
  KS07,KS08,BHN08,BK10,BK11,BK12,KEBPWR11}.  Nevertheless, the higher %14/03 - I added BK12 reference here. Slightly embarrassed we missed it out before.
moments
and %replaced the moments with the higher moments. Feel free to revert.
cumulants of $G$, $P$ and $\tau_{\mathrm{W}}$ appeared to be elusive
for a long time and considerable effort was spent to investigate the
lower cumulants~\cite{FS97,SFS01,SS06,SSW08,SSS01,SS03,SWS07}.  This
is hardly surprising, as the global fluctuations of the spectra of
random matrices are particularly difficult to probe and few
results are available~\cite{Joh98,DE06,KS10,BG13a,BG13b,DP12}. In the case of the
Wigner delay time, the problem is further complicated by the fact that %%replaced time delay with delay time for consistency. Also added a comma.
the right-hand side of Eq.~\eqref{eq:w_td} is a \textit{singular}
linear statistic on the eigenvalues of matrices in the Laguerre
Ensembles. Furthermore, the appropriate Laguerre Ensemble contains an unusual $n$-dependent exponent (see Eq.~\eqref{eq:par_lag}).
As is often the case in RMT, ensembles with symmetries $\beta=1$ and $\beta=4$ are more
difficult to deal with than those %%slightly modified/broke up this sentence
with $\beta=2$.

  Recently Osipov and Kanzieper~\cite{OK08,OK09} made substantial
  progress by using the theory of integrable systems combined with RMT
  to study the cumulant generating functions of the conductance and shot %corrected grammar here. I have not done it, but should we emphasize further here that they only treat \beta=2?
  noise.  They proved that when time-reversal symmetry is broken, the cumulant %%added "symmetry". added a comma before "the".
  generating function of the conductance can be expressed in terms of
  a solution to the Painlev\'e V equation.  They also computed the %%removed "the solution to the Painlev\'e V equation" replaced with "a solution" since there are many solutions.
  leading order terms of the cumulants in the limit as $n \to \infty$.
  Independently, Novaes~\cite{Nov08} using Selberg's integral
  discovered a finite $n$ formula for the moments of the conductance
  for $\beta=2$ as a sum over partitions. Soon afterward %removed "partitions of $n$" here. His result was a sum over partitions of the moment number.
  Khoruzhenko \textit{et al}~\cite{KSS09} made a further considerable
  contribution by combining generalizations of Selberg's integral with the theory of
  symmetric functions.  In particular, these techniques allowed them
  to compute the first asymptotic correction of the cumulants of
  conductance and shot noise when $\beta=2$. They were able to formulate %%broke this sentence up a bit, it was very long.
  conjectures for the same quantities for ensembles with $\beta=1$
  symmetries, which until then had been much more difficult to tackle %%replaced "up to then" with "until then"
  than the case $\beta=2$. %replaced `than the case when \beta=2` with
                           %`than the case \beta=2`  
  In a recent letter Vidal and Kanzieper~\cite{VK12} computed the
  joint probability density function of the reflection eigenvalues in
  quantum dots with broken time reversal and coupled to the
  environment via point contacts with tunnel barriers.

  Our approach is based on the integrable theory of
  certain %replaced `of the certain` with `of certain`? Sounded a bit more natural.
  classes of matrix integrals first introduced
  in~\cite{MM90,GMMMO91,TSY96} and then developed to its full
  potential by Adler, Shiota and
  van %replaced potentials with potential
  Moerbeke~\cite{ASvM95,ASvM98,ASvM02} and by Adler and van
  Moerbeke~\cite{AvM95,AvM01a,AvM01b,AvM02}.  Using this formalism, we
  show that the mixed cumulant generating function of the
  conductance %replaced "cumulants generating function" with "cumulant generating function"
  and shot noise satisfies a certain non-linear PDE, which can be
  reduced to a set of recurrence relations for the cumulants of the
  conductance and the joint cumulants of conductance and shot
  noise. By looking at the leading order contribution to such
  difference equations, we can prove the conjectures of
  Khoruzhenko %conjectures by replaced with conjectures of %%added some commas
  \textit{et al.}~\cite{KSS09} on the cumulants of conductance and shot
  noise when $\beta=1$, as well as perform the same analysis when
  $\beta=4$.

  These techniques are powerful enough to cope with singularities of
  the linear statistics~\eqref{eq:w_td} too.  Indeed, we can apply the
  same approach to the distribution of $\tau_{\mathrm{W}}$ and prove a
  number of finite $n$ properties that were previously beyond reach.
  These include recurrence relations for the cumulants for all $\beta
  \in \{1,2,4\}$, as well as the link between the Painlev\'e  $\mathrm{III}^{\prime}$ %added the prime to III %%added a comma
  equation and the cumulant generating function when
  $\beta=2$. %Unfortunately, however, the extra technical difficulties
  %did not allow us to carry out the analysis to the same extent that
  %was achieved for conductance and shot noise.

%I removed this sentence, I think we are already honest enough about our difficulties with the time delay.
\subsection{The Landauer-B\"uttiker Theory and Random Matrix Theory} %Both Theories are now capitalized. Feel free to revert.
\label{sse:c_s_w} 

The dynamics of an electron in a mesoscopic conductor can be studied
by looking at its scattering by a two-dimensional ballistic cavity
whose classical dynamics is chaotic.  The scattering region is
connected to two electron reservoirs in equilibrium at zero
temperature and Fermi energy $E_{\mathrm{F}}$ by two ideal leads,
which support a finite number of propagating modes (quantum channels).
At low voltage the electron-electron interactions are negligible and
the scattering is elastic. Therefore, the scattering matrix at energy
$E$ relative to $E_{\mathrm{F}}$ is unitary and has the block
structure
\begin{equation}
\label{scatterintro}
S(E) := 
\begin{pmatrix} 
r_{m \times m} & t_{m \times  n}'\\
t_{n \times m} & r_{n \times n}' 
\end{pmatrix}.
\end{equation}
Here $m$ and $n$ are the numbers of channels in the left and right
leads, respectively; the sub-blocks $r_{m\times m}$, $t_{n\times m}$ are
the reflection and transmission matrices through the left lead and
$r'_{n\times n}$ and $t'_{m \times n}$ those through the right
lead.  Without loss of generality we shall assume that $m \ge n$.  The
unitary matrix in~\eqref{scatterintro} is often referred to as %"referred to as Landauer-B..." replaced with "referred to as the Landauer-B..."
the Landauer-B\"uttiker scattering matrix.  In this setting the
semiclassical limit $\hbar \to 0$ is equivalent to $n \to \infty$
while the ratio $m/n$ remains finite.

The transmission eigenvalues $T_1,\dotsc,T_n$ are the eigenvalues of
the Hermitian matrix $tt^\dagger$ as well as of $t^{\prime
  \dagger}t'$; thus, formulae~\eqref{eq:c_sn} become
\begin{equation}
  \label{eq:sec_con}
  G = \trace tt^\dagger \quad \text{and} \quad 
 P=\trace \bigl(tt^\dagger(I - tt^{\dagger})\bigr). %%tt^{\dagger}-I replaced with I-tt^{\dagger}
\end{equation}
Since $S(E)$ is unitary, $T_j \in [0,1]$ for $j=1,\dotsc,n$. The proper %%added a comma after unitary and replaced second comma with "for"
delay times are the eigenvalues of the Wigner-Smith time-delay matrix,
which is defined by
\begin{equation}
  \label{eq:wdtm}
  Q := -i\hbar S^{-1}(E)\frac{\partial S(E)}{\partial E}.
\end{equation}
Thus,  the Wigner delay time is the normalized trace
\begin{equation}
  \label{eq:w_dt2}
  \tau_{\mathrm{W}} = \frac{1}{n}  \trace Q.
\end{equation}
Note that when we discuss the conductance and shot noise, $n$ denotes %%added a comma after shot noise
the number of quantum channels in one lead and not the dimensions of
the scattering matrix, which are $(m + n) \times (m+n)$. When we %broke this sentence up into two (it was too long before)
consider the Wigner delay time, $n$ is the total number of channels;
thus, in this case the dimensions of $S(E)$ and of $Q$ are $n \times
n$.  We adopt this convention to simplify the notation in the rest of the
paper, since $n$ will always be the asymptotic parameter.

Within the RMT approach to quantum transport, the transmission matrix %%added a comma
$tt^\dagger$ belongs to one of the Jacobi Ensembles; therefore, the %%removed a comma
joint probability density function (\textit{j.p.d.f.}) of
$T_1,\dotsc,T_n$ is~\cite{BM94,Bee97,For06b,DBB10}
 \begin{equation}
  \label{eq:tr_eig_Aqd}
  P_{\mathrm J}(T_1,\dotsc,T_n) = \frac{1}{c^{(\beta)}_n}
\prod_{j=1}^{n}T_{j}^{\alpha}\left(1 - T_j\right)^{\delta/2} 
  \prod_{1 \leq j < k \leq n}\left.
    \lvert T_{k}-T_{j}\right \rvert^{\beta},
\end{equation}
where $\alpha =\frac{\beta}{2}\left(m-n+1\right)-1$ and
\begin{equation}
\label{eq:selberg}
\begin{split}
   c_n^{(\beta)} & := \int_{[0,1]^n}\prod_{j=1}^{n}T_{j}^{\alpha}
    \left(1 - T_j\right)^{\delta/2} 
  \prod_{1 \leq j < k \leq n}\left.
    \lvert T_{k}-T_{j}\right \rvert^{\beta}dT_1\dotsm dT_n \\
   & = \prod_{j=0}^{n-1}\frac{\Gamma\left(\alpha + 1
        +j\beta/2\right)\Gamma\left(\delta/2 + 1 + j\beta/2\right)
        \Gamma\bigl(1 + (j+1)\beta/2\bigr)}
         {\Gamma\bigl(\alpha + \delta/2 + 2 +\left(n
             +j-1\right)\beta/2\bigr) 
          \Gamma\left(1 + \beta/2\right)}
\end{split}
\end{equation}
is Selberg's integral (see, \textit{e.g.},~\cite{For10}, Ch. 4). 
If the scattering matrix $S(E)$ belongs to one of Dyson's
ensembles then $\delta=0$. When additional constraints are imposed new %replaced "Dyson's ensemble" with "Dyson's ensembles"
symmetry classes may arise.  In mesoscopic physics this happens when a
chaotic cavity is in contact with a superconductor (Andreev quantum
dots). The symmetry classes for these systems were predicted by
Zirnbauer~\cite{Zir96} and Altland and Zirnbauer~\cite{AZ96,AZ97}.
Due\~{n}ez~\cite{Due04} studied these symmetry classes too, and
extended Zirnbauer's theory.  The new ensembles modelling the
scattering matrix of Andreev quantum dots are the compact symmetric
spaces that, according to Cartan's classification, are of symmetry
types C, CI, D and DIII.  Such symmetries affect the dependence on the
integers $(\beta, \delta)$ of the
\textit{j.p.d.f.}~\eqref{eq:tr_eig_Aqd}.  More precisely, the
connection between the symmetry class of $S(E)$ and $(\beta,\delta)$
is the following~\cite{DBB10}:
\begin{align*}
  (\beta,\delta) & = (4,2), && \text{symmetry type C;} &
    (\beta,\delta) & = (2,1), && \text{symmetry type CI;}\\
  (\beta,\delta) & = (1,-1), && \text{symmetry type D;} &
  (\beta,\delta) & = (2,-1), && \text{symmetry type DIII.}
\end{align*}
% $(\beta,\delta)=(4,2)$, symmetry type C;
% $(\beta,\delta)=(2,1)$, symmetry type CI; $(\beta,\delta)=(1,-1)$,
% symmetry type D; $(\beta,\delta)=(2,-1)$, symmetry type
% DIII~\cite{DBB10}.
Physically, they correspond to different
combinations of spin-rotation and time-reversal symmetries.\footnote{It is
  important to emphasise that while $\beta=1$ and $\beta=4$ label
  Wigner-Dyson symmetry classes that are time-reversal invariant, in
  the Altland-Zirnbauer classification~\cite{Zir96,AZ96,AZ97,Due04}
  they refer to systems in which time-reversal symmetry is
  broken~\cite{AZ97,DBB10}.}

If the $T_1,\dotsc,T_n$ were independent random variables, then the
central limit theorem asserts that the variances of $G$ and $P$ would
grow in proportion to $n$; instead, it remains finite in the limit
$n %proportionally -> in proportion to
\to \infty$.  More specifically, we
have~\cite{IWZ90a,Bee93,BM94,JPB94}
\begin{equation}
\label{gpvar}
\lim_{n \to \infty}\var(G) = \frac{1}{8\beta} \quad \text{and} \quad
 \lim_{n \to \infty}\var(P) = \frac{1}{64\beta}.
\end{equation} 
This phenomenon is a manifestation of the \textit{universal
  conductance fluctuations} and is due to the strong correlations
among the transmission eigenvalues, caused by the factor $\prod_{1\le j %%added a comma after "eigenvalues".
  < k \le n}\abs{T_k - T_j}^\beta$ in the right-hand side
of~\eqref{eq:tr_eig_Aqd}.  It is a common property for the spectra
of %spectra *of* random matrices
random matrices.  Similar central limit theorems on linear statistics
of eigenvalues have been proved for the \textit{classical compact
  groups}~\cite{DS94,DE01,Joh97}, the Gaussian, Jacobi and 
Laguerre Ensembles~\cite{Joh98,DE06,DP12}.

Braun \textit{et al.}~\cite{BHMH06} and Heusler \textit{et
  al.}~\cite{HMBH06} developed a semiclassical derivation of the
average and variance of $G$ as well as of the average of $P$, which at
the time was still unknown, to all orders in $1/n$.  Shortly after, %at all order replaced with to all orders
Savin and Sommers~\cite{SS06} reproduced the semiclassical results
using Selberg's integral.  Subsequently, the same techniques allowed
Sommers \textit{et al.}~\cite{SWS07} and Savin \textit{et al.}~\cite{SSW08} to compute
the variance of $P$ and the first four cumulants of $G$
non-perturbatively.

The probability distributions of $P$ and $G$ are strongly non-Gaussian
for small numbers of quantum channels \cite{SWS07,KSS09,KP10b}. In the %%I replaced open channels with quantum channels. Really just to be consistent as we use "quantum channels" elsewhere.
large-$n$ limit they approach a universal Gaussian curve and contain
weak singularities which have been investigated using large deviation
estimates by Vivo \textit{et al.}~\cite{VMB08,VMB10}. Furthermore,
Khoruzhenko \textit{et al.}~\cite{KSS09} derived exact Fourier series
representations of these distributions; however, the coefficients of
such series involve $n \times n$ determinants ($\beta=2$) or
Pfaffians ($\beta=1$ or $\beta=4$) which are difficult to
handle explicitly.

When the scattering matrix belongs to one of Dyson's circular
ensembles, then the \textit{j.p.d.f.} of the proper delay
times is~\cite{BFB97}  %of the proper delay times *is*...
\begin{equation}
\label{eq:wdt_d}
P_{\mathrm W}(\tau_{1},\ldots,\tau_{n}) = \frac{1}{Z}
\prod_{j=1}^{n}\tau_{j}^{-b}e^{-\frac{\beta \tau_{\mathrm{H}}}{2\tau_{j}}}
\prod_{1 \leq j < k \leq n}\abs{\tau_{k}-\tau_{j}}^{\beta},
\end{equation}
where $Z$ is a normalization constant, $\tau_{\mathrm H}$ is the
Heisenberg time and
\begin{equation}
\label{eq:par_lag}
b=\frac{3\beta n}{2} + 2 -\beta.
\end{equation}
In our choice of
units $\tau_{\mathrm H}=n$.  The substitution $x_j= 1/\tau_j$,
$j=1,\dotsc,n$, turns~\eqref{eq:wdt_d} into the \textit{j.p.d.f.} of
the eigenvalues of matrices in the Laguerre Ensembles.

The singularities in the right-hand side of Eq.~\eqref{eq:wdt_d} make
the Wigner delay time considerably more challenging to study than the
conductance and shot noise; indeed, much less is known about the
probability distribution of $\tau_{\mathrm{W}}$. Lehmann 
\textit{et al.}~\cite{LSSS95} studied the parametric
correlations of $\tau_{\mathrm W}$ when $\beta=1$  for
small and large number of quantum channels using supersymmetry
techniques in  RMT. Fyodorov and
Sommers~\cite{FS97} computed the finite $n$ parametric correlations as %Yan let me know the FS97 reference should contain last page number i.e. 1918-1981 rather than 1918 (64pp). Changed this in the .bib.
well as the variance when $\beta=2$. Furthermore, Fyodorov \textit{et
  al.}~\cite{FSS97} derived the crossover of the same quantities to
systems with $\beta=1$. Kuipers and Sieber~\cite{KS08} computed such
correlations as %Kuiper -> Kuipers
well as the first seven terms in the asymptotic expansion of the variance using periodic orbit %14/03 - Changed this, they went further than leading order.
theory for $\beta=1$ and $\beta=2$ symmetries.  Semiclassical
calculations to all orders in $1/n$
are %replaced Computing all orders in 1/n from...
still an open problem in this case. The distribution of the Wigner
delay time for small numbers of channels
as %distributions -> distribution and number -> numbers. Sounds a bit more natural.
well as its transition to non-ideal coupling was treated
in~\cite{SFS01,SSS01,SS03} for each of the Wigner-Dyson symmetry
classes. %replaced "all the" with "each of the"
Recently, Texier and Majumdar~\cite{TM13} studied the large deviations in the tails of the distribution of $\tau_{\mathrm{W}}$. %21/03 - slightly altered the grammar here
\newpage

\section{Main Results and Discussions}
\label{se:m_r}
  \subsection{Conductance and Shot Noise}
\label{ss:c_sn_mr}
  The definitions~\eqref{eq:c_sn} and the
  \textit{j.p.d.f.}~\eqref{eq:tr_eig_Aqd} imply that the random
  variables $G$ and $P$ are correlated.  The \textit{joint moment
    generating function}
\begin{equation}
\label{intrepjointgen}
\begin{split}
\mathcal{M}^{(\beta)}_{n}(z,w) & := \frac{1}{c^{(\beta)}_{n}}
\int_{[0,1]^{n}}\prod_{j=1}^{n}T_{j}^{\alpha}(1-T_{j})^{\delta/2}\\
& \quad \times \exp\bigl(-zT_{j}-wT_{j}(1-T_{j})\bigr)
    \prod_{1\le j <k \le n}\abs{T_k - T_j}^{\beta}dT_1 \dotsb dT_n
\end{split}
\end{equation}
contains all the information on the joint probability density function
of $G$ and $P$. By definition the joint cumulants
$\kappa^{(\beta)}_{l,k}$ are the coefficients in the Taylor %15/03 - are the coefficients *IN* the Taylor expansion ("in" was missing before)
expansion 
\begin{equation}
\label{mixcumugen}
\sigma^{(\beta)}_n(z,w) := \log \mathcal{M}^{(\beta)}_{n}(z,w) = \sum_{l=0}^{\infty}
\sum_{k=0}^{\infty}\frac{(-1)^{l+k}\kappa^{(\beta)}_{l,k}}{l!k!}z^{l}w^{k}.
\end{equation}
Since $\mathcal{M}_n^{(\beta)}(0,0)=1$, $\kappa^{(\beta)}_{0,0}=0$. The
cumulants of the conductance or of the shot noise are obtained as 
special cases by setting $w=0$ or $z=0$ separately in
\eqref{mixcumugen}; throughout this paper we will reserve the notation
$\kappa^{(\beta)}_{l} := \kappa^{(\beta)}_{l,0}$ for the cumulants of the conductance. %I removed a superfluous "conductance" from this sentence.

As $n,m \to \infty$ and $m-n$ remains finite, the averages of the conductance 
and shot noise are given by, %I replaced "and $m/n$ remains finite" with "and $m-n$ remains finite". The former limit (m/n finite) is more complicated and is never considered in this paper. 
\begin{equation}
  \label{eq:averages}
  \kappa_1^{(\beta)} = \frac{n}{2} + O(1) \quad \text{and} \quad %changed the leading term for conductance to be valid for m-n finite. The leading order term for the shot noise was incorrect before (with m/n finite), but is now correct for m-n finite.
  \kappa_{0,1}^{(\beta)} = \frac{n}{8}+O(1);
\end{equation}
the limit of the variances as $n \to \infty$ are given in
Eq.~\eqref{gpvar}.

 One of the main problems addressed in this
work concerns the asymptotics of the higher cumulants.
 
\begin{theorem}
\label{maintheorem} %here we also need that \delta is independent of n, although this is already implicit in our definition of $\delta$.
Let $\beta =1,4$ and suppose that $\alpha$ is
independent of $n$, i.e. $m=n +C$ for some constant $C$.
Then, we have
\begin{subequations}
\label{eq:mixed}
\begin{align}
\label{oddmixed}
\lim_{n \to \infty}n^{l+k}\kappa^{(\beta)}_{l,k} &=
\bigl(\delta/2-\alpha\bigr)\bigl(\alpha + \delta/2+2-\beta\bigr) 
\frac{(k+l-1)!}%
{\beta(2\beta)^{l}(4\beta)^{k}}, \\
\intertext{for $l$ odd and $(l,k) \neq (1,0)$,}
\label{evenmixed} 
  \lim_{n \to \infty}n^{l+k-1}\kappa^{(\beta)}_{l,k}  &=(-1)^k
  \left(\frac{\beta}{2}-1\right) \frac{(l+k-2)!}%
  {(4\beta)^{k}(4\beta)^{l}}\sum_{j=0}^{k}\binom{2j+l}{j+l/2}
  \binom{k}{j}(-2)^{-j}, 
\end{align}
for $l$ even and $(l,k) \notin \{(0,0), (0,1),(0,2),(2,0)\}$.  
\end{subequations}
\end{theorem}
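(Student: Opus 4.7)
My plan is to prove Theorem \ref{maintheorem} by exploiting the integrable structure of the Jacobi-weighted matrix integral defining $\mathcal{M}_n^{(\beta)}(z,w)$. The first step is to regard $\mathcal{M}_n^{(\beta)}$ as a $\beta$-deformed tau-function of a multiple-time hierarchy, where the variables $z$ and $w$ play the role of (two of the) time coupling constants for the linear statistics $\sum T_j$ and $\sum T_j(1-T_j)$. The Adler--Shiota--van Moerbeke machinery for Jacobi-type $\beta$-ensembles then yields a set of Virasoro constraints; specializing these constraints to the directions $z,w$ and then taking a logarithm to pass from $\mathcal{M}_n^{(\beta)}$ to $\sigma_n^{(\beta)} = \log \mathcal{M}_n^{(\beta)}$ should give a closed nonlinear PDE for $\sigma_n^{(\beta)}(z,w)$ whose coefficients are polynomials in $n,\alpha,\delta,\beta$. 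This is precisely the route advertised in the introduction.

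Next, I would substitute the Taylor expansion \eqref{mixcumugen} into this PDE and compare coefficients of $z^l w^k$. Each monomial gives an algebraic relation among finitely many joint cumulants $\kappa_{l',k'}^{(\beta)}$ with $l'+k' \le l+k+O(1)$, with coefficients depending polynomially on $n$. Inserting the ansatz suggested by the theorem, namely $\kappa_{l,k}^{(\beta)} = A_{l,k}^{(\beta)} n^{-(l+k)} + B_{l,k}^{(\beta)} n^{-(l+k-1)} + \cdots$, and using \eqref{eq:averages} and \eqref{gpvar} as base cases, I would collect the highest-power-of-$n$ terms. Two distinct homogeneous recurrences emerge: one of degree $l+k$ (governing $A_{l,k}^{(\beta)}$ for odd $l$, the weak-localization sector that survives only when $\beta \ne 2$ through the prefactor $\delta/2-\alpha$, $\alpha+\delta/2+2-\beta$), and one of degree $l+k-1$ (governing $B_{l,k}^{(\beta)}$ for even $l$, where the factor $\beta/2 - 1$ explains why the $\beta=2$ case decouples). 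Both recurrences are linear in the unknown and triangular in the pair $(l,k)$, so they can be solved iteratively.

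The final step is to verify the closed-form right-hand sides of \eqref{oddmixed} and \eqref{evenmixed} by induction. For the odd-$l$ family, the $\alpha,\delta$ dependence factorises out of the recurrence on the first step, and the remaining structure $(k+l-1)!/(\beta(2\beta)^l(4\beta)^k)$ should drop out by a direct induction on $l+k$. For the even-$l$ family, the binomial sum $\sum_j \binom{2j+l}{j+l/2}\binom{k}{j}(-2)^{-j}$ is the natural generating-function coefficient arising when one expands $(1-2wT(1-T))$-type factors, and matching it with the recurrence output should reduce to a standard Vandermonde/Chu identity.

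The main obstacle I foresee is the derivation of the Virasoro/PDE at $\beta=1,4$: unlike $\beta=2$, the tau-function is a Pfaffian rather than a determinant, and the corresponding integrable hierarchy (the Pfaff lattice of Adler--van Moerbeke) has bilinear identities that are substantially more delicate than their Toda counterparts. Once the PDE is in hand, however, the asymptotic analysis is essentially algebraic and uniform in $\beta \in \{1,4\}$, and the excluded low-order cases $(l,k)\in\{(1,0),(0,0),(0,1),(0,2),(2,0)\}$ correspond exactly to the initial data required to start the recurrence, which should reassure us that the answer is both correct and sharp.
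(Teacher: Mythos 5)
Your outline follows the paper's route (deformation to a $\tau$-function, Virasoro constraints, a nonlinear equation for $\sigma_n^{(\beta)}(z,w)$, coefficient extraction, and solution of the limiting recurrences), but there is a concrete gap in how you treat the resulting equation. For $\beta=1,4$ the Pfaff-KP equation \eqref{pkpequation} has a nonzero right-hand side involving $\tau^{(\beta)}_{n\pm i(\beta)}$, so what you obtain is not a ``closed nonlinear PDE for $\sigma_n^{(\beta)}$ with polynomial coefficients in $n$'' but a differential-\emph{difference} equation: Eq.~\eqref{diffeqshot} contains the term $(12b_n^{(\beta)}/\beta)\,w\exp\bigl(\sigma^{(\beta)}_{n-i(\beta)}+\sigma^{(\beta)}_{n+i(\beta)}-2\sigma^{(\beta)}_{n}\bigr)$. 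Your plan of ``collecting the highest-power-of-$n$ terms'' from polynomial coefficients silently drops this term, yet it contributes at exactly the same leading order as the left-hand side: since $b_n^{(\beta)}\to 1/256$ and the reduced cumulants $r^{(\beta)}_{l,k}=\kappa^{(\beta)}_{l,k}\rvert_{n-i(\beta)}+\kappa^{(\beta)}_{l,k}\rvert_{n+i(\beta)}-2\kappa^{(\beta)}_{l,k}$ are second differences in $n$, they gain two powers of $n^{-1}$ but are multiplied back up by the combinatorial factor $\nu(l,k)(\nu(l,k)+1)$, producing the inhomogeneous terms $\tfrac{3}{16\beta^2}l(l-1)(l-2)(l-3+\epsilon(l))(l-4+\epsilon(l))\kappa^{(\beta)}_{l-3}(0)$ in \eqref{rec1} and $\tfrac{3k}{16\beta^2}(l+k-1-\epsilon(l))(l+k-\epsilon(l))\kappa^{(\beta)}_{l,k-1}(0)$ in \eqref{eq:l_rr_jc}. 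Without these your limiting recurrences are homogeneous and give the wrong constants. Extracting them requires expanding the exponential into the reduced moments $\mu^{(\beta)}_{l,k}$ and a partition-sum dominant-balance argument (only the single-block partition survives as $n\to\infty$); this is the technical core of the proof and is absent from your proposal.

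Two smaller but real omissions: the finite-$n$ recurrences \eqref{reccond} and \eqref{shotnoiserec} are \emph{quadratic} in the cumulants, and their linearization in the limit is not automatic --- it rests on the fact that only the boundary terms of the convolution sums (those involving $\kappa_1^{(\beta)}=n/2+O(1)$ and $\kappa_2^{(\beta)}=1/(8\beta)+O(n^{-1})$) survive, after delicate cancellations against the $\beta n$ terms in the linear part. You should also be aware that establishing the exponent $\nu(l,k)=l+k-\epsilon(l)$ itself requires an induction interleaved with the recurrence (base cases $k=0,1,2,3$ treated separately), rather than being a free ansatz; and the closed form of the even-$l$ sector is verified in the paper via contiguity relations for Jacobi polynomials, not a Vandermonde/Chu identity. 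None of these invalidate your strategy, but as written the proposal would not produce the correct right-hand sides of \eqref{oddmixed} and \eqref{evenmixed}.
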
 

There are some interesting consequences of this theorem that are worth
emphasizing. 
\begin{remark}
\label{staircase}  
Let $\beta=1$ and $\delta=0$. In other words, consider normal
  (non-superconducting) quantum dots with time-reversal and
  spin-rotation symmetries. Assume also that the number of quantum
  channels is the same in both leads, \textit{i.e.} $m=n$ and
  $\alpha=-1/2$, see Eq.~\eqref{eq:tr_eig_Aqd}.  Khoruzhenko,
  Savin and Sommers~\cite{KSS09} conjectured the weak localization
  corrections of the higher cumulants of the conductance:
 \begin{subequations}
\label{eq:kss_c1}
\begin{alignat}{2}
\label{goddconj}
\lim_{n \to \infty}n^{l}\kappa^{(1)}_{l} & = \frac{(l-1)!}{2^{l+2}},
&\qquad & \text{for $l$ odd,} \\
\label{gevenconj} 
\lim_{n \to \infty}n^{l-1}\kappa^{(1)}_{l}& = 
-\frac{(l-2)!}{2^{2l +1}}\binom{l}{l/2},  && \text{for $l$ even} 
\end{alignat}
\end{subequations}
with $l >2$. Equations~\eqref{goddconj} and~\eqref{gevenconj} are
particular cases of Theorem~\ref{maintheorem}.

In the same article it was also conjectured that at leading order the
higher cumulants of the shot noise power are
 \begin{subequations}
\label{eq:kss_sn1}
\begin{alignat}{2}
\label{poddconj}
\lim_{n \to \infty}n^{k}\kappa^{(1)}_{0,k} & = \frac{(k-1)!}{2^{2k+3}},
&\qquad & \text{for $k$ odd,} \\
\label{pevenconj} 
\lim_{n \to \infty}n^{k-1}\kappa^{(1)}_{0,k}& = 
-\frac{(k-2)!}{2^{3k+1}}\binom{k}{k/2},  && \text{for $k$ even,} 
\end{alignat}
\end{subequations}
with $k>2$. When $l=0$ the sum in~\eqref{evenmixed} is given by
\begin{equation}
  \label{eq:sum}
 \begin{split}
  \sum_{j=0}^k\binom{2j}{j}\binom{k}{j}(-2)^{-j} & = 
P_k^{(0,-k -\frac12)}(-3)=
\frac{(-1)^{k/2}\sqrt{\pi}}{\Gamma\left(- \tfrac{k}{2} +
    \tfrac{1}{2}\right)\Gamma\left(\tfrac{k}{2} + 1\right)} \\
& = \begin{cases} 0, & \text{for $k$ odd,}\\
           \frac{1}{2^k}\binom{k}{k/2}, 
           & \text{for $k$ even,} 
\end{cases}
\end{split}
\end{equation}
where $P_k^{(a,b)}(x)$ denotes the Jacobi polynomial (see %replaced Polynomials with polynomial
Appendix~\ref{ap:p_th} for the definition). The identity
~\eqref{eq:sum} reduces formula~\eqref{evenmixed} to the right-hand
side of~\eqref{pevenconj} when $k$ is even; however, mutual cancellations occur in the
sum~\eqref{eq:sum} when $k$ is odd, which yields
\begin{equation}
  \label{sub_lead}
\lim_{n\to \infty} n^{k-1} \kappa_{0,k}^{(1)} = 0.
\end{equation}
This limit implies that the first corrections to the odd cumulants of %too many "of"'s. replaced "corrections of the" with "corrections to the"
$P$ are of subleading order in $1/n$ and is consistent with the
conjecture~\eqref{poddconj}. %replaced `conjecture by` with `conjecture of`
These cancellations explain the unusual
``staircase'' behaviour in the decay of the cumulants of the shot
noise, first observed in~\cite{KSS09}. %

As we shall discuss in Secs.~\ref{ss:j_sn_c}, \ref{sse:l_s_n}
and~\ref{higherorder}, the last step in proving
Theorem~\ref{maintheorem} consists in solving the asymptotic limit of %replaced theorem with Theorem
certain difference equations.  As it turns out, proving
formula~\eqref{poddconj} requires computing the next to leading order
of the mixed cumulants $\kappa^{(1)}_{l,k}$.  Although in principle
our techniques could be carried out at the next order in $1/n$, the %could carried out -> could be carried out
calculations become substantially harder (see Sec.~\ref{higherorder}).
\end{remark}

\begin{remark}
  Theorem~\ref{maintheorem} reveals some general qualitative
  differences among the probability distributions of $G$ and $P$ for
  the various symmetry classes.
 
  In the special case $(\beta,\delta)=(2,0)$, \textit{i.e.} when the
  Landauer-B\"uttiker scattering matrix belongs to the CUE, the cumulants of conductance and shot noise were computed to leading order by Osipov and Kanzieper~\cite{OK08,OK09}. They are of subleading order in $1/n$ compared to formulae~\eqref{eq:mixed},
  which means that the convergence of $G$ and $P$ to a normal random variable is slower when $\beta=1$ and $\beta=4$. Similarly, the fact
  that the mixed cumulants are of lower order for
  $\beta=2$~\cite{OK09} implies that the correlations
  between $G$ and $P$ are stronger when $\beta=1$ and $\beta=4$.
 
  %The above paragraph before: The leading order term of the cumulants of the conductance and shot noise %
  %for $\beta=2$ and $\delta=0$, \textit{i.e.} when the
  %Landauer-B\"uttiker scattering matrix belongs to the CUE, were
  %computed by Osipov and Kanzieper~\cite{OK08,OK09}. They are of
  %subleading order in $1/n$ compared to formulae~\eqref{eq:mixed},
  %which means that the convergence to a normal random variable of $G$
  %and $P$ is slower when $\beta=1$ and $\beta=4$.  Similarly, the fact
  %that the mixed cumulants are of lower order for
  %$\beta=2$~\cite{OK09} implies that the correlations
  %between $G$ and $P$ are stronger when $\beta=1$ and $\beta=4$.

  For symmetry classes with $\beta=1$, the right-hand side of
  Eq.~\eqref{evenmixed} is negative, while it is positive when %%"while is" replaced with "while it is". In general this sentence makes me uneasy. It is very imprecise and we should probably give a reference backing it up or not include it at all.
  $\beta=4$.  Thus, in general, in the former case the probability
  distributions have a lower and wider peak around the mean than
  in the latter.
\end{remark}
%The paragraph below has been modified slightly.
Some of the intermediate results leading to Theorem~\ref{maintheorem}
are interesting in their own right and it is worth discussing the main
ideas behind the proof.  By suitably deforming the integrand
of~\eqref{intrepjointgen} with an infinite sequence of ``time variables''
$\mathbf{t}=(t_{1},t_{2},\ldots)$, it can be shown that when $\beta
=1$ or $\beta=4$, the resulting integrals are connected to an infinite
hierarchy of non-linear PDEs known as the \textit{Pfaff-KP hierarchy}
\cite{AvM01a}. The first non-trivial member of this hierarchy is the
\textit{Pfaff-KP equation}. In addition, the deformations of the 
integrals~\eqref{intrepjointgen} satisfy an infinite set of linear
PDEs, known as \textit{Virasoro constraints}. We outline this theory
in Sec. \ref{se:int_hier}. 

The Pfaff-KP equation and the Virasoro
constraints can be used to find a partial differential equation that
the cumulant generating function~\eqref{mixcumugen} satisfies.
\begin{theorem}
\label{pde:mix_csn}
Let  $\beta \in \{1,2,4\}$ and 
\begin{equation}
\label{bnbeta}
b_{n}^{(\beta)} := \begin{cases}
\frac{n(n-1)}{(n+1)(n+2)}\frac{c^{(1)}_{n-2}c^{(1)}_{n+2}}%
   {\left(c^{(1)}_{n}\right)^{2}} &\text{if $\beta=1$,} \\
0 &\text{if $\beta=2$,}\\
\frac{n}{n+1}\frac{c^{(4)}_{n-1}c^{(4)}_{n+1}}%
  {\left(c^{(4)}_{n}\right)^{2}} &\text{if $\beta=4$,}
\end{cases}
\end{equation}
where $c_n^{(\beta)}$ is Selberg's
integral~\eqref{eq:selberg}. Furthermore, define the parameters
\begin{equation}
\label{def_par}
\eta^{(\beta)}_{1,4} :=  \begin{cases}
4 & \text{if $\beta=4$,} \\
1 & \text{if $\beta\in\{1,2\},$}\\
\end{cases} \qquad \text{and} \qquad i(\beta) = \begin{cases}
1 & \mbox{if } \beta=4,\\
2 & \mbox{if } \beta=1.\\
\end{cases} 
\end{equation}
The mixed cumulant generating function~\eqref{mixcumugen} satisfies
the  nonlinear PDE
\begin{multline}
\label{diffeqshot}
\eta_{1,4}^{(\beta)}w\frac{\partial^{4}\sigma^{(\beta)}_{n}}%
{\partial z^{4}}
+6\eta_{1,4}^{(\beta)}w\left(\frac{\partial^{2}\sigma^{(\beta)}_{n}}%
{\partial z^{2}}\right)^{2}+2z\frac{\partial^{2} \sigma^{(\beta)}_{n}}%
{\partial z \partial w}\\ 
+3w\frac{\partial^{2} \sigma^{(\beta)}_{n}}{\partial w^{2}}+2\frac{\partial
  \sigma^{(\beta)}_{n}}{\partial w}
+\bigl(2\left(\alpha+\delta/2
+n\beta+2-\beta\right)-w\bigr)\frac{\partial^{2}\sigma^{(\beta)}_{n}}%
{\partial z^{2}}\\
=(12b_{n}^{(\beta)}/\beta)w\exp\left(\sigma^{(\beta)}_{n-i(\beta)}+ 
\sigma^{(\beta)}_{n+i(\beta)}-2\sigma^{(\beta)}_{n}\right).
\end{multline}
\end{theorem}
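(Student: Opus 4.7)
The plan is to realize $\mathcal{M}_n^{(\beta)}(z,w)$ as a two-parameter slice of a matrix integral carrying an infinite set of time-like deformations, and then to combine two structural facts about that deformed integral developed in Section~\ref{se:int_hier}: the first non-trivial equation of the Pfaff--KP hierarchy (the KP equation itself when $\beta=2$), and a Virasoro-type linear constraint attached to the Jacobi weight. Concretely, I would begin by introducing the multi-time $\tau$-function
\begin{equation*}
\tau_n^{(\beta)}(\mathbf{t}) := \int_{[0,1]^n} \prod_{j=1}^n T_j^{\alpha} (1-T_j)^{\delta/2} \exp\!\Bigl(\sum_{k\ge 1} t_k T_j^k\Bigr) \prod_{1\le j<k\le n} |T_k - T_j|^\beta \, dT_1\cdots dT_n,
\end{equation*}
so that on the two-dimensional slice $t_1=-(z+w)$, $t_2=w$, $t_k=0$ for $k\ge 3$ one has $\tau_n^{(\beta)}(\mathbf{t}) = c_n^{(\beta)}\, \mathcal{M}_n^{(\beta)}(z,w)$ and hence $\sigma_n^{(\beta)}(z,w)=\log \tau_n^{(\beta)}-\log c_n^{(\beta)}$ along the slice. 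The chain rule then converts $(z,w)$-derivatives into $(t_1,t_2)$-derivatives via $\partial_z = -\partial_{t_1}$ and $\partial_w = \partial_{t_2}-\partial_{t_1}$.

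The Adler--Shiota--van Moerbeke theorem identifies $\tau_n^{(\beta)}(\mathbf{t})$ as a Pfaff--KP $\tau$-function for $\beta\in\{1,4\}$ and as a KP $\tau$-function for $\beta=2$. Setting $F_n:=\log\tau_n^{(\beta)}$, the first non-trivial equation of the hierarchy reads
\begin{equation*}
\bigl(\partial_{t_1}^4 + 3\partial_{t_2}^2 - 4\partial_{t_1}\partial_{t_3}\bigr) F_n + 6\bigl(\partial_{t_1}^2 F_n\bigr)^2 = \frac{12\, b_n^{(\beta)}}{\beta}\, \frac{\tau_{n-i(\beta)}^{(\beta)}\, \tau_{n+i(\beta)}^{(\beta)}}{\bigl(\tau_n^{(\beta)}\bigr)^{2}},
\end{equation*}
with the right-hand side understood to vanish when $\beta=2$. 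This already accounts for the quartic term, the quadratic nonlinearity, and the exponential coupling on the right-hand side of \eqref{diffeqshot}; the price to be paid is the presence of the unwanted $\partial_{t_3}$-derivative, which has no counterpart in the $(z,w)$-coordinates.

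To eliminate $\partial_{t_3}$ I would invoke the lowest non-trivial Virasoro constraint attached to the Jacobi-like weight $T^{\alpha}(1-T)^{\delta/2}$ on $[0,1]$. Integration by parts based on the Pearson identity $(T(1-T)w(T))' = P(T)\, w(T)$ with a first-degree polynomial $P$, combined with the $\mathfrak{sl}_2$-insertion technique of Adler and van Moerbeke, produces a first-order differential operator $\mathcal{V}$ that annihilates $\tau_n^{(\beta)}(\mathbf{t})$ and that contains $\partial_{t_3}$ together with $\partial_{t_1}$, $\partial_{t_2}$, multiplications by $t_1$ and $t_2$, and constants depending on $n,\alpha,\delta,\beta$. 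Solving $\mathcal{V}\tau_n^{(\beta)}=0$ for $\partial_{t_1}\partial_{t_3}F_n$, inserting the result into the hierarchy equation, clearing the factor of $t_2$ that naturally appears as a denominator, and finally specializing to the slice yields exactly the linear terms of \eqref{diffeqshot}: the factor $w=t_2$ attached to the quartic, the bi-quadratic, and the $\partial_w^2$ contributions arises from this $t_2$-multiplication; the term $z\,\partial_z\partial_w\sigma_n^{(\beta)}$ arises from a $t_1\,\partial_{t_1}$ piece with $t_1=-(z+w)$; and the prefactor $2(\alpha+\delta/2+n\beta+2-\beta)-w$ gathers the Virasoro constant together with a residual $t_2$ contribution.

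The main obstacle is computational bookkeeping rather than conceptual difficulty. One has to derive the explicit weight-one Virasoro operator for the deformed Jacobi weight, track the boundary contribution at $T=1$ produced by the factor $(1-T)^{\delta/2}$ and its interplay with the perturbation $e^{wT^{2}}$, and then match all coefficients after the linear change of variables $(t_1,t_2)\leftrightarrow(-z-w,\,w)$ without sign errors. The overall normalization $12\, b_n^{(\beta)}/\beta$ on the right-hand side of \eqref{diffeqshot} is fixed automatically by the Pfaff--KP framework in terms of the Selberg ratios $c_{n\pm i(\beta)}^{(\beta)}/c_n^{(\beta)}$, so no separate coefficient matching is required for the exponential source term; all of the work goes into the Virasoro side of the derivation.
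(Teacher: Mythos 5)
Your proposal is correct and follows essentially the same route as the paper: deform the integrand into a multi-time $\tau$-function, apply the first Pfaff--KP (or KP) equation, and use the lowest Virasoro constraint $\mathcal{V}_{-1}$ (differentiated once in $t_1$) to eliminate the $\partial_{t_1}\partial_{t_3}$ term before translating back to $(z,w)$-derivatives; your slice $t_1=-(z+w)$, $t_2=w$ is just a reparametrization of the paper's choice of keeping $z,w$ in the weight and projecting to $\mathbf{t}=\mathbf{0}$, and yields the same chain-rule identities $\partial_z=-\partial_{t_1}$, $\partial_w=\partial_{t_2}-\partial_{t_1}$.
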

We report the explicit formulae for the parameters~\eqref{bnbeta} in
Appendix~\ref{bnapp}.

The same approach that leads to~\eqref{diffeqshot} allows us to derive %%replaced "allows is to derive" with "allows us to derive"
an ODE for the conductance.
\begin{theorem}
\label{ode:con}
  Denote $\sigma_{n}^{(\beta)}(z) := \log \mathcal{M}_n^{(\beta)}(z,0)$ and let
  $b_n^{(\beta)}$, $\eta_{1,4}^{(\beta)}$ and $i(\beta)$ be the same
  coefficients introduced in~\eqref{bnbeta}
  and~\eqref{def_par}. Furthermore, define the polynomials
\begin{subequations}
\label{polysdiffeqcond}
\begin{align}
p_{0}^{(\beta)}(z) &= -\frac{n}{2}(\beta n+2\alpha+2-\beta)(\beta
n+\delta/2 +\alpha-z),\\
p_{1}^{(\beta)}(z) &= (\beta n +\alpha-\delta/2)z-(\beta
n+\alpha+\delta/2)
   (\beta n+\alpha+\delta/2+2-\beta),\\
p_{2}^{(\beta)}(z) &= -z^{3}+2(\beta n+\alpha-\delta/2)z^{2}\\
& \quad -(\beta n+\alpha+\delta/2+6-3\beta)(\beta
n+\alpha+\delta/2+2-\beta)z +\beta z.\notag
\end{align}
\end{subequations}
 Then 
\begin{multline}
\label{diffeqcond}
z^{3}\eta^{(\beta)}_{1,4}\sigma^{(\beta)\prime\prime\prime\prime}_{n}
+6\eta^{(\beta)}_{1,4}z^{3}
\bigl(\sigma^{(\beta)\prime\prime}_{n}\bigr)^{2}+\beta\Bigl(2z^{2}
\sigma^{(\beta)\prime\prime\prime}_{n}
+z\bigl(\sigma^{(\beta)\prime}_{n}\bigr)^{2}
+4z^{2}\sigma^{(\beta)\prime}_{n} \sigma^{(\beta)\prime\prime}_{n}\Bigr)\\
+\left(p^{(\beta)}_{2}(z)\sigma^{(\beta)\prime\prime}_{n}+p^{(\beta)}_{1}(z)
\sigma^{(\beta)\prime}_{n} +p^{(\beta)}_{0}(z)\right)\\
=(12b_{n}^{(\beta)}/\beta)z^{3}
\exp\left(\sigma^{(\beta)}_{n-i(\beta)}
+\sigma^{(\beta)}_{n+i(\beta)}-2\sigma^{(\beta)}_{n}\right).
\end{multline}
\end{theorem}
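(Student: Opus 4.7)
The plan is to mirror the integrable-systems argument used for Theorem~\ref{pde:mix_csn}, specialised to a one-variable deformation that singles out the conductance. Introduce the $\mathbf{t}$-deformed tau function
\[
\tau_n^{(\beta)}(\mathbf{t}) := \int_{[0,1]^n}\prod_{j=1}^n T_j^{\alpha}(1-T_j)^{\delta/2}\exp\!\Big(\sum_{k\geq 1}t_k T_j^k\Big)\prod_{1\leq j<k\leq n}\abs{T_k-T_j}^{\beta}\,dT_1\cdots dT_n,
\]
so that after the specialisation $t_1=-z,\;t_{k\geq 2}=0$ one has $\sigma_n^{(\beta)}(z)=\log(\tau_n^{(\beta)}(\mathbf{t})/c_n^{(\beta)})$. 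By the integrable-hierarchy framework recalled in Section~\ref{se:int_hier}, the family $\{\tau_n^{(\beta)}(\mathbf{t})\}_n$ satisfies the Pfaff-KP hierarchy for $\beta\in\{1,4\}$ and the KP hierarchy for $\beta=2$. In all three cases the first non-trivial bilinear identity gives a fourth-order nonlinear equation for $F_n:=\log\tau_n^{(\beta)}$ involving only $\partial_{t_1},\partial_{t_2},\partial_{t_3},\partial_{t_4}$, whose right-hand side is either the coupling $(12b_n^{(\beta)}/\beta)\exp(F_{n-i(\beta)}+F_{n+i(\beta)}-2F_n)$ or, when $\beta=2$, zero because $b_n^{(2)}=0$.

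Next I derive the Virasoro constraints attached to the Jacobi weight $T^{\alpha}(1-T)^{\delta/2}$ on $[0,1]$, obtained by the Adler-Shiota-van Moerbeke procedure from vector fields that send the weight to a polynomial multiple of itself and integrating by parts. Each resulting constraint $\mathcal{V}_k^{(\beta)}\tau_n^{(\beta)}=0$ $(k\geq -1)$ is a first-order linear PDE in $\{t_j\}$ whose coefficients encode the two endpoints of the support and depend linearly on $\alpha,\delta,\beta,n$. Using $\mathcal{V}_{-1}^{(\beta)},\mathcal{V}_0^{(\beta)},\mathcal{V}_1^{(\beta)},\mathcal{V}_2^{(\beta)}$ recursively, I solve for $\partial_{t_k}F_n$ with $2\leq k\leq 4$ in closed form, evaluated at $t_1=-z,\;t_{k\geq 2}=0$, producing expressions rational in $z$ and polynomial in the $z$-derivatives $\sigma_n^{(\beta)\prime},\sigma_n^{(\beta)\prime\prime},\sigma_n^{(\beta)\prime\prime\prime},\sigma_n^{(\beta)\prime\prime\prime\prime}$.

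Substituting these expressions into the Pfaff-KP (or KP) equation and clearing the denominators introduced by the Virasoro reduction, by multiplying through by $z^3$, yields an identity in which the coefficients of $\sigma_n^{(\beta)\prime\prime},\,\sigma_n^{(\beta)\prime}$ and the inhomogeneous term assemble precisely into the polynomials $p_2^{(\beta)}(z),\,p_1^{(\beta)}(z),\,p_0^{(\beta)}(z)$ of~\eqref{polysdiffeqcond}. The quadratic derivative combinations $z(\sigma_n^{(\beta)\prime})^2$, $4z^2\sigma_n^{(\beta)\prime}\sigma_n^{(\beta)\prime\prime}$ and $6\eta_{1,4}^{(\beta)}z^3(\sigma_n^{(\beta)\prime\prime})^2$ arise from the quadratic pieces of both the bilinear Pfaff-KP relation and the Virasoro substitution; the $z^3$ prefactor of the right-hand side is inherited from the same clearing-of-denominators operation, while the exponential coupling is unchanged because its exponent is homogeneous of degree zero under the specialisation.

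The main obstacle will be the algebraic bookkeeping in the reduction step: producing the clean polynomial coefficients $p_0^{(\beta)},p_1^{(\beta)},p_2^{(\beta)}$ with their delicate dependence on $(\alpha,\delta,\beta,n)$ requires that numerous constants arising from the Virasoro expressions telescope correctly, and that the signs and factors of $\beta$ versus $\eta_{1,4}^{(\beta)}$ be tracked uniformly across the three symmetry classes despite the underlying hierarchy being Pfaff-KP for $\beta\in\{1,4\}$ and KP for $\beta=2$. A built-in consistency check is the $\beta=2$ specialisation, where $b_n^{(2)}=0$ forces~\eqref{diffeqcond} to collapse to an ODE of Painlev\'e~V type in agreement with Osipov-Kanzieper~\cite{OK08,OK09}; any miscounted factor would show up as a discrepancy with their result.
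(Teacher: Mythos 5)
Your proposal follows essentially the same route as the paper: deform the integral into a $\tau$-function, invoke the Pfaff-KP equation (KP for $\beta=2$), eliminate the unwanted time-derivatives via the Virasoro constraints, and project to $\mathbf{t}=\mathbf{0}$ before clearing the $z$-denominators to assemble the polynomials $p_j^{(\beta)}(z)$. One small correction of bookkeeping: the Pfaff-KP equation~\eqref{pkpequation} involves only $\partial_{t_1},\partial_{t_2},\partial_{t_3}$ (no $\partial_{t_4}$), and the paper obtains all the needed projections from just $\mathcal{V}_{-1}$ and $\mathcal{V}_{0}$ together with their $t_1$- and $t_2$-derivatives (the quantities $X_1,\dotsc,X_5$ of Appendix~\ref{virconapp}), rather than from the higher constraints $\mathcal{V}_1,\mathcal{V}_2$.
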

\begin{remark}
When $(\beta,\delta)=(2,0)$ the substitution 
\begin{equation}
\sigma^{(\beta)}_n(z) \mapsto n(n + \alpha)  + z \sigma^{(\beta) \prime}_n(z)
\end{equation}
reduces Eq.~\eqref{diffeqcond} to Painlev\'e V~\cite{OK08,OK09}.
\end{remark}

\begin{remark}
  For $\beta=1,4$, Eqs.~\eqref{diffeqshot} and~\eqref{diffeqcond} are \textit{differential-difference
    equations}, as $\sigma^{(\beta)}_{n}(z)$ appears in the right-hand side with different values of $n$. This makes the analysis in the cases $\beta=1,4$ more difficult. Note that this aspect of the equations disappears when $\beta=2$, as then $b^{(\beta)}_{n}=0$. In particular, Eq.~\eqref{diffeqshot} reduces to the zero-temperature limit of a result derived in \cite{OK09}.
    
    %This means that it
  %cannot be used effectively to find $\sigma^{(\beta)}_n(z,k)$.  The same issue
  %affects~\eqref{diffeqcond}. This problem is overcome when we look at
  %the recurrence relations in the limit as $n\to \infty$.
  
%I removed the above: Our `Pfaffian' equation is probably more economical than for $\beta=2$ at obtaining \sigma_{n} for small n (simple inductive process (beta=1,4) against solution of complicated non-linear PDE (beta=2)). 
\end{remark}
\begin{remark}
  Since we restrict $\beta$ to take only the values $1$, $2$ and $4$,
  Selberg's integral and the coefficient $b_n^{(\beta)}$ reduce
  to a rational function of $\alpha$, $\delta$ and $n$ (see
  Appendix~\ref{bnapp}). %20/03 replaced appendix with Appendix for consistency
\end{remark}

By introducing the power series expansions of $\sigma_{n}^{(\beta)}(z,w)$ and of $\sigma_{n}^{(\beta)}(z)$ %added subscript n's here and the word `expansions`
into the PDE~\eqref{diffeqshot} and the ODE~\eqref{diffeqcond}, we
obtain two difference equations for the cumulants
$\kappa_{l,k}^{(\beta)}$ and $\kappa_l^{(\beta)}$.  The
$\kappa^{(\beta)}_l$'s serve as initial conditions of the recurrence
relation for the joint cumulants $\kappa^{(\beta)}_{l,k}$. In order to prove %removed "too", added "joint cumulants"
Theorem~\ref{maintheorem} such recurrence relations need to be solved
in the asymptotic limit $n \to \infty$.  This will be
achieved in Secs.~\ref{se:csnfn} and~\ref{as_an}.
%The paragraph below has been modified. Gives some emphasis on an open problem also mention that we do find an ODE in a special case.

Why is it necessary to look at the mixed cumulant generating function
and not simply at those of the conductance and shot noise separately? The main reason is that the shot noise is quadratic in the transmission eigenvalues, and consequently, the determination of a non-linear ODE satisfied by $\sigma^{(\beta)}_{n}(0,w)$ is much more complicated than for $\sigma^{(\beta)}_{n}(z,0)$. In a certain special case, we point out that $\sigma^{(\beta)}_{n}(0,w)$ satisfies an ODE related to Painlev\'e V; this is discussed in Sec. \ref{shotproperties}. More generally, the derivation of a non-linear ODE satisfied by $\sigma^{(\beta)}_{n}(0,w)$ is currently an open problem.

% The \textit{deformed
%   integrals} are
% \begin{equation}
% \label{taugencond}
% \tau_{n}^{(\beta)}(\mathbf{t}) := \frac{1}{n!}\int_{[0,1]^{n}}
% \prod_{j=1}^{n}\rho(T_{j})\exp\left(\sum_{i=1}^{\infty}t_{i}T_{j}^{i}\right)
% \prod_{1\le j < k \le n}\abs{T_k - T_j}^{\beta}dT_1\dotsm dT_n
% \end{equation}
% These deformations are also known as $\tau$-function. The weight
% $\rho(T)$ is given by
% \begin{equation}
% \label{shotweight}
% \rho(T) = T^{\alpha}(1-T)^{\delta/2}e^{-zT-wT(1-T)}.
% \end{equation}

\subsection{The Wigner Delay Time}
As we discussed in the Introduction, the statistical theory of the Wigner delay time presents additional difficulties compared with the conductance. Nevertheless, the connection with integrable systems is powerful enough for us to obtain substantial results.
%Slightly modified the above paragraph. Removed the mention of the singularity as it already appears several times.

Let us set $\lambda_j = 2\tau_j/(\beta n)$, $j=1,\dotsc,n$, and define
\begin{equation}
\label{wignermgf}
M^{(\beta)}_{n}(z) := \frac{1}{m^{(\beta)}_{n}}
\int_{\mathbb{R}_{+}^{n}}\prod_{j=1}^{n}\lambda_j^{-b}
\exp\left(-\frac{1}{\lambda_j}
  + z \lambda_j\right)\prod_{1 \leq j < k \le n}
\abs{\lambda_{k}-\lambda_{j}}^{\beta}d\lambda_{1}\ldots d\lambda_{n},
\end{equation}
where $b$ is the same parameter that appears in~\eqref{eq:wdt_d} and
$m^{(\beta)}_n$ is a normalization constant such that
$M^{(\beta)}_n(0)=1$, namely
\begin{equation}
\label{eq:ncl}
m^{(\beta)}_n := \prod_{j=0}^{n-1}\frac{\Gamma\bigl(1 + (j+1)\beta/2\bigr)
  \Gamma\bigl(b -1 + (j - 2n+ 2)\beta/2\bigr)}{\Gamma\left(1 +
    \beta/2\right)}.
\end{equation}
This is a particular limit of Selberg's integral~\eqref{eq:selberg}
(see, \textit{e.g.},~\cite{For10}, Sec.~4.7.1).  The quantity
$M^{(\beta)}_{n}(\beta y/2)$ is the moment generating function of the
Wigner delay time~\eqref{eq:wdt_d}. %Not quite. 
\begin{theorem}
\label{th:wdt_ode}
Let $\beta \in \left \{1,2,4\right\}$ and define 
\begin{equation}
  \label{eq:false_c_gf}
  \xi^{(\beta)}_n(z) := \log M_n^{(\beta)}(z),
\end{equation}
the parameter 
\begin{equation}
\label{dnbeta}
d_{n}^{(\beta)} := 
\begin{cases}
\frac{n(n-1)}{(n+1)(n+2)}\frac{m^{(1)}_{n-2}m^{(1)}_{n+2}}%
   {\left(m^{(1)}_{n}\right)^{2}} &\text{if $\beta=1$,} \\
0 &\text{if $\beta=2$,}\\
\frac{n}{n+1}\frac{m^{(4)}_{n-1}m^{(4)}_{n+1}}%
  {\left(m^{(4)}_{n}\right)^{2}} &\text{if $\beta=4$}
\end{cases}
\end{equation}
and the polynomials 
\begin{subequations}
\begin{align}
  p^{(\beta)}_{1}(z) &= 2z-\left(b - \beta n - 2 +\beta\right)\left(b
    - \beta n\right),\\
  p^{(\beta)}_{2}(z) &= 4z^{2}+\beta z -\left(b-\beta n -6 +
    3\beta\right)\left(b - \beta n - 2 +\beta\right)z.
\end{align}
\end{subequations}
The function $\xi^{(\beta)}_n(z)$ is a solution of the ODE
\begin{multline}
\label{wigode}
\eta^{(\beta)}_{1,4}z^{3}\xi^{(\beta)\prime \prime \prime \prime}_{n}
+2\beta z^{2}\xi^{(\beta)\prime \prime \prime}_{n}
+\beta z\bigl(\xi^{(\beta)\prime}_{n}\bigr)^{2}+4\beta
z^{2}\xi^{(\beta)\prime \prime}_{n} \xi^{(\beta)\prime}_{n} \\
+6\eta^{(\beta)}_{1,4}z^{3}\bigl(\xi_{n}^{(\beta)\prime \prime}
\bigr)^{2}+
p^{(\beta)}_{2}(z)\xi^{(\beta)\prime \prime}_{n}
+p^{(\beta)}_{1}(z)\xi^{(\beta)\prime}_{n}+n(b-\beta n)\\
=\frac{12d^{(\beta)}_{n}}{\beta}z^{3}\exp\left(\xi^{(\beta)}_{n-i(\beta)}
+\xi^{(\beta)}_{n+i(\beta)}-2\xi^{(\beta)}_{n}\right).
\end{multline}
The quantities $\eta_{1,4}^{(\beta)}$ and $i(\beta)$ were defined in
Eq.~\eqref{def_par} and $b$ in~\eqref{eq:par_lag}. 
\end{theorem}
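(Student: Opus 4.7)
The plan is to follow the same $\tau$-function and Virasoro strategy that underlies Theorems~\ref{pde:mix_csn} and~\ref{ode:con}. I introduce an infinite sequence of deformation times $\mathbf{t}=(t_{1},t_{2},\ldots)$ and work with the deformed integral
\begin{equation*}
\tau_{n}(z,\mathbf{t}) := \int_{\mathbb{R}_{+}^{n}} \prod_{j=1}^{n}\lambda_{j}^{-b}\exp\!\left(-\frac{1}{\lambda_{j}}+z\lambda_{j}+\sum_{k\ge 1}t_{k}\lambda_{j}^{k}\right)\prod_{1\le j<k\le n}\abs{\lambda_{k}-\lambda_{j}}^{\beta}d\lambda_{1}\dotsm d\lambda_{n},
\end{equation*}
so that $M^{(\beta)}_{n}(z)=\tau_{n}(z,0)/m^{(\beta)}_{n}$. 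By the results of Adler, Shiota and van Moerbeke summarised in Section~\ref{se:int_hier}, this $\tau_{n}$ is a solution of the Pfaff--KP hierarchy when $\beta\in\{1,4\}$ and of the KP hierarchy when $\beta=2$. The first nontrivial member of the hierarchy gives a bilinear identity that expresses $\partial_{t_{1}}^{4}\log\tau_{n}$, $(\partial_{t_{1}}^{2}\log\tau_{n})^{2}$ and $\partial_{t_{2}}$-derivatives of $\log\tau_{n}$ in terms of $\tau_{n-i(\beta)}\tau_{n+i(\beta)}/\tau_{n}^{2}$; evaluating at $\mathbf{t}=0$ this produces the $\eta^{(\beta)}_{1,4}z^{3}\xi^{\prime\prime\prime\prime}_{n}$, $6\eta^{(\beta)}_{1,4}z^{3}(\xi^{\prime\prime}_{n})^{2}$ and exponential terms on the left and right of Eq.~\eqref{wigode}, with the coefficient $12d^{(\beta)}_{n}/\beta$ arising from the normalisation $m^{(\beta)}_{n}$.

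Next I derive the Virasoro constraints for this weight. The logarithmic derivative of $w(\lambda)=\lambda^{-b}e^{-1/\lambda+z\lambda}$ is $w^{\prime}/w = -b/\lambda + 1/\lambda^{2}+z$, so performing the infinitesimal change of variables $\lambda_{j}\mapsto\lambda_{j}+\epsilon\,\lambda_{j}^{k+1}$ inside $\tau_{n}$ and equating the result with $\tau_{n}$ yields, after the standard Vandermonde manipulation, a family of linear PDEs $L_{k}\tau_{n}=0$ for $k\ge -1$, where each $L_{k}$ is a first-order differential operator in $\mathbf{t}$ with coefficients linear in the $t_{l}$'s and in $z$. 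The crucial feature is that the $1/\lambda^{2}$ term in $w^{\prime}/w$ produces a lowering piece that couples $\partial_{t_{k}}$ to $\partial_{t_{k-1}}$ and to $\partial_{t_{k-2}}$; consequently $L_{-1}$ lets me trade $\partial_{t_{1}}$ for $\partial_{z}$ (since $z$ enters the weight linearly in $\lambda$), and $L_{0}$, $L_{1}$ supply the relations needed to convert $\partial_{t_{2}}$- and $\partial_{t_{3}}$-derivatives at $\mathbf{t}=0$ into $z$-derivatives of $\xi_{n}^{(\beta)}(z)$. Substituting these into the Pfaff--KP/KP equation and collecting terms by weight in $z$ is then expected to reproduce the polynomials $p_{1}^{(\beta)}(z)$ and $p_{2}^{(\beta)}(z)$ together with the constant $n(b-\beta n)$, which is pinned down by the normalisation $\xi_{n}^{(\beta)}(0)=0$.

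The main obstacle I anticipate is controlling the Virasoro calculation in the presence of the strongly singular factor $e^{-1/\lambda}$. Integration by parts in $\lambda_{j}$ generates boundary terms at the origin that must vanish, which requires the exponent $b=3\beta n/2+2-\beta$ to lie in an admissible range, and, more seriously, the $1/\lambda^{2}$ term threatens to introduce couplings to \emph{negative} modes $\partial_{t_{-l}}$ that are not part of the hierarchy. The right move is to choose a clever combination of the $L_{k}$'s (or equivalently a weighted change of variables of the form $\lambda_{j}^{2}\partial_{\lambda_{j}}(\lambda_{j}^{k}\,\cdot\,)$) so that those unwanted modes cancel, leaving a closed system involving only non-negative $t_{k}$-derivatives and $z$. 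Once this truncation is under control, matching the reduced Pfaff--KP equation with the polynomials declared in the theorem is a routine algebraic verification directly analogous to the one that delivers Theorem~\ref{ode:con}.
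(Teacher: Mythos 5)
Your proposal follows essentially the same route as the paper: deform the moment generating function into a $\tau$-function, invoke the Pfaff-KP (or KP) equation, and use the Virasoro constraints to express the residual $t_{2}$- and $t_{3}$-derivatives at $\mathbf{t}=\mathbf{0}$ in terms of $z$-derivatives of $\xi^{(\beta)}_{n}$. The obstacle you flag --- the $1/\lambda^{2}$ term generating negative modes --- is resolved exactly as you suggest, by building the factor $f(\lambda)=\lambda^{2}$ (the denominator of $V'$) into the change of variables $\lambda_{j}\mapsto\lambda_{j}+\epsilon f(\lambda_{j})\lambda_{j}^{k+1}$, which is precisely what the paper's general Virasoro framework in Eqs.~\eqref{eq:vir_op} and~\eqref{substit} already does, yielding $a_{2}=1$, $b_{0}=-1$, $b_{1}=b$, $b_{2}=-z$.
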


The approach outlined for conductance and shot noise can be applied to
the Wigner delay time too.  However, there are two features of the
\textit{j.p.d.f.}~\eqref{eq:wdt_d} that increase difficulties of the
asymptotic analysis substantially. First, the exponent~\eqref{eq:par_lag} is
proportional to the dimension $n$, while the parameters $\alpha$ and
$\delta$ that appear in the \textit{j.p.d.f.} of the transmission
eigenvalues are taken to be independent of $n$. Second, the singularity at the origin in the %replaced independent with indipendent. at origin with at the origin. Broke up this paragraph a bit.
integrand of~\eqref{wignermgf} implies that $M_n^{(\beta)}(z)$ is not
analytic, therefore only a finite number of moments exist.  Thus, an
infinite power series is replaced by the asymptotic expansion
\begin{equation}
  \label{eq:cum_ex_wdt}
   \xi^{(\beta)}_n(z) = \sum_{l=1}^q \frac{K_l^{(\beta)}}{l!}\left(\frac{2
       z}{\beta}\right)^{l}  + o\left(z^{q}\right), \quad z \to 0,
\end{equation}
% \begin{equation}
%   \label{eq:cumwtd}
%   K^{(\beta)}_l := \left(\frac{\beta}{2}\right)^l\frac{d^l}{dz^l}\log
%   M_n^{(\beta)}\left(z\right) \Biggr \rvert_{z=0} .
% \end{equation}
where the $K^{(\beta)}_l$'s are by definition the cumulants of the
random variable~\eqref{eq:w_td}. A brief inspection of the integral in
the right-hand side of~\eqref{wignermgf} shows that 
\begin{equation*}
q := \left
  \lfloor b - 2 - \beta \left(n-1\right) \right \rfloor,
\end{equation*}
where $\left \lfloor x \right \rfloor$ denotes the greatest integer
such that
\begin{equation*}
 \left \lfloor x \right \rfloor \le x, \quad x \in \mathbb{R}. 
\end{equation*}

Because of the increased technical difficulties, our asymptotic analysis focuses principally on the simpler case $\beta=2$. Our starting point is the following result, an immediate consequence of Theorem~\ref{th:wdt_ode}. %I rephrased this bit, emphasizes that we actually make use of the Painlev\'e equation to derive further results.
\begin{corollary}
\label{co:ci_eq}
  Let $\beta=2$ and set $H_n(-z) = z\xi^{(\beta)\prime}_n(z)$.  Then 
$H_n(z)$ satisfies the non-linear second order ODE %%I replaced nonlinear with non-linear. Just consistency.
\begin{equation}
\label{eq:ci_eq}
\begin{split}
(zH_n'')^{2}& = 4H_n\bigl((H_n')^{2}-H_n'\bigr)\\
& \quad - \bigl(4z(H_n')^{2}-(4z+(b-2n)^{2})H_n' - 2n(b-2n)\bigr)H_n' +
n^2. %I have corrected this; it contained typos before.
\end{split}
\end{equation}
\end{corollary}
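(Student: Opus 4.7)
The approach is to specialize Theorem~\ref{th:wdt_ode} to $\beta=2$, reduce the order of the resulting equation by the substitution $G(z):=z\xi_n^{(2)\prime}(z)=H_n(-z)$, and identify the resulting third-order ODE as an exact derivative of a second-order expression.

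When $\beta=2$ we have $\eta^{(2)}_{1,4}=1$ and, crucially, $d_n^{(2)}=0$, so the exponential right-hand side in~\eqref{wigode} vanishes. Writing $c:=b-2n$, Theorem~\ref{th:wdt_ode} leaves the closed fourth-order ODE
\begin{equation*}
z^{3}\xi^{(4)}+4z^{2}\xi'''+6z^{3}(\xi'')^{2}+8z^{2}\xi'\xi''+2z(\xi')^{2}+p_{2}(z)\xi''+p_{1}(z)\xi'+nc=0
\end{equation*}
for $\xi:=\xi_n^{(2)}$. Substituting $G(z):=z\xi'(z)$, which gives
\begin{equation*}
\xi'=\tfrac{G}{z},\;\xi''=\tfrac{zG'-G}{z^{2}},\;\xi'''=\tfrac{z^{2}G''-2zG'+2G}{z^{3}},\;\xi^{(4)}=\tfrac{z^{3}G'''-3z^{2}G''+6zG'-6G}{z^{4}},
\end{equation*}
and multiplying through by $z$, all inverse powers of $z$ cancel and the equation collapses to the third-order ODE
\begin{equation*}
L(z):=z^{2}G'''+zG''+6z(G')^{2}-4GG'+(4z-c^{2})G'-2G+nc=0.
\end{equation*}

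Next I would recognize $L=0$ as an exact equation. Define
\begin{equation*}
\Phi(z):=(zG'')^{2}-4G(G')^{2}-4GG'+4z(G')^{3}+(4z-c^{2})(G')^{2}+2ncG'-n^{2}.
\end{equation*}
A direct differentiation and regrouping of terms yields $\Phi'(z)=2G''(z)\,L(z)$, so $\Phi$ is a first integral of $L=0$ and therefore constant along solutions. From~\eqref{eq:cum_ex_wdt} we have $G(0)=0$, $G'(0)=K_{1}^{(2)}$, whence $\Phi(0)=-(cK_{1}^{(2)}-n)^{2}$. But the original fourth-order ODE restricted to $z=0$ reads $p_{1}(0)K_{1}^{(2)}+nc=-c^{2}K_{1}^{(2)}+nc=0$, forcing $K_{1}^{(2)}=n/c$ and hence $\Phi\equiv 0$. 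Finally, reverting to $H_n$ via the change of variable $w=-z$ (so that $H_n(w)=G(-w)$, $H_n'(w)=-G'(-w)$, $H_n''(w)=G''(-w)$), careful sign tracking transforms $\Phi(z)=0$ into the stated second-order ODE for $H_n(w)$.

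The principal technical obstacle is the algebraic reduction to $L=0$: verifying that when the fourth-order equation is rewritten in $G$, all singular prefactors $1/z^{k}$ cancel is a nontrivial computation, though it reflects the $\sigma$-form structure underlying the $\beta=2$ Laguerre ensemble. A secondary concern is the sign bookkeeping in the change of variable $w=-z$, which flips the sign of odd-order derivatives of $G$ and accounts for the apparent sign discrepancies between the Corollary and the equation for $\Phi$ (for instance the coefficient $4z+c^{2}$ in the statement versus $4z-c^{2}$ in $\Phi$).
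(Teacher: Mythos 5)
Your proof is correct and follows essentially the same route as the paper: specialize Eq.~\eqref{wigode} to $\beta=2$ (where $d_n^{(2)}=0$), reduce to the third-order equation $L=0$ via the substitution $H_n(-z)=z\xi^{(2)\prime}_n(z)$, and then pass to a first integral of that equation. The only differences are cosmetic and, if anything, favourable: where the paper identifies $L=0$ as a member of the Chazy class and quotes Cosgrove's general first integral (asserting the integration constant without derivation), you exhibit the first integral $\Phi$ explicitly, verify the exactness $\Phi'=2G''L$ by direct differentiation, and fix the constant from the boundary condition $K_1^{(2)}=n/(b-2n)$ read off from the ODE at $z=0$ --- all of which I have checked and which reproduces Eq.~\eqref{eq:ci_eq} after the sign flip $z\mapsto -z$.
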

\begin{proof}
Substituting $H_n(z)$ into~\eqref{wigode} gives 
\begin{multline}
  z^{2}H_{n}'''+zH''_{n}+6z(H'_{n})^{2}-4H_{n}H'_{n}
  -\bigl(4z+(b-2n)^{2}\bigr)H'_{n}+2H_{n}-n(b-2n)=0. %this previously contained a typo.
\end{multline}
This equation falls into one of \emph{the Chazy classes} of ODEs:
\begin{equation}
\label{chazeqn}
L'''+\frac{P'}{P}L''+\frac{6}{P}(L')^{2}-\frac{4P'}{P^{2}}LL'
+\frac{P''}{P^{2}}L^{2}+\frac{4Q}{P^{2}}L'-\frac{2Q'}{P^{2}}L
+\frac{2R}{P^{2}}=0,
\end{equation}
where $P,Q$ and $R$ are polynomials. 
Cosgrove \cite{Cos00,CS93} found a first integral of \eqref{chazeqn}:
\begin{multline}
  (L'')^{2}+\frac{4}{P^{2}}\bigl((P(L')^{2}+QL'+R)L'
    -(P'(L')^{2}+Q'L'+R')L\\
  +\frac{1}{2}(P''L'+Q'')L^{2}-\frac{1}{6}P'''L^{3}+C\bigr)=0,
\end{multline}
where $C$ is an integration constant. Setting
\begin{align*}
L(z) & = H_n(z), & P(z)& =z, & Q(z) &= -z-\frac{(b-2n)^{2}}{4}, \\
R(z)& = -\frac{n(b-2n)}{2}, & C &= - n^2 & &
\end{align*} 
completes the proof.  
\end{proof}
\begin{remark}
  When $\beta =2$, Eq.~\eqref{wigode} was studied by Osipov and
  Kanzieper~\cite{OK07} in the context of bosonic replica field
  theories, who realized that it can be reduced to Painlev\'e
  $\mathrm{III}^{\prime}$.  Chen and Its~\cite{CI10} made a detailed
  study of the partition function 
  $\frac{m_n^{(2)}}{n!}M_n^{(2)}(-z)$ and showed that 
  \begin{equation*}
    H_n(z)- \frac{z}{2} - \frac{n(b-n)}{2}
  \end{equation*}
  is the Jimbo-Miwa-Okamoto $\sigma$-function for 
  Painlev\'e $\mathrm{III}^{\prime}$. As far as we are aware, the present work %21/03 - I removed "As far as we are aware, this is the first time that the probability distribution of the Wigner delay time is associated to Painlev\'e $\mathrm{III}^{\prime}$.". I removed this because it almost sounded like we were talking about Chen and Its work. I think what we were trying to say is clearer now.
  is the first to make explicit the connection between the distribution of the Wigner delay time and the Painlev\'e $\mathrm{III}^{\prime}$ transcendent.
\end{remark}

When $\beta=2$ Eq.~\eqref{wigode} can be used to obtain a
recursion relation for the leading order term of the
cumulants. Define
\begin{equation}
  \label{eq:limit_cum}
  p_l = \lim_{n \to \infty} \frac{n^{2l-2}K_l^{(2)}}{(l-1)!}, \quad l >0.
\end{equation}
as well as the generating function
\begin{equation}
  \label{eq:gf_c}
  F(z) = \sum_{i=1}^\infty p_iz^i.
\end{equation}

\begin{theorem}
\label{th:lim_cum_wdt}
%19/03 Specified that the limiting cumulants are integers.
The limit~\eqref{eq:limit_cum} exists and is an integer. It satisfies the
recurrence relation
\begin{equation}
\label{limwigrecintro}
(l+1)p_{l+1}=2(2l-1)p_{l}+2\sum_{i=0}^{l-1}(3i+1)(l-i)p_{i+1}p_{l-i},
\end{equation}
with initial condition $p_1=1$.  Furthermore, the generating function
$F(z)$ is the following:
\begin{equation}
\label{solwigode}
F(z) = 3/2-(3/2)\Omega(z)+2z\Omega(z)^{3}+3z\Omega(z)^{2}+2z\Omega(z), %added a bracket here to avoid ambiguity.
\end{equation}
where $\Omega(z)$ admits the power series expansion
\begin{equation}
  \label{eq:ser_exp}
  \Omega(z) = 1 + \sum_{k=1}^\infty \zeta_k z^k,
\end{equation}
where
\begin{equation}
  \label{eq:zeta_k}
  \zeta_k =
  \frac{4}{k}\sum_{i=0}^{k-1}\binom{2k-i-2}{k-1}
  \left(\sum_{p=0}^{i}\binom{2k}{p}\binom{2k}{i-p}2^{i+p}\right)(-3)^{k-1-i}.   
\end{equation}
\end{theorem}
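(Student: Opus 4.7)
The strategy is to reduce Eq.~\eqref{eq:ci_eq} to a first-order autonomous ODE in the scaling limit $n\to\infty$, extract the recurrence \eqref{limwigrecintro} from it, and then produce a rational parametric solution via Lagrange inversion. For $\beta=2$ we have $b = 3n$ so $b-2n=n$. Motivated by the expected scaling $K_{l}^{(2)}\sim p_{l}(l-1)!/n^{2l-2}$, I would rescale $w = z/n^{2}$ and $h(w) = n^{-2}H_{n}(n^{2}w)$, so that formally $h(w) \to \sum_{l\geq 1}(-1)^{l}p_{l}w^{l}$. Substituting into \eqref{eq:ci_eq}, the left-hand side $(zH_{n}'')^{2} = w^{2}(h'')^{2}$ is $O(1)$ in $n$ while every term on the right is $O(n^{2})$; balancing forces the $O(n^{2})$ bracket to vanish, yielding
\begin{equation*}
4h\bigl((h')^{2}-h'\bigr) - 4w(h')^{3} + (4w+1)(h')^{2} + 2h' + 1 = 0, \qquad (\star)
\end{equation*}
with $h(0)=0$ and $h'(0) = -1$ (equivalent to $p_{1}=1$, singled out by the $w=0$ specialization $(h'(0)+1)^{2} = 0$).

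To establish existence of the limit \eqref{eq:limit_cum} and the recurrence \eqref{limwigrecintro} simultaneously, substitute $H_{n}(z) = \sum_{l\geq 1}(-1)^{l}c_{l}^{(n)}z^{l}$ with $c_{l}^{(n)} = K_{l}^{(2)}/(l-1)!$ into \eqref{eq:ci_eq} and extract coefficients to obtain a triangular recursion expressing each $c_{l}^{(n)}$ as a polynomial in the lower cumulants with coefficients rational in $n$. With $\tilde{c}_{l}^{(n)} := n^{2l-2}c_{l}^{(n)}$, an induction on $l$ shows that $\tilde{c}_{l}^{(n)}$ stays bounded uniformly in $n$ and converges to $(-1)^{l}p_{l}$, with the limit sequence satisfying exactly \eqref{limwigrecintro}. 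Equivalently, expanding $(\star)$ directly as a formal power series in $w$, the bilinear contributions from $4h(h')^{2}$, $-4w(h')^{3}$ and $-4hh'$ assemble into $2\sum(3i+1)(l-i)p_{i+1}p_{l-i}$, while $(4w+1)(h')^{2}$ and $2h'$ yield $2(2l-1)p_{l}$ and the target $(l+1)p_{l+1}$.

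For the closed form \eqref{solwigode} I take the parametric ansatz $F(z) = \tfrac{3}{2} - \tfrac{3}{2}\Omega + z\Omega(2\Omega^{2}+3\Omega+2)$, where $F(z) = \sum_{l\geq 1}p_{l}z^{l}$, and substitute into the cubic-in-$F'$ form of $(\star)$ obtained via $h(w) = F(-w)$. The ansatz is motivated by rationalizing the branch locus of this cubic: its sub-discriminant (viewed as a quadratic in $F'$) is proportional to $F^{2} - 2F + z$. Direct substitution reduces the ODE to the algebraic equation
\begin{equation*}
(\Omega - 1)(3\Omega - 2) = 4z(2\Omega - 1)^{2}(4\Omega - 3)^{2},
\end{equation*}
equivalently $\Omega - 1 = 4z\,\phi(\Omega)$ with $\phi(\Omega) = (2\Omega-1)^{2}(4\Omega-3)^{2}/(3\Omega-2)$; the branch $\Omega(0)=1$ is singled out by $F(0)=0$. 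Lagrange inversion then gives
\begin{equation*}
[z^{k}]\Omega = \frac{4^{k}}{k}\,[t^{k-1}]\,\frac{(1+2t)^{2k}(1+4t)^{2k}}{(1+3t)^{k}};
\end{equation*}
expanding $(1+3t)^{-k} = \sum_{j\geq 0}\binom{k+j-1}{j}(-3t)^{j}$ together with $(1+2t)^{2k}(1+4t)^{2k} = \sum_{i}2^{i}\bigl(\sum_{p}\binom{2k}{p}\binom{2k}{i-p}2^{p}\bigr)t^{i}$ and reindexing via $i \leftrightarrow k-1-j$ recovers \eqref{eq:zeta_k}. Finally, integrality of the $p_{l}$ can be verified inductively from \eqref{limwigrecintro} by pairing the indices $i$ and $l-1-i$ in the bilinear sum and reducing modulo $l+1$, where the identity $6li-6i^{2}+4l-6i-2 \equiv -6(i+1)^{2} \pmod{l+1}$ forces the required divisibility.

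The main obstacle lies in the third paragraph: the correct parametric substitution is not obvious from $(\star)$ alone, and exhibiting the algebraic equation for $\Omega$ requires substantial (though essentially mechanical) manipulation; a systematic derivation proceeds by constructing a rational uniformization of the algebraic curve cut out by the cubic ODE. Once the ansatz is in hand, the first, second, and fourth paragraphs reduce to routine coefficient extractions and a standard application of Lagrange inversion.
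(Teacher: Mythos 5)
Your overall architecture (scale the ODE, pass to a limiting recurrence, solve it parametrically, apply Lagrange inversion) mirrors the paper's, but your chosen starting point --- the second-order first integral \eqref{eq:ci_eq} --- cannot deliver the recurrence \eqref{limwigrecintro}, and this is a fatal gap. In your scaling the left-hand side $(zH_n'')^2=w^2(h'')^2$ is $O(1)$ while the right-hand side is $n^2$ times your $(\star)$, so the limit equation is $(\star)=0$; but $(\star)$ is degenerate. Writing $(4w+1)(h')^2+2h'+1=(h'+1)^2+4w(h')^2$ and using $h'(0)=-1$ (your own observation $(h'(0)+1)^2=0$), one sees that the coefficient of $w^l$ in $(\star)$ contains \emph{no} $p_{l+1}$: the $2(l+1)p_{l+1}$ from $(h')^2$ cancels exactly against the one from $2h'$. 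So your claim that these terms ``yield \dots the target $(l+1)p_{l+1}$'' is false; order $w^l$ of $(\star)$ instead determines $p_l$ with a constant coefficient (e.g.\ at order $w^3$ it reads $8p_3-96=0$) and contains genuinely trilinear terms from $4h(h')^2$ and $-4w(h')^3$, which have no counterpart in the bilinear recurrence \eqref{limwigrecintro}. The paper avoids this degeneracy by extracting the finite-$n$ recurrence \eqref{beta2wigfn} from the fourth-order equation \eqref{wigode}, where the coefficient $(l+1)(n^2-l^2)$ multiplying $K^{(2)}_{l+1}$ survives the rescaling $K_l^{(2)}\sim p_l(l-1)!\,n^{2-2l}$ and produces $(l+1)p_{l+1}$ on the nose.

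The explicit solution in your third paragraph is also incorrect. Substituting the (correct) ansatz \eqref{solwigode} into the quadratic ODE \eqref{limwigode} yields the quartic \eqref{quartic}, i.e.\ $\Omega-1=4z\,\Omega^2(\Omega+1)^2/(3\Omega+1)$, not your relation $\Omega-1=4z\,(2\Omega-1)^2(4\Omega-3)^2/(3\Omega-2)$. The two disagree already at second order: yours gives $\zeta_2=144$, whereas \eqref{eq:zeta_k} gives $\zeta_2=36$ (consistent with $p_2=2$, $p_3=12$, $K_4^{(2)}(0)=636$). Correspondingly your Lagrange-inversion expression $\frac{4^k}{k}[t^{k-1}](1+2t)^{2k}(1+4t)^{2k}(1+3t)^{-k}$ exceeds \eqref{eq:zeta_k} by a factor $4^{k-1}$ and does not ``recover'' it after reindexing; the correct uniformization comes from expanding $\bigl(4w^2(w+1)^2/(3w+1)\bigr)^k$ about $w=1$. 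Finally, your integrality argument does not close: after pairing $i$ with $l-1-i$ the combined coefficient is $\equiv -6(i+1)^2 \pmod{l+1}$, which is not $0$ modulo $l+1$ in general (e.g.\ $l=4$, $i=0$), so divisibility of the right-hand side of \eqref{limwigrecintro} by $l+1$ would require additional, unproven congruences among the $p_j$ themselves. The paper instead deduces integrality from the closed form: it shows each $\zeta_k$ is an integer via the divisibility of $\binom{2k}{p}\binom{2k}{i-p}\binom{2k-i-2}{k-1}$ by $k$, and then reads off integrality of the $p_l$ from \eqref{solwigode}.
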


\begin{remark}
  It is worth emphasizing that the coefficients $\zeta_k$, and hence %worth pointing out
  the leading order contribution to the cumulants, are \emph{integer
    numbers.} It would be interesting to understand the %removed `what are`
  physical reasons behind this combinatorial fact.  It is not true
  when $\beta=4$ (see Table~\ref{pie}).
\end{remark}

%19/03 Caption slightly changed.
\begin{figure}
\centering
\includegraphics[width=5.3in]{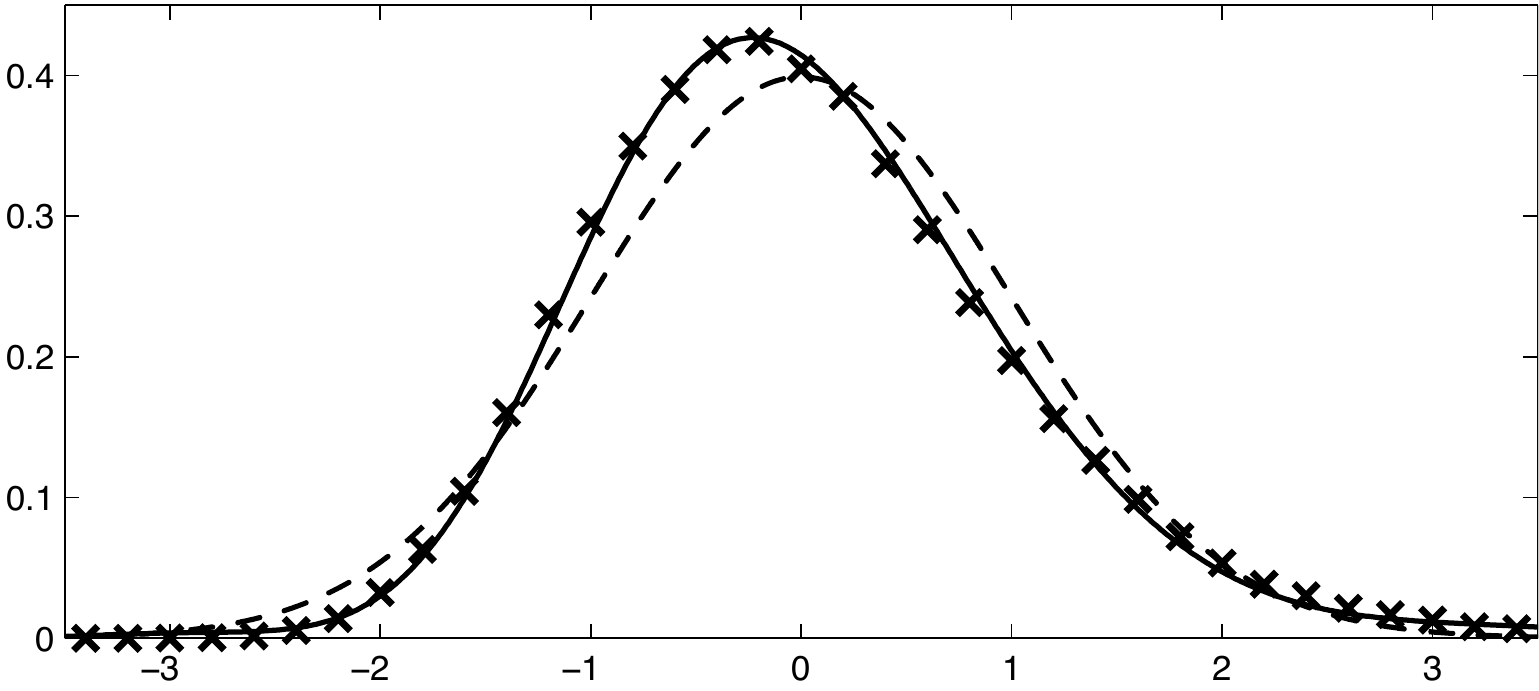}
\caption{\tiny{The probability density function of the Wigner delay time
    scaled and centered to have zero mean and unit variance. The
    crosses are a Monte-Carlo simulation for $n=20$ and $\beta=1$,
    computed using the $\beta$-Laguerre tri-diagonal ensemble with
    $100,000$ samples. The solid line is a five cumulant Edgeworth
    approximation, computed from our recurrence relations, and
    yielding remarkable agreement. The dashed line depicts the
    limiting Gaussian as $n \to \infty$.}}
\label{wigfig}
\end{figure}

%For the symmetry classes with $\beta=1$ and $\beta=4$ we could not
%carry out the analysis to the same extent as with $\beta=2$. Note: I removed this sentence because we already emphasize the difficulties of the time delay enough.
By equating the first few powers of $z$ in the ODE~\eqref{wigode}, we obtain non-perturbative (finite-$n$) results for the lower order cumulants:
\begin{subequations}
\label{wigcumus}
\begin{align}
\label{wigmean}
K^{(\beta)}_{1} &= \frac{\beta}{2}\frac{n}{\omega} = 1, \\
\label{wigvar}
K^{(\beta)}_{2} &= \left(\frac{\beta}{2}\right)^{2}
\frac{n(2\omega+\beta
  n)}{\omega^{2}(2\omega+\beta)(\omega-1)} 
= \frac{4}{(n+1)(n\beta-2)}, \\
K^{(\beta)}_3 & = \left(\frac{\beta}{2}\right)^{3}
\frac{4n(2\omega+n\beta)(\omega+n\beta)}%
{\omega^{3}(2\omega+\beta)(\omega+\beta)(\omega-1)(\omega-2)}\notag \\ %14/03 - corrected a typo here
\label{wigskew}
& = \frac{96}{(n+1)(n+2)(n\beta-2)(n\beta-4)},
\end{align}
\end{subequations}
where $\omega = b-2 -\beta(n-1)$. Our formula (\ref{wigvar}) for the
variance can be extracted from earlier results in the quantum
transport literature \cite{FS97,FSS97}. When $\beta=2$, it has also
appeared in the context of wireless communications \cite{LTV03}. To
the best of our knowledge, our formula (\ref{wigskew}) for the third
cumulant has never appeared in the literature before.

From Eq.~\eqref{wigvar} it follows immediately
that the variance of the Wigner delay time satisfies the asymptotic formula
\begin{equation}
\label{limwigvar}
n^{2}K^{(\beta)}_{2} \sim \frac{4}{\beta}+\frac{4}{\beta n}
\left(\frac{2}{\beta}-1\right)+O(n^{-2}) \quad n \to \infty.
\end{equation}
In the literature on semiclassical approaches to this problem, it is a simple matter to see that Eq. \eqref{limwigvar} in addition to the first seven terms in the asymptotic expansion of Eq. \eqref{wigvar} are in agreement with semiclassical computations in \cite{KS07}. We believe it would be important to go further and obtain a semiclassical derivation of \eqref{wigvar} to all orders in $1/n$.

%14/03 - I changed the above, as it had previously claimed that only the leading term was known semiclassically. This is not the case, as Jack pointed out to me, the first several terms are in KS07, though one has to fourier transform to the energy domain and set the energy difference to zero to see this.

Beyond the third cumulant, the exact results become more
complicated. We conclude our discussion of the time delay by giving
the explicit expressions for the fourth cumulant in each symmetry
class: %altered this slightly, feel free to change back
\begin{subequations}
\begin{align}
K^{(1)}_{4} &= \frac{96(53n^{2}-68n-156)}%
{(n-4)(n+1)^{2}(n-2)^{2}(n+3)(n+2)(n-6)},\\
K^{(4)}_{4} &= \frac{12(53n^{2}+34n-39)}%
{(n+2)(n+1)^{2}(n-1)(2n-3)(2n-1)^{2}(n+3)},\\
K^{(2)}_{4} &= \frac{12(53n^{2}-77)}{(n^{2}-1)^{2}(n^{2}-4)(n^{2}-9)}.
\end{align}
\end{subequations}
The higher order cumulants $K^{(\beta)}_{j}$ with $j>4$ can be obtained
systematically from our recurrence relations in Sec. 6, valid for any $n$ and
any $\beta \in \{1,2,4\}$. In Fig. \ref{wigfig} we used this to calculate an
Edgeworth series approximation to the exact distribution of $\tau_{W}$ for
$n=20$ and $\beta=1$. It shows that the cumulants accurately reproduce the
deviations from the limiting Gaussian as $n\to \infty$. Very recently, such
deviations were also successfully described using the Couloumb gas
approach~\cite{TM13}. %19/03 I added this paragraph about the distribution of the time delay. We need to add the Majumdar citation.

\subsection{Shot Noise when $n=m$ and $\beta=2$}
\label{shotproperties}
We end the overview of our results by discussing two interesting
properties of the shot noise for symmetry classes with $\beta=2$ that
have remained unnoticed in the quantum transport literature. These are
immediate consequences of the definition of the moment generating
function~\eqref{intrepjointgen} and of previous results from Witte
\textit{et al.}~\cite{WFC00} and Forrester~\cite{For06a}.

When the two scattering leads in the cavity support the same number of
quantum channels, \textit{i.e.} $n=m$, the description of shot noise
simplifies considerably if $\beta=2$ because $\alpha =0$.  We first
discuss how the moment generating function can be expressed in terms
of the Painlev\'e V transcendent; then, we point out an
interesting %replaced "and interesting" replaced with "an interesting". replaced "terms of Painlev\'e V transcendents" with "in terms of the Painlev\'e V transcendent"
identity between the probability distributions of the shot noise and
the conductance.
\subsubsection{Shot Noise and  Painlev\'e V}
\label{shotpainleve}
It is well known that gap probabilities in ensembles with $\beta=2$
are related to Painlev\'e transcendents.  In particular for the
Gaussian Unitary Ensemble we have the following.
\begin{proposition}[Witte, Forrester and Cosgrove~\cite{WFC00}]
  Consider the Gaussian Unitary Ensemble of random matrices and
  let $E_2(w)$ be the probability that the interval $(-\infty,-w)
  \cup (w,\infty)$ contains no eigenvalues.    Then, %replaced "should not contain any eigenvalue" with "contains no eigenvalues"
  \begin{equation}
    \label{eq:e2_r_WFC}
    R(w) := \frac12 \frac{d}{dw}\log E_2(w).
  \end{equation}
    satisfies the non-linear ODE %non-linear
    \begin{equation}
wR''+2R'=2w(w-h)-2h\sqrt{(R+wR')^{2}-4w^{2}(w-h)R-2nw^{2}(s-h)^{2}},
\end{equation}
where 
\begin{equation}
h=\sqrt{w^{2}-2R'}.
\end{equation}
\end{proposition}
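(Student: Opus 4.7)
The plan is to mirror the integrable-systems approach used for Theorems \ref{pde:mix_csn}--\ref{th:wdt_ode} of this paper. First I would rewrite $E_2(w)$ as the normalized GUE partition function restricted to $[-w,w]$:
\begin{equation*}
E_2(w) = \frac{1}{Z_n}\int_{[-w,w]^n}\prod_{1\le j<k\le n}(x_k-x_j)^2\prod_{j=1}^n e^{-x_j^2}\,dx_1\cdots dx_n,
\end{equation*}
with $Z_n$ the unrestricted GUE normalization. This is a symmetric-interval analogue of the Jacobi-type integrals in \eqref{intrepjointgen}, but with a Gaussian weight.

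Next I would deform this integral by an infinite sequence of time variables $\mathbf{t}=(t_1,t_2,\ldots)$, producing a $\tau$-function $\tau_n(w;\mathbf{t})$. By the Adler-Shiota-van Moerbeke framework invoked in Section 3 of the paper, $\tau_n$ satisfies the KP hierarchy in $\mathbf{t}$ together with Virasoro constraints coming from vector fields of the form $x\mapsto x^{k+1}+\cdots$ with boundary corrections at $\pm w$. The symmetry of the integration domain makes these constraints couple $w$-derivatives to $t$-derivatives in a controlled way. Combining the KP equation with the first few Virasoro constraints and then setting $\mathbf{t}=0$ yields a closed third-order ODE in $w$ for $\log E_2(w)$, equivalently for $R$.

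The resulting ODE is of Chazy class in the sense of Cosgrove and Scoufis \cite{Cos00,CS93}, and admits the same type of polynomial first integral exploited in the proof of Corollary \ref{co:ci_eq}. Applying that first integral reduces the order and produces an algebraic relation of the form $(wR''+2R')^2 = P(w,R,R')$ with $P$ polynomial. Taking a square root yields the form displayed in the proposition, with the auxiliary $h=\sqrt{w^2-2R'}$ emerging naturally as the combination that reorganizes the right-hand side around a perfect square. The integration constant is fixed by the boundary behavior $E_2(w)\to 1$, and hence $R(w)\to 0$, as $w\to\infty$, which also selects the sign of the square root.

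The principal obstacle will be bookkeeping: carefully tracking the boundary contributions in the Virasoro constraints generated by the finite endpoints $\pm w$, and the algebraic manipulation needed to cast the first integral into the particular form involving $h$. A complementary route would be to identify $\log E_2(w)$ directly with a Jimbo-Miwa-Okamoto $\sigma$-function for a Painlev\'e transcendent (here Painlev\'e IV, via the classical correspondence between Hermite weights and P-IV); the stated equation should then appear as a disguised $\sigma$-form, and most of the work reduces to a conversion of notation between the two formalisms.
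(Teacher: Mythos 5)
First, a point of comparison: the paper does not prove this proposition at all --- it is imported verbatim from Witte, Forrester and Cosgrove \cite{WFC00}, so there is no internal proof to measure your attempt against. Your plan --- write $E_2(w)$ as the GUE partition function restricted to $[-w,w]^n$, deform it into a KP $\tau$-function, use Virasoro constraints (whose boundary pieces at the symmetric endpoints $\pm w$ collapse to single $w$-derivatives) to eliminate the $t$-derivatives, arrive at a third-order Chazy-class ODE, and then apply Cosgrove's first integral --- is the standard Adler--van Moerbeke route to such gap-probability ODEs and is viable in principle; it is also structurally parallel to the paper's proof of Corollary~\ref{co:ci_eq}.

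There are, however, two concrete gaps. First, the passage from the first integral to the displayed equation is not ``taking a square root.'' Cosgrove's first integral delivers a relation of the shape $(wR'')^{2}=P(w,R,R')$ with $P$ polynomial, whereas the target equation reads $wR''+2R'=2w(w-h)-2h\sqrt{\cdots}$ with $h=\sqrt{w^{2}-2R'}$ itself a radical; isolating the outer radical and squaring recovers a polynomial identity only if the cross terms involving $h$ conspire in exactly the right way, and verifying that the particular combination $h=\sqrt{w^{2}-2R'}$ accomplishes this is genuine algebra that your sketch does not perform --- it is essentially the content of the second-order second-degree (Jimbo--Miwa--Okamoto $\sigma$-form) computation in \cite{WFC00}. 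Second, your proposed ``complementary route'' misidentifies the transcendent: for the symmetric gap $(-\infty,-w)\cup(w,\infty)$ the folding $x\mapsto x^{2}$ sends the Hermite weight on $[-w,w]$ to the pair of Laguerre weights $y^{\mp 1/2}e^{-y}$ on $[0,w^{2}]$ --- precisely the decomposition \eqref{productintegrals} exploited a few lines later in the same subsection --- so the relevant $\sigma$-function is that of Painlev\'e V, as the paper states, not Painlev\'e IV, which governs the one-sided gap $(w,\infty)$. Fixing the boundary data from $R(w)\to 0$ as $w\to\infty$ is fine, but it does not by itself select between these two identifications.
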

The function $R(w)$ can be expressed in terms of Painlev\'e V
transcendents with appropriate boundary conditions~\cite{WFC00}.
 
It turns out that the gap probability $E_{2}(w)$ is directly related
to the moment generating function of the shot noise. More precisely,
we have %replaced "it turns out that" with this sentence because it gives
     %the relation between E_{2} and shot noise more emphasis.
\begin{equation}
  \label{eq:gap_shot}
  E_2(w) = C e^{-w^{2}n}(2w)^{n^{2}}\mathcal{M}_{n}^{(2)}(0,-4w^{2}),
\end{equation}
where $C$ is a multiplicative constant that does not affect the
definition~\eqref{eq:e2_r_WFC}.   The identity~\eqref{eq:gap_shot}
follows from
\begin{equation}
\label{gueintegral}
\begin{split}
  \mathcal{M}_{n}^{(2)}(0,-w) &= \frac{1}{c^{(2)}_{n}}
  \int_{[0,1]^{n}}\prod_{j=1}^{n}e^{wT_{j}(1-T_{j})}\prod_{1\le j <k
    \le
    n}\abs{T_k - T_j}^2 dT_1 \dotsm dT_n\\
  &=\frac{e^{wn/4}w^{-n^{2}/2}}{c^{(2)}_{n}}
 \int_{[-\sqrt{w}/2,\sqrt{w}/2]^{n}
 }\prod_{j=1}^{n}e^{-T_{j}^{2}}\prod_{1\le j <k \le
  n}\abs{T_k - T_j}^2 dT_1 \dotsm dT_n ,
\end{split}
\end{equation}
which is obtained by completing the square in the exponential and by
changing variables.

\subsubsection{Shot Noise and Conductance}
\label{reltocond}
There is a fascinating relation between the distributions of the shot %20/03 replaced "there is fascinating.." with "there is a fascinating.."
noise and the conductance in non-superconductive and superconductive
quantum dots, respectively.
\begin{proposition}
  Let $P_n^{(2,0)}$, $G^{(2,-1)}_{N_1}$ and
  $G^{(2,1)}_{N_2}$ denote the random variables defined
  in~\eqref{eq:sec_con}, where
  \begin{equation}
    \label{eq:n1_n2}
    N_1 := \left \lfloor \frac{n + 1}{2}\right \rfloor \quad
    \text{and} \quad N_2 := \biggl \lfloor \frac{n}{2}\biggr \rfloor.
  \end{equation}
  The subscripts refer to the number of variables in the
  \textit{j.p.d.f.}~\eqref{eq:tr_eig_Aqd}, while the superscripts emphasize
  its dependence on the parameters $(\beta,\delta)$.   In
  addition we assume that $\alpha=0$.  We have 
   \begin{equation}
    \label{altland}
    P_{n}^{(2,0)} \overset{d}{\equiv}
   \frac{G^{(2,-1)}_{N_{1}}+G^{(2,1)}_{N_{2}}}{4},
\end{equation}
where the notation $\overset{d}{\equiv}$ denotes equivalence in
distribution. %replaced "convergence" with "equivalence". The two random variables have identical distributions for every finite number of channels.
\end{proposition}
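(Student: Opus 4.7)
The plan is to deduce the identity~\eqref{altland} from a folding (squared-eigenvalue) property of the Legendre $\beta=2$ ensemble, which is exactly the content of Forrester's result in~\cite{For06a}. All three ensembles in the claim have the Vandermonde squared and $\alpha=0$, so the strategy is to transform the shot-noise integral to an ensemble on $[-1,1]$ symmetric under $s\mapsto -s$, use the folding to split the squared eigenvalues into two independent Jacobi ensembles on $[0,1]$, and match these back to the conductance ensembles appearing on the right of~\eqref{altland}.

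\textbf{Step 1 (shift to $[-1,1]$).} Insert $\alpha=0$, $\beta=2$, $\delta=0$ into the Jacobi j.p.d.f.~\eqref{eq:tr_eig_Aqd} and change variables $s_j=2T_j-1$. The density becomes proportional to $\prod_{j<k}(s_k-s_j)^{2}$ on $[-1,1]^n$, i.e.\ the Legendre $\beta=2$ ensemble, and the shot noise becomes
\begin{equation*}
P_n^{(2,0)} \;=\; \sum_{j=1}^n T_j(1-T_j) \;=\; \frac{1}{4}\sum_{j=1}^n\bigl(1-s_j^{2}\bigr).
\end{equation*}

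\textbf{Step 2 (folding of the Legendre ensemble).} I would invoke the classical folding identity (Forrester~\cite{For06a}; cf.\ the even/odd decomposition of the Vandermonde using the fact that Legendre polynomials of even degree are polynomials in $s^{2}$, while those of odd degree are $s$ times polynomials in $s^{2}$): the squared values $u_j:=s_j^{2}$, viewed as an unordered multiset, are distributed as the union of two independent point processes on $[0,1]$,
\begin{equation*}
\frac{1}{Z_{-}}\prod_{1\le i<k\le N_{1}}(u_k-u_i)^{2}\prod_{i=1}^{N_{1}}u_i^{-1/2}
\qquad\text{and}\qquad
\frac{1}{Z_{+}}\prod_{1\le i<k\le N_{2}}(u_k-u_i)^{2}\prod_{i=1}^{N_{2}}u_i^{1/2},
\end{equation*}
with sizes $N_1=\lceil n/2\rceil=\lfloor(n+1)/2\rfloor$ and $N_2=\lfloor n/2\rfloor$, matching~\eqref{eq:n1_n2}. (A direct check for $n=2,3$ via summing $\prod_{j<k}(s_k-s_j)^{2}$ over sign assignments $s_j=\pm\sqrt{u_j}$ and symmetrising in the $u_j$'s reproduces the above densities, which I would include as a sanity check.)

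\textbf{Step 3 (flip to match the conductance ensembles).} Apply $v_j=1-u_j$, which leaves the Vandermonde $\prod(v_k-v_i)^{2}$ invariant and transforms the weights $u^{\mp 1/2}$ into $(1-v)^{\mp 1/2}$. These are precisely the Jacobi weights in~\eqref{eq:tr_eig_Aqd} with $\alpha=0$ and $\delta=\mp 1$, so the two independent processes become exactly the transmission-eigenvalue ensembles underlying $G^{(2,-1)}_{N_1}$ and $G^{(2,1)}_{N_2}$ respectively.

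\textbf{Step 4 (conclude).} Since $4P_n^{(2,0)}=\sum_j(1-s_j^{2})=\sum_j v_j$ and the $v_j$'s decompose into the two independent subsets identified above, the sum splits as
\begin{equation*}
4P_n^{(2,0)}\;\overset{d}{\equiv}\;G^{(2,-1)}_{N_1}+G^{(2,1)}_{N_2},
\end{equation*}
which is~\eqref{altland}. The only non-trivial ingredient is the folding identity of Step~2; once that is accepted from~\cite{For06a}, the remainder is a change of variables and a re-identification of parameters. The main obstacle is therefore to verify, in the $\beta=2$ Legendre setting and for both parities of $n$, that the two independent Jacobi ensembles produced by folding carry exactly the weights $u^{\pm 1/2}$ and the cardinalities $N_1,N_2$ claimed; this is the content of the even/odd orthogonal-polynomial decomposition used by Forrester and can be done by writing the Vandermonde determinant in a basis of orthogonal polynomials of pure even and pure odd degree.
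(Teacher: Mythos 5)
Your proof is correct, and it rests on the same essential ingredient as the paper's --- the even/odd decomposition of a squared Vandermonde with an even weight on a symmetric interval, due to Forrester~\cite{For06a} --- but you deploy it in a genuinely different way. The paper never touches the point process: it completes the square in the shot-noise exponent to rewrite $\mathcal{M}_n^{(2)}(0,-w)$ as a Gaussian-weight integral over the symmetric interval $[-\sqrt{w}/2,\sqrt{w}/2]$ (Eq.~\eqref{gueintegral}), applies Forrester's factorization to that \emph{deformed} partition function to get the product formula~\eqref{productintegrals}, identifies the two factors as conductance moment generating functions after $T\mapsto 1-T$ and $w\mapsto 4w$, and concludes by the convolution theorem. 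You instead fold the \emph{undeformed} Legendre ensemble at the level of the point process --- asserting that $\{s_j^2\}$ is the superposition of two independent Jacobi ensembles of sizes $N_1,N_2$ with weights $u^{\mp 1/2}$ --- and then read off the identity from linearity of the statistic $\sum_j(1-s_j^2)$. Your route proves something strictly stronger (the joint law of all symmetric functions decomposes, not just this one linear statistic) and avoids the $\sqrt{w}$ in the intermediate step, at the price of needing the point-process form of the folding identity rather than the mere Hankel-determinant factorization the paper extracts from~\cite{For06a}; both forms are standard and your indicated proof via pure-parity monic polynomial bases (summing $\Delta(s)^2$ over sign assignments $s_j=\pm\sqrt{u_j}$, so that only permutation pairs assigning even and odd degrees to the same index sets survive) is the right way to establish it. All your bookkeeping checks out: $T_j(1-T_j)=(1-s_j^2)/4$, the weight $u^{\mp 1/2}\mapsto(1-v)^{\mp 1/2}$ under $v=1-u$ matches $\delta=\mp 1$ in~\eqref{eq:tr_eig_Aqd} with $\alpha=0$, and the cardinalities $\lceil n/2\rceil=N_1$ and $\lfloor n/2\rfloor=N_2$ count the even and odd degrees among $0,1,\dots,n-1$.
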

\begin{proof}
  The proof follows from a result of Forrester~\cite{For06a}, who
  exploited the symmetry of the integration interval in the right-hand
  side of~\eqref{gueintegral}, together with the evenness of the
  Gaussian weight. He thus obtained the
  identity %added a little more detail here
  \begin{equation}
\label{productintegrals}
\begin{split}
  \mathcal{M}_{n}^{(2)}(0,-w) &= \frac{e^{wn/4}}{c_n^{(2)}}
      \int_{[0,1]^{N_{1}}}\prod_{j=1}^{N_{1}}T_{j}^{-1/2}e^{-wT_{j}/4}
    \prod_{1\le j <k \le
  N_1}\abs{T_k - T_j}^2 dT_1 \dotsm dT_{N_1} \\
  &\quad \times
  \int_{[0,1]^{N_{2}}}\prod_{j=1}^{N_{2}}T_{j}^{1/2}
e^{-wT_{j}/4}\prod_{1\le j <k \le
  N_2}\abs{T_k - T_j}^2 dT_1 \dotsm dT_{N_2}.
\end{split}
\end{equation}
Changing the variables $T_{j} \to 1-T_{j}$, and scaling $w \to %19/03 - I removed j=1,..,n. Really it should be j=1,..,N_{1} and j=1,..,N_{2} but I think it's clear without this at all.
4w$ gives 
\begin{equation}
  \mathcal{M}_{n}^{(2)}(0,-4w)=\mathcal{M}^{(2)}_{N_{1}}(-w)
\biggr \rvert_{\delta=-1}\mathcal{M}^{(2)}_{N_{2}}(-w)\biggr \rvert_{\delta=1}.
\end{equation}
Finally, Eq.~\eqref{altland} follows from the convolution
theorem.
\end{proof}
When $(\beta,\delta)=(2,-1)$, then the compact symmetric space of the
scattering matrix $S$ is of symmetry type DIII; this means that $S$ is
orthogonal and self-dual.  When $(\beta,\delta)=(2,1)$ the symmetric
space is of symmetry type CI; in this case $S$ is symplectic and
symmetric.  Note that matrices of symmetry type DIII are real, while
matrices of symmetry type CI have quaternion elements.  In both
instances time-reversal is preserved.  The Landauer-B\"uttiker
scattering matrix $S$ of the system on the right-hand side belongs to
the CUE; thus, interestingly, time-reversal is broken on the left-hand
side of Eq.~\eqref{altland} but not on the right-hand side.  

From a computational perspective, we note that the right-hand side
of~\eqref{altland} is far more amenable to calculation than the
left-hand side. As we shall see in Sec.~\ref{se:csnfn}, the cumulants of
$G^{(2,-1)}_{N_{1}}$ and $G^{(2,1)}_{N_{2}}$ can be determined from
the one-dimensional difference equation \eqref{reccond}, while to
obtain the cumulants of the shot noise we have to study the partial
recurrence relation \eqref{shotnoiserec}.

\section{Integrable Systems and Moment Generating Functions}
\label{se:int_the}
We can associate to a scattering observable, such as the Landauer
conductance or the Wigner delay time, a moment generating function
whose integral representation is a partition function of one
of the orthogonal, unitary or symplectic ensembles of random matrices,
depending on whether $\beta=1$, $\beta=2$ or $\beta=4$. In this
section we describe the general theory that provides a link between
such partition functions and certain integrable non-linear
differential equations.

\subsection{Matrix Integrals, $\tau$-Functions and Integrable
  Hierarchies}
\label{se:int_hier}
Fundamental to our approach is the connection between matrix integrals
and exactly solvable models, as established in a variety of papers,
\textit{e.g.,} \cite{MM90,GMMMO91,ASvM95,AvM95,TSY96,ASvM98}. These
works focus principally on the case $\beta=2$ and obtain
relations to the Toda lattice and KP hierarchies. In contrast, the
existence of integrable structures related to the $\beta=1,4$ matrix %%replaced "hierarchies" with "structures", just because I think we say "hierarchies" a lot. 
integrals \cite{AvM01a,ASvM02,AvM02} has received comparatively little
attention.
%Considerably altered the paragraph above and the sentence just below.
 
The starting point for establishing the
relationship with integrable hierarchies is the  \textit{deformed
  integral}
\begin{equation}
 \label{taugencond}
 \tau_{n}^{(\beta)}(\mathbf{t}) := \frac{1}{n!}\int_{[A,B]^{n}}
 \prod_{j=1}^{n}\rho(x_{j})\exp\left(\sum_{i=1}^{\infty}t_{i}x_{j}^{i}\right)
 \prod_{1\le j < k \le n}\abs{x_k - x_j}^{\beta}dx_1\dotsm dx_n,
 \end{equation}
 defined by introducing an infinite sequence of time variables %replaced "infinite a sequence" with "infinite sequence"
 $\mathbf{t} = (t_{1},t_{2},t_{3},\ldots)$ into the integrands of
 Eqs.~\eqref{intrepjointgen} and~\eqref{wignermgf}.  For the mixed
 moment generating function of conductance and shot noise, %%added a comma after shot noise
 $[A,B]=[0,1]$ and the weight function is
\begin{equation}
\label{shotweight}
\rho_{z,w}(T) = T^{\alpha}(1-T)^{\delta/2}e^{-zT-wT(1-T)}.
\end{equation} 
For the Wigner delay time, we have $[A,B] = [0,\infty)$ and %%time delay -> delay time for consistency. Also added a comma and "we have".
\begin{equation}
  \label{eq:wtd}
  \rho_z(\lambda) = \lambda^{-b}\exp\left(-\frac{1}{\lambda}
    +z\lambda\right),
\end{equation}
where $z<0$.  In the theory of integrable systems the integral~\eqref{taugencond} is
known as a $\tau$-function. %added "a"

When $\beta=2$ the integral \eqref{taugencond} simplifies %removed a bracket
considerably and can be written as a Hankel determinant, which can be
studied using orthogonal polynomials.  We have
\begin{equation}
  \tau^{(2)}_{n}(\mathbf{t}) = 
  \det\left(\int_{A}^{B}x^{i+j}
    \rho(x)e^{\sum_{i=1}^{\infty}t_{i}x^{i}}dx\right)_{0 \leq i,j \le n-1}.
\end{equation}

When $\beta=1$, however, no such nice structure exists.  In this case, %%added a comma after case
for even $n$ the $\tau$-function~\eqref{taugencond} becomes the Pfaffian %replaced "a Pfaffian" with "the Pfaffian" (sounds better)
of a skew symmetric matrix (the square root of its
determinant). Namely, (see, \eg,~\cite{Meh04})
\begin{equation}
\label{pfaff}
\tau_{n}^{(1)}(\mathbf{t}) = \Pf\left(\frac{1}{2}\int_{A}^{B}
\int_{A}^{B}y^{k}x^{l}\sgn(y-x)
e^{\sum_{i=1}^{\infty}t_{i}(y^{i}+x^{i})}\rho(y)
\rho(x)dydx\right)_{0 \leq k,l \leq n-1}.
\end{equation}

Although we shall restrict our attention to the weights defined in %added `restrict our attention`
\eqref{shotweight} and \eqref{eq:wtd}, the identity \eqref{pfaff} holds 
independently of these choices or the deformation parameters
$\mathbf{t}$. Formula \eqref{pfaff} is based on an identity of de
Bruijn for integrating determinants:
\begin{multline}
\label{debruyn}
\frac{1}{n!}\int_{[A,B]^{n}}\det\left(F_{i}(y_{1}), 
  G_{i}(y_{1}), \dotsc, F_{i}(y_{n}), G_{i}(y_{n})\right)_{0
  \leq i \leq 2n-1}dy_{1}\dotsm dy_{n}\\ =
\Pf\left(\int_{A}^{B}(G_{i}(y)F_{j}(y)-F_{i}(y)G_{j}(y))dy\right)_{0
  \leq i,j \leq 2n-1},
\end{multline}
where $F_i$ and $G_i$, $i=0,\dotsc,n-1$, are two sets of integrable
functions. In the left-hand side of \eqref{debruyn}, the commas
separate columns and the index $i$ labels the rows.

By studying the evolution of the moment matrix \eqref{pfaff} in the
extended $\mathbf{t}$-space, Adler and van Moerbeke
\cite{AvM01a,AvM02,ASvM02} discovered a \textit{bi-linear identity}
that applies to $\tau$-functions that can be expressed as Pfaffians:
\begin{multline}
\label{bilid}
\oint_{\mathcal{C}_{\infty}}
\tau^{(1)}_{n}\left(\mathbf{t}-\left[\xi^{-1}\right]\right)
\tau^{(1)}_{m+2}\left(\mathbf{t}'+\left[\xi^{-1}\right]\right)
e^{\sum_{i=1}^{\infty}(t_{i}-t_{i}')
  \xi^{i}}\xi^{n-m-2}\frac{d\xi}{2\pi i}\\
=-\oint_{\mathcal{C}_{0}}\tau^{(1)}_{m}(\mathbf{t}'-[\xi])
\tau^{(1)}_{n+2}(\mathbf{t}+[\xi])
e^{\sum_{i=1}^{\infty}(t_{i}'-t_{i})\xi^{-i}}\xi^{n-m}\frac{d\xi}{2
  \pi i},
\end{multline} %corrected a typo in the bilinear identity. %%I replaced \omega with \xi here because we already use \omega for the Wigner delay time.
where $n$ and $m$ are even, and $\mathcal{C}_{\infty}$ or
$\mathcal{C}_{0}$ are small circles in the complex plane enclosing the
point $\xi=\infty$ and $\xi=0$, respectively. The notation
$[\xi]$ refers to the infinite vector
$(\xi,\xi^{2}/2,\xi^{3}/3,\ldots)$. %%changed the \omega to \xi, as in the bilinear identity.

The identity (\ref{bilid}) can be used to generate an infinite
sequence of integrable hierarchies of PDEs involving
$\tau^{(1)}_{n}(\mathbf{t})$. Evaluating the residues on both sides of
\eqref{bilid} and expanding up to terms linear in
$(\mathbf{t}-\mathbf{t}')$, one obtains a hierarchy of PDEs known as the
\textit{Pfaff-KP hierarchy;} its first non-trivial member is the
\textit{Pfaff-KP equation:}
\begin{equation}
\label{pkpequation}
\left(\frac{\partial^{4}}{\partial t_{1}^{4}}
+3\frac{\partial^{2}}{\partial t_{2}^{2}}
-4\frac{\partial^{2}}{\partial t_{1}\partial t_{3}}\right)
\log \tau^{(1)}_{n}
+6\left(\frac{\partial^{2}}{\partial t_{1}^{2}}
\log \tau^{(1)}_{n}\right)^{2}
=12\frac{\tau^{(1)}_{n-2}\tau^{(1)}_{n+2}}{\bigl(\tau^{(1)}_{n}\bigr)^{2}}.
\end{equation}
When Eq. \eqref{pkpequation} was introduced in RMT, it was used
to characterize gap formation probabilities in orthogonal and
symplectic ensembles~\cite{AvM01a}. To the best of our knowledge, this
article is the first to apply \eqref{pkpequation} in a physical
context for over a decade. 

A similar discussion holds when $\beta=4$. One can show that
$\tau^{(4)}_{n}(2\mathbf{t})$ satisfies the same equation as
$\tau^{(1)}_{n}(\mathbf{t})$, except for one small difference: in the
right-hand side the sub-indices $n\pm 2$ are replaced by $ n\pm 1$
and there is no
parity %replaced "not any parity restriction" with "no parity restriction"
restriction on $n$.

Although our emphasis is on the $\beta=1,4$ ensembles, we point out
that when $\beta=2$ there is a simpler and more widely known analogue
of \eqref{pkpequation} usually referred to as the \textit{KP
  equation:}
\begin{equation}
\label{kpequation}
\left(\frac{\partial^{4}}{\partial t_{1}^{4}}
+3\frac{\partial^{2}}{\partial t_{2}^{2}}
-4\frac{\partial^{2}}{\partial t_{1}\partial t_{3}}\right)\log \tau^{(2)}_{n}
+6\left(\frac{\partial^{2}}{\partial t_{1}^{2}}\log \tau^{(2)}_{n}\right)^{2}=0,
\end{equation}
which is the first member of the KP hierarchy.

It is worth emphasizing that the right-hand side of
Eq.~\eqref{pkpequation} is not zero as well as containing
$\tau^{(1)}_{n \pm 2}(\mathbf{t})$ ($\beta=1$) or $\tau^{(4)}_{n\pm
  1}(2\mathbf{t})$ ($\beta=4$). This constitutes an additional
complication compared to the symmetry classes with
$\beta=2$. %I got rid of "considerable additional"! Perhaps rephrase but I think it's OK.

Equation~\eqref{pkpequation} is the starting point to obtain the
differential equations in Theorems~\ref{pde:mix_csn}, \ref{ode:con}
and~\ref{th:wdt_ode}; in turn such differential equations lead to
recurrence relations for the cumulants. Deriving these differential
equations requires a combination of techniques pioneered by Adler and
van Moerbeke~\cite{AvM01a}.  More precisely we need extra information
on the properties of $\tau$-functions that is provided by the fact %properties *of* \tau-functions.
that~\eqref{taugencond} satisfies the \textit{Virasoro constraints.}
These are an infinite sequence of linear differential equations with a
specific algebraic structure, which we discuss next.

\subsection{Virasoro Constraints and $\beta$-Integrals}
\label{sse:vis_con}

The Virasoro constraints were first applied to unitary
matrix models ($\beta=2$) in the context of string
theory~\cite{MM90,GMMMO91}. In the unitary case, Osipov and Kanzieper~\cite{OK09} were the
first to apply these techniques to quantum
transport and used (\ref{kpequation}) to derive a PDE satisfied by
$\log\mathcal{M}_{n}^{(2)}(z,w)$. The fact that the Virasoro constraints apply for general $\beta>0$ appears not to have been applied to problems in quantum transport until now. 
%Modified the paragraph above, as it contained some typos before.

The integrals~\eqref{taugencond} are well-defined for any real %removed emphasis on "real". They are also defined for complex \beta with positive real part, but we don't need this much generality.
$\beta >0$.  Usually, when we want to emphasize this property, we
refer to them as $\beta$\textit{-integrals.} The Virasoro constraints %we keep stating that they satisfy this infinite sequence of identities.... so I rephrased it.
we need follow from a particular change of integration variables. More generally, %"more in general" replaced with "more generally"
let $f(x)$ and $g(x)$ be real analytic in a neighbourhood of the
origin and supported on an interval $[A,B] \subseteq
\mathbb{R}$.\footnote{In general the domain of integration
  of~\eqref{taugencond} is $E^n$, where $E$ is a disjoint union of
  intervals contained in $[A,B]$.  In the cases we study, $E=[A,B]$, %%added a comma after study
  which slightly simplifies the analysis.} 
In addition impose the boundary conditions
\begin{equation}
\label{eq:b_c}
V'(x) :=\frac{g(x)}{f(x)} 
 = \frac{\sum_{i=0}^{\infty}b_{i}x^{i}}{\sum_{i=0}^{\infty}a_{i}x^{i}},
\quad \lim_{x \to A,B} f(x)\rho(x)x^k = 0, \quad \text{for all $k>0$.}
\end{equation}
The $\tau$-functions~\eqref{taugencond} belong to the class of %replaced \tau with $\tau$.
$\beta$-integrals with weights of the form
\begin{equation}
\label{eq:constr}
\rho(x) := e^{-V(x)}, \quad x \in [A,B] \subseteq \mathbb{R}.
\end{equation}
It is straightforward to check that the boundary
conditions~\eqref{eq:b_c} are satisfied by both
weights~\eqref{shotweight} and~\eqref{eq:wtd}.

Let us introduce the differential operators
\begin{align*}
J_{k}^{(1)} & := \begin{cases}
\frac{\partial}{\partial t_{k}} & k>0,\\
\frac{1}{\beta}(-k)t_{-k} & k<0,\\
0 & k=0,
\end{cases}\\
J_{k}^{(2)}& := \sum_{i=1}^{k-1}
\frac{\partial^{2}}{\partial t_{i} \partial t_{k-i}}+\frac{2}{\beta}
\sum_{i=1}^{\infty}it_{i}\frac{\partial}{\partial t_{i+k}},
\end{align*}
where $t_1,\dots,t_k$ are the first $k$ elements of the infinite
vector $\mathbf{t}=(t_1,t_2,t_3,\dotsc)$. Define
\begin{subequations}
\label{eq:v_op}
\begin{align}
\label{eq:v_op_1}
  \mathbb{J}_{k,n}^{(1)} & := J_{k}^{(1)}+n\delta_{0,k},\\
\label{eq:v_op_2}
  \mathbb{J}_{k,n}^{(2)} &:= 
\left(\frac{\beta}{2}\right)J_{k}^{(2)}
+\Biggl(n\beta+(k+1)\left(1-\frac{\beta}{2}\right)\Biggr)
J_{k}^{(1)}+n\left((n-1)\frac{\beta}{2}+1\right)\delta_{0,k},
\end{align}
\end{subequations}
as well as  the Virasoro operators %strictly speaking here, I think the operators J_{k,n}^{(2)} are usually referred to as Virasoro, because they satisfy the Virasoro algebra.
\begin{equation}
\label{eq:vir_op}
  \mathcal{V}_{k} : =\sum_{i=0}^{\infty}
\left(a_{i}\, \mathbb{J}_{k+i,n}^{(2)}-b_{i}\,
  \mathbb{J}_{k+i+1,n}^{(1)}\right), \quad k \ge -1,
\end{equation}
where the parameters $a_i$ and $b_i$, $i=0,1,2,\dotsc,$ are the
coefficients of the Taylor expansion of $f(x)$ and $g(x)$ defined in
Eq.~\eqref{eq:b_c}.  Imposing the invariance of \eqref{taugencond}
under the transformation
\begin{equation}
\label{substit}
x_{j} \to x_{j}+\epsilon f(x_{j})x_{j}^{k+1}, \quad j=1,\dotsc,n.
\end{equation}
one can show that (see Appendix~$5$ of \cite{AvM01a}) 
\begin{equation}
  \label{tauannihilation}
  \mathcal{V}_k\tau_n(\mathbf{t}) = 0, \quad k \ge -1.
\end{equation}
We shall refer to these identities as \textit{Virasoro constraints.} %"refers to this identities" replaced with "refer to these identities"
A more abstract proof of Eq.~\eqref{tauannihilation} is based on the
fact that the $\beta$-integrals~\eqref{taugencond} are the fixed
points of vertex operators \cite{AvM01a}. %I cited Adler and van Moerbeke Annals paper (again) here. Reason is that this fixed-point proof is due to them as most probably are the algebraic aspects mentioned below (at least for arbitrary \beta).

It is worth emphasizing that the differential
operators~\eqref{eq:v_op} form an algebra, although this fact will not
enter directly in our discussion. For all $n \in \mathbb{Z}_+$ the
operators $\mathbb{J}^{(1)}_{k,n}$ and $\mathbb{J}^{(2)}_{k,n}$ obey a
Virasoro and Heisenberg algebra
\begin{align*}
  \left[\mathbb{J}^{(1)}_{k,n},\mathbb{J}^{(1)}_{\ell,n}\right] & =
  \frac{k}{\beta}\delta_{k,-\ell},\\
    \left[\mathbb{J}^{(2)}_{k,n},\mathbb{J}^{(1)}_{\ell,n}\right] & =
  - \ell \, \mathbb{J}^{(1)}_{k +\ell,n} +c' k(k+1)\delta_{k,-\ell}, \\
  \left[ \mathbb{J}^{(2)}_{k,n},\mathbb{J}^{(2)}_{\ell,n}\right] & =
  \left(k-\ell\right)\mathbb{J}^{(2)}_{k+\ell,n} + c\left(\frac{k^3
      -k}{12}\right)\delta_{k,-\ell}
\end{align*}
with central charge
\begin{equation*}
  c = 1 - 6\Biggl(\left(\frac{\beta}{2}\right)^{1/2} -
  \left(\frac{\beta}{2}\right)^{-1/2}\Biggr)^2 \quad \text{and} \quad 
c' = \frac{1}{\beta}-\frac12.
\end{equation*}

\subsection{Virasoro Constraints for Joint Conductance and Shot
  Noise}
\label{ss:vc_csn}
To put the theory of the matrix integrals described in %"integrals describe in" replaced with "integrals described in"
Secs.~\ref{se:int_hier} and~\ref{sse:vis_con} into our context, let us
look at the $\tau$-function obtained by deforming the joint moment
generating function~\eqref{intrepjointgen}.

Since only linear and quadratic terms in the transmission eigenvalues
$T_j$ appear in the weight~\eqref{shotweight}, we only need the first
two Virasoro constraints. The logarithmic derivative of the
weight~\eqref{shotweight}
\begin{equation*}
\begin{split}
\frac{f(T)}{g(T)} & =- \frac{d}{dT}\log \rho_{z,w}(T) \\
& = \frac{\alpha-(\alpha+\delta/2+z+w)T+(z+3w)T^{2}-2wT^{3}}{T(T-1)}
\end{split}
\end{equation*}
gives the coefficients of the Taylor expansions of  $f(T)$ and $g(T)$:
\begin{alignat*}{4}
  a_0 & = 0, &\quad a_1 & =-1, & \quad a_2 & = 1, &\quad  a_3 & =0,  \\
 b_0 & = \alpha, & b_{1} &=-(\alpha+\delta/2 +z+w), & 
b_2 & = z + 3w, &  b_3  & = -2w,\\
a_i&= 0, &  b_i & = 0 \quad \text{if $i>3$.} && && 
\end{alignat*}
The first two Virasoro operators are: 
\begin{subequations}
\begin{equation}
\begin{split}
\label{shotvira1}
\mathcal{V}_{-1} &= \sum_{i=1}^{\infty}it_{i}
\left(\frac{\partial}{\partial t_{i+1}}
\ -\frac{\partial}{\partial t_{i}}\right)+
\Bigl(\alpha+z+\delta/2+w+\bigl(n\beta+2-\beta\bigr)\Bigr)
\frac{\partial}{\partial t_{1}}\\
&\quad -(z+3w)\frac{\partial}{\partial t_{2}}+2w\frac{\partial}{\partial
  t_{3}}-\Bigl(\alpha n+n((n-1)\beta/2+1)\Bigr) %there was a typo here. fixed.
\end{split}
\end{equation}
and
\begin{equation}
\label{shotvira2}
\begin{split}
\mathcal{V}_{0} &= \sum_{i=1}^{\infty}it_{i}
\left(\frac{\partial}{\partial t_{i+2}}
-\frac{\partial}{\partial t_{i+1}}\right)+\frac{\beta}{2}
\frac{\partial^{2}}{\partial t_{1}^{2}}
-\Bigl(\bigl(n\beta+2-\beta\bigr)+\alpha\Bigr)
\frac{\partial}{\partial t_{1}}\\
&\quad +\Bigl(\alpha+z+\delta/2+w+\bigl(n\beta+3-3\beta/2\bigr)\Bigr)
\frac{\partial}{\partial t_{2}}\\
& \quad -(z+3w)\frac{\partial}{\partial
  t_{3}}+2w\frac{\partial}{\partial t_{4}}. 
\end{split}
\end{equation}
\end{subequations}

The Virasoro operators associated to the deformation of the moment
generating function of the conductance are obtained simply by setting
$w=0$ in Eqs.~\eqref{shotvira1} and~\eqref{shotvira2}.

\section{Conductance and Shot Noise for Finite-$n$}
\label{se:csnfn}

The Virasoro constraints allow us to evaluate the
Pfaff-KP %replaced "allow to" with "allow us to"
equation~\eqref{pkpequation} and the KP equation~\eqref{kpequation} at
$\mathbf{t}=0$. This gives the differential %removed "and and" etc.
equations~\eqref{diffeqshot} and~\eqref{diffeqcond}, thus proving
Theorems~\ref{pde:mix_csn} and~\ref{ode:con}. Then, by inserting the
series expansions~\eqref{mixcumugen} into these differential equations
we obtain recurrence relations, which allow both non-perturbative and
asymptotic analysis of the cumulants as $n \to \infty$.

\subsection{Differential-Difference Equations: the Conductance}
\label{ss:dd_eq}
As discussed in Sec~\ref{ss:c_sn_mr} the cumulants of the conductance
$\kappa_l^{(\beta)}$ provide the initial conditions for the
recurrence relation for the mixed cumulants $\kappa_{l,k}^{(\beta)}$.
Therefore, we shall start with the proof of Theorem~\ref{ode:con}.

Let us write down the $\tau$-function associated to the
weight~\eqref{shotweight} explicitly
\begin{equation}
\label{eq:tf_jmcsn}
\begin{split}
  \tau_{n}(\mathbf{t};z,w) &:= \frac{1}{n!}\int_{[0,1]^{n}}
\prod_{j=1}^{n}T_{j}^{\alpha}(1-T_{j})^{\delta/2}\\
  &\quad \times
  \exp\left(-zT_{j}-wT_{j}(1-T_{j})+\sum_{i=1}^{\infty}T_{j}^{i}t_{i}\right)
\prod_{1 \le j<k\le n}\abs{T_k - T_j}^{\beta}dT_1 \dotsm T_n.
\end{split}
\end{equation}
We now study the $\tau$-function $\tau_n(\mathbf{t};z) :=
\tau_n(\mathbf{t};z,0)$ as well as the moment generating function
\begin{equation}
  \label{eq:mgf_c}
  \mathcal{M}_n^{(\beta)}(z) := \frac{n!}{c_n^{(\beta)}}
  \tau_n^{(\beta)}(\mathbf{t};z)\biggr \rvert_{\mathbf{t}=\mathbf{0}}.
\end{equation}

First note that the definition~\eqref{eq:tf_jmcsn}
and~\eqref{taugencond} imply the relation
\begin{equation}
\label{eq:zt1d}
\frac{\partial}{\partial t_{1}}\tau^{(\beta)}_{n}(\mathbf{t};z) 
= -\frac{\partial}{\partial z}\tau^{(\beta)}_{n}(\mathbf{t};z).
\end{equation}
Eqs.~\eqref{eq:mgf_c} and~\eqref{eq:zt1d} provide the projections at
$\mathbf{t}=\mathbf{0}$ of the right-hand side of~\eqref{pkpequation} (for both
$\beta=1$ and $\beta=4$) as well as of the partial derivatives with
respect to $t_1$ in the left-hand side. Therefore, in order to
complete the proof of Theorem~\ref{ode:con} we need to express the
partial derivatives
\begin{equation}
\label{missingpds}
\frac{\partial^{2}}{\partial
  t_{2}^{2}}\log\tau^{(\beta)}_{n}(\mathbf{t};z)
\biggr \rvert_{\mathbf{t}=\mathbf{0}}\quad \text{and} \quad 
 \frac{\partial^{2}}{\partial t_{1}\partial t_{3}}
\log\tau^{(\beta)}_{n}(\mathbf{t};z)\biggr \rvert_{\mathbf{t}=\mathbf{0}}
\end{equation}
in terms of the derivatives of $\sigma_n^{(\beta)}(z) = \log
\mathcal{M}^{(\beta)}_{n}(z)$. This is a standard calculation
involving the Virasoro operators~\eqref{shotvira1}
and~\eqref{shotvira2}, whose details are provided in
Appendix~\ref{virconapp}.

\subsection{Non-Perturbative Recurrence Relation for the Cumulants of
  the Conductance}
\label{ss:np_con_cum}

The ODE~\eqref{diffeqcond} combined with the Taylor expansion
\begin{equation}
  \label{eq:ts_cond}
  \sigma^{(\beta)}_n(z) := \sum_{l=1}^\infty \frac{(-1)^l
   \kappa_l^{(\beta)}}{l!}z^l
\end{equation}
provides a non-perturbative recurrence relation that the cumulants
$\kappa_l^{(\beta)}$ satisfy. In Sec.~\ref{as_an} we shall study its
asymptotic limit as $n \to \infty$.

\begin{lemma}
\label{le:cc_rr}
The cumulants $\kappa_l^{(\beta)}$ satisfy the recurrence relation 
\begin{multline}
\label{reccond}
A(l)\kappa^{(\beta)}_{l+1}+\eta_{1,4}^{(\beta)}\sum_{i=0}^{l-1}
\binom{l}{i}\kappa^{(\beta)}_{i+1}\kappa^{(\beta)}_{l-i}B(i,l)
-l(2l-1)(\alpha -\delta/2 +
\beta n)\kappa^{(\beta)}_{l}\\
-l(l-1)(l-2)\kappa^{(\beta)}_{l-1} %20/03 removed a comma here that was appearing before the equals sign.
=(12b_{n}^{(\beta)}/\beta)l(l-1)(l-2)\mu^{(\beta)}_{l-3}, \quad
l\ge 3 
\end{multline}
with initial conditions
\begin{subequations}
\label{icreccond}
\begin{align}
\label{icreccond_1}
\kappa^{(\beta)}_{1} &= \frac{n\bigl(\alpha+1+\beta(n-1)/2\bigr)}%
 {\alpha+\delta/2+2+\beta(n-1)},\\
\label{icreccond_2}
	\kappa^{(\beta)}_{2} &=
\frac{1}{4}\frac{n\bigl(2\alpha+2+\beta(n-1)\bigr)
\bigl(\delta+2+\beta(n-1)\bigr)}%
{\bigl(\alpha+\delta/2+2+\beta(n-1)\bigr)^{2}\bigl(\alpha+\delta/2+3
+\beta(n-1)\bigr)} \\
& \quad \times \frac{\bigl(\delta+2\alpha+4+\beta(n-2)\bigr)}%
{\bigl(2\alpha+\delta+4+\beta(2n-3)\bigr)},\notag\\
\label{icreccond_3}
\kappa^{(\beta)}_{3} &= \frac{2\kappa_{2}^{(\beta)}(\delta/2
-\alpha+2\beta\kappa^{(\beta)}_{1}-\beta n)}%
{\bigl(\alpha+\delta/2+4+\beta(n-1)\bigr)\bigl(\alpha+\delta/2
+2+\beta(n-2)\bigr)}.
\end{align}
\end{subequations}

The parameter $b_n^{(\beta)}$ was defined in Eq.~\eqref{bnbeta} and
the coefficients $A(l)$ and $B(i,l)$ are polynomials in $i$ and $l$:
\begin{subequations}
\label{eq:pAB}
\begin{align}
  \label{eq:pA}
  A(l) & := \eta^{(\beta)}_{1,4}l(l-1)(l-2)+\beta
  l(2l-1) - (6-3\beta)(\alpha+\delta/2+\beta n+2-\beta)l \\
  & \quad \; -(\alpha+\delta/2+\beta n)
  (\alpha+\delta/2+\beta n+2-\beta)(l+1),\notag \\
\label{eq:pB}
B(i,l) &:= (2-\chi_{1,2}^{(\beta)})(l-i)\Bigl((l-i)(6i-2)+3\Bigr)
+(\chi_{1,2}^{(\beta)}-1)(l-i)^{2}(6i+2),
\end{align}
\end{subequations}
where 
\begin{equation*}
  \chi_{1,2}^{(\beta)} := \begin{cases} 1 & \text{if $\beta = 1,4$,}\\
                                          2 & \text{if $\beta=2$.}
                            \end{cases}
\end{equation*}
The coefficients $\mu_{l}^{(\beta)}$, $l=0,1,2\dotsc$ are the
``reduced moments'', which can be obtained from the recurrence relation
\begin{equation}
\label{recmomcum}
\mu^{(\beta)}_{l} =\sum_{j=0}^{l-1}\binom{l-1}{j}r^{(\beta)}_{l-j}
\mu^{(\beta)}_{j}, \qquad l=1,2,\dotsc,
\end{equation}
where
\begin{equation}
\label{redcumdef}
r^{(\beta)}_{l} := \kappa^{(\beta)}_{l}\Bigr \rvert_{n \to n-i(\beta)}
+\kappa^{(\beta)}_{l}\Bigr \rvert_{n \to
  n+i(\beta)}-2\kappa^{(\beta)}_{l},
\end{equation}
and by convention we have set $\mu_0=1$. The integer $i(\beta)$ was
introduced in Eq.~\eqref{def_par}.
\end{lemma}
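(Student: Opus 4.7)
The strategy is to substitute the formal Taylor expansion \eqref{eq:ts_cond} into the ODE \eqref{diffeqcond} from Theorem \ref{ode:con} and match the coefficient of $z^l$ on both sides. From
\begin{equation*}
\sigma_n^{(\beta)}(z) = \sum_{l=1}^\infty \frac{(-1)^l \kappa_l^{(\beta)}}{l!} z^l,
\end{equation*}
the $k$-th derivative has coefficient $(-1)^l \kappa_l^{(\beta)}/(l-k)!$ of $z^{l-k}$. Hence the linear terms $z^{3}\eta_{1,4}^{(\beta)}\sigma''''_n$, $2\beta z^{2}\sigma'''_n$, $p_{2}^{(\beta)}(z)\sigma''_n$, $p_{1}^{(\beta)}(z)\sigma'_n$ together with the inhomogeneity $p_0^{(\beta)}(z)$ contribute terms linear in $\kappa_{l+1}$, $\kappa_l$, $\kappa_{l-1}$ to the coefficient of $z^l$. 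Using the explicit forms of $p_0,p_1,p_2$ from \eqref{polysdiffeqcond}, all the factorial ratios collapse into a single polynomial $A(l)$ in $l$ multiplying $\kappa_{l+1}^{(\beta)}$, while the terms linear in $\kappa_l$ and $\kappa_{l-1}$ give the coefficients $-l(2l-1)(\alpha-\delta/2+\beta n)$ and $-l(l-1)(l-2)$ exactly as displayed in \eqref{reccond}.

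For the nonlinear contributions one performs the Cauchy products of $6\eta_{1,4}^{(\beta)} z^3(\sigma''_n)^2$, $\beta z(\sigma'_n)^2$ and $4\beta z^2 \sigma'_n \sigma''_n$. Each produces a sum of the form $\sum_{i=0}^{l-1}\binom{l}{i}\kappa_{i+1}^{(\beta)}\kappa_{l-i}^{(\beta)}$ once the factorials are rewritten with binomial coefficients, but with different weights in $i$ and $l-i$ reflecting the orders of differentiation. Summing the three weights and extracting the overall factor $\eta_{1,4}^{(\beta)}$ produces the polynomial $B(i,l)$ in \eqref{eq:pB}. The indicator $\chi_{1,2}^{(\beta)}$ arises because when $\beta\in\{1,4\}$ all three nonlinear terms carry genuinely different coefficients, whereas for $\beta=2$ the coefficients $\eta_{1,4}^{(\beta)}$ and $\beta/2$ coincide, so the same formula $B(i,l)$ covers both regimes with the one bit of data $\chi_{1,2}^{(\beta)}$.

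For the right-hand side of \eqref{diffeqcond}, observe that
\begin{equation*}
\sigma_{n-i(\beta)}^{(\beta)}(z)+\sigma_{n+i(\beta)}^{(\beta)}(z)-2\sigma_n^{(\beta)}(z)
= \sum_{l=1}^{\infty}\frac{(-1)^l r_l^{(\beta)}}{l!}z^l,
\end{equation*}
with $r_l^{(\beta)}$ defined by \eqref{redcumdef}. Exponentiating this formal series and invoking the standard moment--cumulant relation gives a power series whose coefficients $\mu_l^{(\beta)}/l!$ obey the recursion \eqref{recmomcum}. Multiplying by $(12b_n^{(\beta)}/\beta)z^3$ and matching the coefficient of $z^l$ produces the term $(12b_n^{(\beta)}/\beta)\,l(l-1)(l-2)\mu_{l-3}^{(\beta)}$ in \eqref{reccond}. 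The initial conditions \eqref{icreccond_1}--\eqref{icreccond_2} follow from evaluating $\kappa_1^{(\beta)}=n\langle T_1\rangle$ and $\kappa_2^{(\beta)}=\mathrm{Var}(G)$ via Selberg's integral \eqref{eq:selberg}, and \eqref{icreccond_3} comes from the $z^2$-coefficient of \eqref{diffeqcond}, which is an algebraic identity expressing $\kappa_3^{(\beta)}$ in terms of $\kappa_1^{(\beta)}$ and $\kappa_2^{(\beta)}$. The main obstacle is purely organisational: keeping track of signs from the alternating factor $(-1)^l$ and confirming that the bookkeeping of factorial ratios for the three distinct nonlinear terms collapses to the single polynomial $B(i,l)$ with the stated $\chi_{1,2}^{(\beta)}$ dependence.
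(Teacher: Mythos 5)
Your proposal matches the paper's own proof: the paper likewise obtains \eqref{reccond} by inserting the Taylor expansion \eqref{eq:ts_cond} into the ODE \eqref{diffeqcond}, equating coefficients of $z^{l}$, and expanding the right-hand side via the exponential identity \eqref{eq:rm_exp} relating the reduced cumulants $r_l^{(\beta)}$ to the reduced moments $\mu_l^{(\beta)}$. The only (immaterial) difference is that the paper reads all three initial conditions \eqref{icreccond} off the coefficients of the first three powers of $z$ in the ODE itself, whereas you compute $\kappa_1^{(\beta)}$ and $\kappa_2^{(\beta)}$ directly from Selberg-type integrals and use the ODE only for $\kappa_3^{(\beta)}$.
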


This lemma can be proved by inserting the expansion~\eqref{eq:ts_cond} into
the ODE~\eqref{diffeqcond} and performing straightforward, although
tedious, calculations.  The initial conditions~\eqref{icreccond}
are obtained by equating the coefficients of the first three powers of
$z$ on both sides of the ODE.  The right-hand side of
Eq.~\eqref{diffeqcond} contributes only for powers $z^l$ such that $l
>3$ and is expanded using the identity
\begin{equation}
\label{eq:rm_exp}
   \exp\left(\sum_{l=1}^{\infty}
\frac{r^{(\beta)}_{l}z^{l}}{l!}\right) 
=  1+ \sum_{l=1}^{\infty}\frac{\mu^{(\beta)}_{l}z^{l}}{l!}.
\end{equation}

Since the coefficient $\mu^{(\beta)}_{l-3}$ on the right-hand side
of~\eqref{reccond} involves only the first $l-3$ reduced cumulants,
the difference equation~\eqref{reccond} and the initial
conditions~\eqref{icreccond} define the cumulants of conductance for
systems with $\beta=1,4$ uniquely. It also presents an efficient
algorithm for their computation.

\begin{remark}
  Formulae~\eqref{icreccond} generalise previous results~\cite{SS06,
    SWS07, SSW08, KSS09} to the symmetry classes $\delta \neq 0$. The
  right-hand side of the differential equation~\eqref{diffeqcond} does
  not contribute to the recurrence relation until one equates the
  coefficients of the third power
  of %removed mention of pfaff-kp equation; z does not appear in it. also right hand side already appears at third power of z, so changed the "fourth".
  $z$.  In other words, it does not contribute to the first three
  cumulants. It is possible to verify using only the Virasoro
  constraints that the formulae in~\eqref{icreccond} are valid for
  arbitrary $\beta>0$. %added this sentence. I do it in my thesis.
  \end{remark}

% \begin{remark}
%   For convenience we tabulate the first few values of the sequence
%   $\mu^{(\beta)}_{l}$ appearing on the right hand side of
%   (\ref{reccond}). They are expressed in terms of the reduced
%   cumulants $r^{(\beta)}_{l}$ defined by equation (\ref{redcumdef}),
% \begin{center}
%     \begin{tabular}{ | c | c | }
%     \hline
%     l & $\mu_{l}$ \\ \hline
%     0 & $1$  \\ \hline
%     1 & $r_{1}$ \\ \hline
%     2 & $r_{2}+r_{1}^{2}$ \\ \hline
%     3 & $r_{3}+2r_{1}r_{2}+r_{1}^{3}$ \\ \hline
%     4 & $r_{4}+4r_{1}r_{3}+3r_{2}^{2}+6r_{2}r_{1}^{2}+r_{1}^{4}$\\
%     \hline
%     \end{tabular}
% \end{center}
% For ease of notation we dropped the superscript $(\beta)$ from these
% equations.
% \end{remark}

\begin{remark}
  By setting $l=3$ in \eqref{reccond} we obtain  exact formulae for the
  fourth cumulant of conductance in terms of the first three.  For
  example, when $\beta=1,2$ we have 
\begin{subequations}
\label{4thcumu}
\begin{align}
\label{4thcumu1}
\kappa^{(1)}_{4} &= 3\left(\frac{-2\kappa^{(1)}_{2}
+10\kappa^{(1)}_{1}\kappa^{(1)}_{3}+22\bigl(\kappa^{(1)}_{2}\bigr)^{2}
+5\kappa^{(1)}_{3}(\delta/2-\alpha-n)-24b^{(1)}_{n}}%
{(\alpha+\delta/2+n+4)(4\alpha+2\delta+4n-3)}\right), \\
\label{4thcumu2}
  \kappa^{(2)}_{4} &= \frac{3}{4}\left(\frac{-2\kappa^{(2)}_{2}
+20\kappa^{(2)}_{1}\kappa^{(2)}_{3}+32\bigl(\kappa^{(2)}_{2}\bigr)^{2}
+5\kappa^{(2)}_{3}(\delta/2-\alpha-2n)}{(\alpha+\delta/2+2n)^{2}-9}\right)  .
%\kappa^{(4)}_{4} &=
%\frac{3}{2}\left(\frac{-2\kappa^{(4)}_{2}+40\kappa^{(4)}_{1}\kappa^{(4)}_{3}+88(\kappa^{(4)}_{2})^{2}+5\kappa^{(4)}_{3}(d-\alpha-4n)-6b_{n}^{(4)}}{(2\alpha+2d+8n+3)(\alpha+d+4n-8)}\right)\\
\end{align}
\end{subequations}
When $\beta=1$, formula~\eqref{4thcumu1} involves the constant
$b^{(1)}_{n}$, whose value is given explicitly in Eq.~\eqref{bn1form}.
At first sight it seems that both cumulants~\eqref{4thcumu1}
and~\eqref{4thcumu2} should be of order $O(n^{-2})$ as $n \to \infty$,
since $\kappa^{(\beta)}_{j}$ is $O(n^{2-j})$ for $j=1,2$ and
$\kappa^{(\beta)}_{3}$ is $O(n^{-3})$. A closer inspection reveals
that \eqref{4thcumu2} contains mutually cancelling terms which lead to
$\kappa^{(2)}_{4}$ being $O(n^{-4})$. When $\beta=1$ these
cancellations are suppressed by the constant $b^{(1)}_{n}$ causing
$\kappa^{(1)}_{4} = O(n^{-3})$. From the recursion
relation~\eqref{reccond}, one easily sees that this ``staircase''
effect persists in all higher cumulants of the conductance. When
$\beta=4$ we observe a similar behaviour to the case $\beta=1$.
\end{remark}
%corrected some english here and removed $j=1,2,3$ - not valid for j=3, \kappa_{3} is O(n^{-3})
\subsection{Differential-Difference Equation: Joint Conductance and
  Shot Noise}
\label{ss:mdd_eqw}
The aim of the this subsection is to prove Theorem~\ref{pde:mix_csn}
by deriving the non-linear PDE~\eqref{diffeqshot}.  The strategy, as in %%nonlinear -> non-linear
the proof of Theorem~\ref{ode:con}, is to combine the Pfaff-KP
equation~\eqref{pkpequation} satisfied by the
$\tau$-function~\eqref{eq:tf_jmcsn} with the Virasoro constraints.

First  observe the trivial identities
\begin{subequations}
  \label{eq:id_mcsn}
  \begin{align}
    \label{shotsimprel}
    \tau^{(\beta)}_{n}(\mathbf{t};z,w)\Bigr\rvert_{\mathbf{t}=0} & =
    \frac{c^{(\beta)}_{n}}{n!}\mathcal{M}^{(\beta)}_{n}(z,w),\\
    \label{t1dirshot}
\frac{\partial}{\partial t_{1}}
  \tau^{(\beta)}_{n}(\mathbf{t};z,w) & = 
   -\frac{\partial}{\partial z}\tau^{(\beta)}_{n}(\mathbf{t};z,w),\\
   \label{t2dirshot}
   \frac{\partial}{\partial t_{2}}\tau^{(\beta)}_{n}(\mathbf{t};z,w)&  =
   \frac{\partial}{\partial
     w}\tau^{(\beta)}_{n}(\mathbf{t};z,w)-\frac{\partial}{\partial
     z}\tau^{(\beta)}_{n}(\mathbf{t};z,w).
  \end{align}
\end{subequations}
Therefore, the partial derivatives of any order with respect to $t_{1}$ and $t_{2}$ of 
%21/03 - this used to read "partial derivatives of any order with respect to $t_{1}$ and $t_{2}$ at arbitrary order
$\tau_n^{(\beta)}(\mathbf{t};z,w)$ can be expressed in terms of the
partial derivatives of $\mathcal{M}^{(\beta)}_{n}(z,w)$. It follows
that the only missing term in the Pfaff-KP equation
\eqref{pkpequation} is 
\begin{equation}
\label{t1t3need}
\frac{\partial^{2}}{\partial t_{1} \partial t_{3}}
\log \tau^{(\beta)}_{n}(\mathbf{t},z,w)\biggr\rvert_{\mathbf{t}=0}.
\end{equation}

%Since the operator \eqref{shotvira1} annihilates $\tau_{n}(\mathbf{t},z,w)$, it acts in the following way on $\log\tau_{n}(\mathbf{t},z,w)$:
%\begin{subequations}
%\begin{equation}
%\label{logshotvira1}
%\begin{split}
%&\left(\sum_{i=1}^{\infty}it_{i}
%\left(\frac{\partial}{\partial t_{i+1}}
%\ -\frac{\partial}{\partial t_{i}}\right)+
%\Bigl(\alpha+z+\delta/2+w+\bigl(n\beta+2-\beta\bigr)\Bigr)
%\frac{\partial}{\partial t_{1}}\right.\\
%&\quad\left. -(z+3w)\frac{\partial}{\partial t_{2}}+2w\frac{\partial}{\partial
%  t_{3}}\right)\log\tau_{n}(\mathbf{t},z,w)=\alpha n+n((n-1)\beta/2+1)
%\end{split}
%\end{equation}
%\end{subequations}
%Then, by differentiating (\ref{logshotvira1}) with respect to $t_{1}$ and taking the projection at $\mathbf{t}=\mathbf{0}$, we obtain
Writing the constraint \eqref{shotvira1} in terms of $\log\tau_{n}(\mathbf{t},z,w)$ and taking the derivative with respect to $t_{1}$, the projection at $\mathbf{t}=\mathbf{0}$ yields the relation %I modified this a bit because it was not really clear that there was an important intermediate step of writing {shotvira1} in terms of log\tau. Above (commented out) there is more detail but it's not really necessary.
\begin{multline}
\label{t1t3dirshot}
\left(\frac{\partial}{\partial t_{2}}
-\frac{\partial}{\partial t_{1}}
+(\alpha+\delta/2
+z+w+n\beta+2-\beta)\frac{\partial^{2}}{\partial t_{1}^{2}}\right.\\
\left.\left.-(z+3w)\frac{\partial^{2}}{\partial t_{1} \partial t_{2}}
+2w\frac{\partial^{2}}{\partial t_{1}\partial t_{3}}\right)
\log \tau(\mathbf{t};z,w)\right\rvert_{\mathbf{t}=0}=0.
\end{multline}
Using the relations~\eqref{eq:id_mcsn} we can solve this equation 
for~\eqref{t1t3need}, which can then be written in terms of the
derivatives of $\mathcal{M}_n^{(\beta)}(z,w)$.  Eventually, after some
algebra, we obtain the PDE~\eqref{diffeqshot}. %"owe" changed to "we"

\subsection{Non-Perturbative Recurrence Relation for the Joint
  Cumulants}
\label{ss:rr_jcum}

The last result that we need in order to pursue an asymptotic analysis of the %added "in order" to make it easier to read? feel free to revert.
joint cumulants of conductance and shot noise is the following.
\begin{lemma}
\label{le:jc_rr}
  The cumulants of the joint probability density function of
  conductance and shot noise obey the recurrence relation
 \begin{multline}
\label{shotnoiserec}
k(\eta_{1,4}^{(\beta)}\kappa^{(\beta)}_{l+4,k-1}
-\kappa^{(\beta)}_{l+2,k-1})-2(\alpha+\delta/2+n\beta+2-\beta)
\kappa^{(\beta)}_{l+2,k}+(2l+3k+2)\kappa^{(\beta)}_{l,k+1}\\
+6\eta_{1,4}^{(\beta)}k\sum_{i=0}^{k-1}\binom{k-1}{i}
\sum_{j=0}^{l}\binom{l}{j}\kappa^{(\beta)}_{j+2,i}
\kappa^{(\beta)}_{l-j+2,k-i-1}=\frac{12b^{(\beta)}_{n}}{\beta}k
\mu^{(\beta)}_{l,k-1}
\end{multline}
with boundary conditions $\left
  (\kappa_{l,0}^{(\beta)}\right)_{l=1}^\infty = \left
  (\kappa_{l}^{(\beta)}\right )_{l=1}^\infty$, which are the solutions
of the %slightly altered for readability
difference equation~\eqref{reccond}.  The coefficients
$\mu_{l,k}^{(\beta)}$ are defined by the recurrence relation
\begin{equation}
\label{jointrecmu}
\mu^{(\beta)}_{l,k} = \sum_{i=0}^{k-1}\binom{k-1}{i}
\sum_{j=0}^{l}\binom{l}{j}r^{(\beta)}_{l-j,k-i}\mu^{(\beta)}_{j,i}
\end{equation}
with boundary conditions $\left (\mu_{l,0}^{(\beta)}\right
)_{l=1}^\infty = \left (\mu_{l}^{(\beta)}\right )_{l=1}^\infty$ %21/03 starting at l=0 replaced with l=1 on rhs
given by the solution of Eq.~\eqref{recmomcum}. The coefficients
$r_{l,k}^{(\beta)}$ are the ``reduced joint cumulants''
\begin{equation}
\label{jredcumdef}
r^{(\beta)}_{l,k} := \kappa^{(\beta)}_{l,k}\Bigr\rvert_{n \to
  n-i(\beta)}
+\kappa^{(\beta)}_{l,k}\Bigr\rvert_{n \to n+i(\beta)}-2\kappa^{(\beta)}_{l,k}.
\end{equation}
The parameters $\eta_{1,4}^{(\beta)}$, $i(\beta)$ and $b_n^{(\beta)}$
are the same that appeared in Theorem~{\rm \ref{pde:mix_csn}} and
were defined in Eqs.~\eqref{bnbeta} and~\eqref{def_par}.
\end{lemma}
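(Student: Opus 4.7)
My plan is to prove Lemma~\ref{le:jc_rr} by substituting the bivariate Taylor expansion~\eqref{mixcumugen} into the PDE~\eqref{diffeqshot} of Theorem~\ref{pde:mix_csn} and matching coefficients of $z^l w^k$ on both sides, in direct analogy with the proof of Lemma~\ref{le:cc_rr}. Concretely, writing
\begin{equation*}
\sigma^{(\beta)}_n(z,w) = \sum_{l,k \geq 0} \frac{(-1)^{l+k} \kappa^{(\beta)}_{l,k}}{l!\,k!}\, z^l w^k,
\end{equation*}
every partial derivative $\partial_z^a \partial_w^b \sigma^{(\beta)}_n$ is a series whose $z^l w^k$ coefficient is proportional to $\kappa^{(\beta)}_{l+a,\,k+b}/(l!\,k!)$ up to a sign, and multiplication by $z$ or by $w$ produces an additional index shift together with the combinatorial factor $l$ or $k$ respectively.

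First I would handle the five linear terms on the left-hand side of~\eqref{diffeqshot}. The term $\eta_{1,4}^{(\beta)} w\,\partial_z^4 \sigma^{(\beta)}_n$ contributes the $\kappa^{(\beta)}_{l+4,k-1}$ coefficient (multiplied by $k$ after the shift from $w$); the term $-w\,\partial_z^2 \sigma^{(\beta)}_n$ contributes the $\kappa^{(\beta)}_{l+2,k-1}$ coefficient (again with a $k$); the constant-coefficient piece $2(\alpha+\delta/2+n\beta+2-\beta)\,\partial_z^2 \sigma^{(\beta)}_n$ yields the $\kappa^{(\beta)}_{l+2,k}$ contribution; and the combination of $2z\,\partial^2_{zw}\sigma^{(\beta)}_n$, $3w\,\partial_w^2 \sigma^{(\beta)}_n$ and $2\,\partial_w \sigma^{(\beta)}_n$ assembles into $(2l+3k+2)\kappa^{(\beta)}_{l,k+1}$ after factoring out the common sign and factorial factor $(-1)^{l+k}/(l!\,k!)$.

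Next I would treat the nonlinear quadratic term $6\eta_{1,4}^{(\beta)} w\,(\partial_z^2 \sigma^{(\beta)}_n)^2$. A Cauchy product of two copies of $\partial_z^2 \sigma^{(\beta)}_n$, each of bivariate type, produces a double convolution; reorganizing the factorials via the binomial identity $1/(j!\,(l-j)!) = \binom{l}{j}/l!$ in both indices, and then shifting $k \mapsto k-1$ to absorb the factor of $w$ in front, yields precisely the double sum in~\eqref{shotnoiserec}. For the right-hand side I would define the reduced joint cumulants $r^{(\beta)}_{l,k}$ as in~\eqref{jredcumdef}, so that $\sigma^{(\beta)}_{n-i(\beta)} + \sigma^{(\beta)}_{n+i(\beta)} - 2\sigma^{(\beta)}_n$ is their generating series; the bivariate Fa\`a~di~Bruno identity
\begin{equation*}
\exp\Biggl(\sum_{(l,k)\neq(0,0)}\frac{(-1)^{l+k} r^{(\beta)}_{l,k}}{l!\,k!} z^l w^k\Biggr) = 1 + \sum_{(l,k)\neq(0,0)}\frac{(-1)^{l+k}\mu^{(\beta)}_{l,k}}{l!\,k!} z^l w^k
\end{equation*}
together with the recursive differentiation of both sides gives the recurrence~\eqref{jointrecmu}; multiplying by $w$ again shifts $k\mapsto k-1$ and produces the factor $k$ multiplying $\mu^{(\beta)}_{l,k-1}$. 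The boundary conditions at $k=0$ are immediate: setting $w=0$ in~\eqref{mixcumugen} recovers $\sigma^{(\beta)}_n(z)$, so $\kappa^{(\beta)}_{l,0} = \kappa^{(\beta)}_l$ and $\mu^{(\beta)}_{l,0} = \mu^{(\beta)}_l$, where these latter quantities are determined by Lemma~\ref{le:cc_rr} and~\eqref{recmomcum}.

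The main obstacle is the combinatorial bookkeeping: tracking the $(-1)^{l+k}$ signs, factorial denominators, and the interplay between index shifts from differentiation and from multiplication by $z$ or $w$. The most delicate step is the quadratic Cauchy product, where the two binomial coefficients $\binom{l}{j}$ and $\binom{k-1}{i}$ in~\eqref{shotnoiserec} must be recovered simultaneously, and the shift induced by the outer factor of $w$ must be applied consistently to both. Once the matching is performed for a generic $(l,k)$ with $k\ge 1$, and the $k=0$ slice is identified with Lemma~\ref{le:cc_rr}, the recurrence~\eqref{shotnoiserec} together with its boundary conditions determines $\kappa^{(\beta)}_{l,k}$ uniquely, since the coefficient $(2l+3k+2)$ of $\kappa^{(\beta)}_{l,k+1}$ is strictly positive and allows us to solve forward in $k$.
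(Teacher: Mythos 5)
Your proposal is correct and follows essentially the same route as the paper, which proves Lemma~\ref{le:jc_rr} by inserting the expansion~\eqref{mixcumugen} into the PDE~\eqref{diffeqshot}, matching coefficients of $z^l w^k$, handling the exponential on the right-hand side via the bivariate analogue of~\eqref{eq:rm_exp}, and taking the $k=0$ slice from Lemma~\ref{le:cc_rr} as boundary data. The coefficient bookkeeping you outline (the $(-1)^{l+k}/(l!\,k!)$ normalization, the index shifts from multiplication by $z$ and $w$, and the double Cauchy product producing $\binom{l}{j}\binom{k-1}{i}$) checks out term by term against~\eqref{shotnoiserec}.
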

The proof is analogous to that of Lemma~\ref{le:cc_rr} provided the
expansion~\eqref{eq:ts_cond} is replaced by~\eqref{mixcumugen}, the
ODE~\eqref{diffeqcond} by the PDE~\eqref{diffeqshot} and
Eq.~\eqref{eq:rm_exp} by
\begin{equation*}
  \exp\left(\sum_{l=0}^{\infty}\sum_{k=0}^{\infty}
   \frac{r^{(\beta)}_{l,k}z^{l}w^{k}}{l!k!}\right)
 =1 + \sum_{l=1}^{\infty}\sum_{k=0}^{\infty}
\frac{ \mu^{(\beta)}_{l,k}z^{l}w^{k}}{l!k!}. 
\end{equation*}

\begin{remark}
\label{ntc}
Setting $k=0$ in \eqref{shotnoiserec} leads to the 
relation
\begin{equation}
\label{miszero}
(l+1)\kappa^{(\beta)}_{l,1} = (\alpha+\delta/2+n\beta+2-\beta)
\kappa^{(\beta)}_{l+2,0},
\end{equation}
which holds for $\beta \in \{1,2,4\}$. For example, setting $l=0$ and
using \eqref{icreccond_2} gives an exact result for the average shot
noise:
\begin{equation*}
\kappa^{(\beta)}_{0,1} = \frac{1}{4}
\frac{n\bigl(2\alpha+2+\beta(n-1)\bigr)\bigl(\delta+2+\beta(n-1)\bigr)
\bigl(2\alpha+\delta+4+\beta(n-2)\bigr)}{\bigl(\alpha+\delta/2
+2+\beta(n-1)\bigr)\bigl(\alpha+\delta/2+3+\beta(n-1)\bigr)
\bigl(2\alpha+\delta+4+\beta(2n-3)\bigr)}.
\end{equation*}
Instead, setting $l=1$ shows that the covariance between conductance
and shot noise is related in a very simple way to the third cumulant
of the conductance given in Eq.~\eqref{icreccond_3}. When $\beta=2$
and $\alpha=\delta=0$ formula~\eqref{miszero} was derived for all
$l>1$ in \cite{OK09}.  Since when $k=0$ the right-hand side of the
recurrence relation~\eqref{shotnoiserec} does not contribute, we
conjecture that \eqref{miszero} holds not only for $\beta \in
\{1,2,4\}$, but also for any real $\beta>0$. %I'm not sure it is worth including this. It's not a very important conjecture?
\end{remark}

\begin{remark}
  For convenience, in Table~\ref{pie1} we list the first few values of
  the sequence $\left(\mu^{(\beta)}_{l,k}\right)_{l,k=0}^\infty$ 
   appearing on the right hand side
  of (\ref{reccond}). They are expressed in terms of the reduced joint
  cumulants $r^{(\beta)}_{l,k}$, defined in Eq.~\eqref{jredcumdef}. For ease of
notation we have dropped the superscript $(\beta)$ from the formulae in the table. 
Interchanging $l$ and $k$ in $\mu^{{(\beta)}}_{l,k}$ leads to the same expressions
but with the indices of $r^{(\beta)}_{i,j}$ interchanged. 
\end{remark}

\begin{table}[h]
\caption{The first few mixed reduced cumulants}
\centering
    \begin{tabular}{  |c  c | c  |}
    \hline \hline
    l & k & $\mu_{l,k}$ \\ \hline
    $0$ & $0$ & 1 \\ %\hline
    $1$ & $0$ & $r_{1,0}$ \\ %\hline
    $1$ & $1$ & $r_{0,1}r_{1,0}+r_{1,1}$ \\ %\hline
    $2$ & $0$ & $r_{2,0}+r_{1,0}^{2}$ \\ % \hline 
    $2$ & $1$ &
    $r_{2,1}+2r_{1,0}r_{1,1}+r_{0,1}r_{2,0}+r_{0,1}r_{1,0}^{2}$\\
    \hline \hline 
   % 2 & 2 & $r_{2,2}+2r_{0,1}r_{2,1}+2r_{1,0}r_{1,2}+2r_{1,1}^{2}+r_{0,2}r_{2,0}+r_{1,0}^{2}r_{0,2}+r_{0,1}^{2}r_{2,0}+4r_{0,1}r_{1,0}r_{1,1}+r_{0,1}^{2}r_{1,0}^{2}$ \\ \hline
    \end{tabular}
\label{pie1}
\end{table}
\begin{remark}
  By setting $(l,k)=(0,1)$ in \eqref{shotnoiserec} we obtain a
  non-perturbative formula for the variance of shot noise. When
  $\delta=0$ this quantity was already calculated
  in~\cite{KSS09,SSW08,KP10a}. Our formulae below are also valid for
  $\delta \neq 0$, and are expressed in terms of
  formulae~\eqref{icreccond} and \eqref{4thcumu}: 
%\begin{subequations}
%\label{snvarexact}
\begin{align*}
\kappa^{(4)}_{0,2} &=
\frac{1}{5}\left((2/3)(\alpha+\delta/2+4n-2)^{2}\kappa^{(4)}_{4}
  -4\kappa^{(4)}_{4}-24\bigl(\kappa^{(4)}_{2}\bigr)^{2}
+\kappa^{(4)}_{2}+3b_{n}^{(4)}\right),\\
\kappa^{(1)}_{0,2} &=
\frac{1}{5}\left((2/3)(\alpha+\delta/2+n+1)^{2}\kappa^{(1)}_{4}
-\kappa^{(1)}_{4}-6\bigl(\kappa^{(1)}_{2}\bigr)^{2}
+\kappa^{(1)}_{2}+12b^{(1)}_{n}\right).
\end{align*}
%\end{subequations}
\end{remark}
\section{Asymptotic Analysis}
\label{as_an}
We shall now prove Theorem~\ref{maintheorem}.  This means
solving the recurrence relations~\eqref{reccond}
and~\eqref{shotnoiserec} at leading order as
$n\to\infty$. It turns out that in this limit, and for each
$\beta \in \{1,2,4\}$, Eqs.~ \eqref{reccond} and \eqref{shotnoiserec} become %added some commas and an "and". Think it reads a bit better.
linear homogeneous recurrence relations which can be solved with
elementary methods.

\subsection{Cumulants of the  Conductance}
\label{cumuasympt}
Firstly, we need to study the asymptotic limit as $n \to \infty$ of
the difference equation~\eqref{reccond}, as its solution will provide
the boundary conditions for the leading order contribution to Eq.~\eqref{shotnoiserec}.

\begin{lemma}
\label{le:as_lim}
Let $\beta =1,4$ and suppose that $\alpha$ is independent of $n$,
i.e. $m=n +C$ for some constant $C$, where $m$ and $n$ are the quantum
channels in the left and right leads.  Then, the cumulants of the
conductance admit the asymptotic expansion
\begin{equation}
\label{largencond}
\kappa^{(\beta)}_{l} \sim
n^{-\nu(l)}\sum_{q=0}^{\infty}\kappa^{(\beta)}_{l}(q)n^{-q}, \quad n
\to \infty, \quad l\ge 3,
\end{equation}
where 
\begin{equation}
\label{epsilon}
\nu(l)= l -\epsilon(l) \quad \text{with} \quad 
\epsilon(l) =
\begin{cases}
1 & \text{if  $l$ is even,}\\ 
0 & \text{if $l$ is odd.}
\end{cases}
\end{equation}
Furthermore, the leading order contribution $\kappa^{(\beta)}_l(0)$ satisfies the
difference equation
\begin{multline}
\label{rec1}
-\beta^2(l+1)\kappa^{(\beta)}_{l+1}(0)
+\frac{1}{4}l(l-1)(4l-5)\kappa^{(\beta)}_{l-1}(0)\\
=\frac{3}{16\beta^2}l(l-1)(l-2)\bigl(l-3+\epsilon(l)\bigr)
\bigl(l-4+\epsilon(l)\bigr)\kappa^{(\beta)}_{l-3}(0), \quad l \ge 6,
\end{multline}
with initial conditions
\begin{subequations}
\label{oddbcgenparams}
\begin{align}
\label{oddbcgenparams_1}
\kappa^{(\beta)}_{3}(0) & = \frac{(\delta/2 -\alpha)(\alpha+\delta/2
  +2-\beta)}{4\beta^{4}},\\
\label{oddbcgenparams_2}
 \kappa^{(\beta)}_{5}(0) & = \frac{3(\delta/2 -\alpha)
(\alpha + \delta/2 + 2-\beta)}{4\beta^{6}},
\end{align}
\end{subequations}
for the odd cumulants and
\begin{equation}
\label{eq:even_ic_l}
\kappa^{(\beta)}_{4}(0) =
\left(\frac{\beta}{2}-1\right)\frac{3}{64\beta^{4}}, 
\qquad \kappa^{(\beta)}_{6}(0) = \left(\frac{\beta}{2}-1\right)
\frac{15}{128\beta^{6}}
\end{equation}
for the even cumulants. %odd and even were the wrong way around here. corrected.
\end{lemma}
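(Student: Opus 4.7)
The plan is to substitute the asymptotic ansatz $\kappa_l^{(\beta)} \sim n^{-\nu(l)}\sum_{q\ge 0}\kappa_l^{(\beta)}(q)n^{-q}$ into the finite-$n$ recurrence~\eqref{reccond} established in Lemma~\ref{le:cc_rr}, determine $\nu(l)$ self-consistently, and then extract Eq.~\eqref{rec1} by matching coefficients at the surviving order in $1/n$. First I would rescale~\eqref{reccond} by collecting the dominant powers of $n$ in each of its ingredients: the polynomial $A(l)$ behaves like $-(\beta n)^{2}(l+1)$, the factor $-l(2l-1)(\alpha-\delta/2+\beta n)$ like $-\beta n\, l(2l-1)$, and $\kappa_{1}^{(\beta)}$ like $n/2+(\alpha-\delta/2)/(2\beta)+O(1/n)$, read off from~\eqref{icreccond_1}. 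Using these together with the closed forms~\eqref{icreccond_2} and~\eqref{icreccond_3} for $\kappa_2^{(\beta)}$ and $\kappa_3^{(\beta)}$ as a base case, an induction on $l$ shows that $\kappa_l^{(\beta)}=O(n^{-l+\epsilon(l)})$, confirming $\nu(l)=l-\epsilon(l)$.

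The next step is the crucial cancellation. In the nonlinear sum on the left of~\eqref{reccond}, the contributions that scale highest in $n$ come from the two boundary terms $i=0$ and $i=l-1$, since only these pair $\kappa_1^{(\beta)}$ with $\kappa_l^{(\beta)}$. Using the explicit form of $B(i,l)$ in~\eqref{eq:pB} with $\chi_{1,2}^{(\beta)}=1$, I get $B(0,l)+l\,B(l-1,l)=2l(2l-1)$, so these boundary terms contribute $2\eta_{1,4}^{(\beta)}l(2l-1)\kappa_1^{(\beta)}\kappa_l^{(\beta)}$ to the sum. Since $\eta_{1,4}^{(\beta)}=\beta$ for $\beta\in\{1,4\}$ and $\kappa_1^{(\beta)}\sim n/2$, this piece precisely cancels the leading $\beta n\,l(2l-1)\kappa_l^{(\beta)}$ coming from the third term of~\eqref{reccond}. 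It is this exact cancellation that forces the genuine leading balance to emerge at the \emph{next} order in $1/n$.

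Having cancelled the spurious leading order, I would collect the surviving contributions at order $n^{1-l+\epsilon(l)}$:
\begin{itemize}
\item from $A(l)\kappa_{l+1}^{(\beta)}$, which yields $-\beta^{2}(l+1)\kappa_{l+1}^{(\beta)}(0)$;
\item from $-l(l-1)(l-2)\kappa_{l-1}^{(\beta)}$, together with the interior terms $i\in\{1,\dots,l-2\}$ of the sum, where only pairs involving $\kappa_{2}^{(\beta)}\sim(\beta/2-1)/(4\beta)$ survive at this order, combining to produce the coefficient $\tfrac14 l(l-1)(4l-5)$ of $\kappa_{l-1}^{(\beta)}(0)$;
\item from the subleading piece $(\alpha-\delta/2)/(2\beta)$ of $\kappa_1^{(\beta)}$ in the boundary sum, which cancels the subleading $(\alpha-\delta/2)$-piece of the third term of~\eqref{reccond}, preventing explicit $(\alpha,\delta)$-dependence in~\eqref{rec1};
\item from the right-hand side $(12b_{n}^{(\beta)}/\beta)l(l-1)(l-2)\mu_{l-3}^{(\beta)}$, where the large-$n$ asymptotics of the Selberg ratio $b_{n}^{(\beta)}$ in Appendix~\ref{bnapp} and the combinatorial recursion~\eqref{recmomcum} for $\mu_{l-3}^{(\beta)}$ (expanded using $r_{k}^{(\beta)}\sim (i(\beta))^{2}\,\partial_{n}^{2}\kappa_{k}^{(\beta)}$) produce the $\tfrac{3}{16\beta^{2}}l(l-1)(l-2)(l-3+\epsilon(l))(l-4+\epsilon(l))\kappa_{l-3}^{(\beta)}(0)$ on the right of~\eqref{rec1}.
\end{itemize}
Matching these three surviving pieces yields~\eqref{rec1}. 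The initial conditions~\eqref{oddbcgenparams} and~\eqref{eq:even_ic_l} are then obtained by Taylor expanding the closed-form expressions~\eqref{icreccond} in $1/n$ and by applying the finite-$n$ recurrence~\eqref{reccond} one or two steps to compute $\kappa_{4}^{(\beta)}$, $\kappa_{5}^{(\beta)}$ and $\kappa_{6}^{(\beta)}$ at leading order.

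The main obstacle is the bookkeeping in the previous step: because the genuine leading balance only appears after a first-order cancellation, one must simultaneously track the subleading piece of $\kappa_{1}^{(\beta)}$, the large-$n$ asymptotics of $b_{n}^{(\beta)}$, and the $1/n$ expansion of $\mu_{l-3}^{(\beta)}$ through~\eqref{recmomcum}. In particular, verifying that the $\alpha$- and $\delta$-dependent contributions drop out, and that the coefficient on the right of~\eqref{rec1} indeed comes out as $3/(16\beta^{2})$ with the precise dependence on $\epsilon(l)$, is the most delicate part. The appearance of $\epsilon(l)$ reflects the fact that $r_{k}^{(\beta)}$ scales differently from $\kappa_{k}^{(\beta)}$ depending on whether $k$ is even or odd, which ultimately is inherited from the parity structure already visible in~\eqref{oddbcgenparams}-\eqref{eq:even_ic_l}.
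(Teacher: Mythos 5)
Your proposal follows essentially the same route as the paper: substitute the ansatz into the finite-$n$ recurrence \eqref{reccond}, establish $\nu(l)=l-\epsilon(l)$ by induction, observe that the boundary terms $i=0,\,l-1$ of the nonlinear sum cancel the $-\beta n\,l(2l-1)\kappa_l^{(\beta)}$ term (you make this more explicit than the paper by computing $B(0,l)+lB(l-1,l)=2l(2l-1)$ and noting $\eta_{1,4}^{(\beta)}=\beta$), and then balance the surviving contributions from $A(l)\kappa_{l+1}^{(\beta)}$, from $\kappa_{l-1}^{(\beta)}$ together with the $i=1,\,l-2$ terms, and from the reduced-moment right-hand side via $r_k^{(\beta)}\sim i(\beta)^2\nu(k)(\nu(k)+1)\kappa_k^{(\beta)}(0)n^{-\nu(k)-2}$ and $b_n^{(\beta)}\to 1/256$. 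Two local slips should be fixed, though. First, you quote $\kappa_2^{(\beta)}\sim(\beta/2-1)/(4\beta)$; the correct limit is $\kappa_2^{(\beta)}\to 1/(8\beta)$ (Eqs.~\eqref{gpvar}, \eqref{eq:k2as}), and this value is essential: the coefficient $\tfrac14 l(l-1)(4l-5)$ arises as $-l(l-1)(l-2)+2\beta\cdot\tfrac{1}{8\beta}\,l(l-1)(8l-13)$, and your stated value of $\kappa_2^{(\beta)}$ would give a different (wrong) polynomial. Second, the surviving order at which the balance \eqref{rec1} is read off is $n^{-(l+\epsilon(l)-2)}=n^{2-l-\epsilon(l)}$ (cf.\ the prefactor in \eqref{lhslimcond}), not $n^{1-l+\epsilon(l)}$ as you write; the latter is the order of the $\kappa_1^{(\beta)}\kappa_l^{(\beta)}$ terms that cancel, not of the terms that survive. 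Neither slip affects your final equations, which are correct, but both would need to be repaired for the bookkeeping to close.
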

Since the parameters $\alpha$ and $\delta$ are independent of $n$, it
follows from the finite-$n$ recurrence relation~\eqref{reccond}
and from the formulae for its coefficients $A(l)$, $B(i,l)$ and
$b_n^{(\beta)}$, see Eqs.~\eqref{eq:pAB} and~\eqref{bnforms}, that the
cumulants are rational functions of $n$; therefore, they admit an
asymptotic expansion of the form~\eqref{largencond}.  The initial
conditions~\eqref{oddbcgenparams} and~\eqref{eq:even_ic_l} are
derived  using the finite-$n$ recurrence relation~\eqref{reccond} and
by computing the leading order term as $n \to \infty$.  They are
consistent with the expansion~\eqref{largencond}.

We shall prove the leading order difference equation~\eqref{rec1} by
appropriately rescaling the recurrence~\eqref{reccond} and using a
dominant balance argument. The explicit expression~\eqref{epsilon} for
the exponent $\nu(l)$ will follow by induction on $l$; that is, we will
assume that $\kappa^{(\beta)}_{l}, \kappa^{(\beta)}_{l-1}, \ldots,
\kappa^{(\beta)}_{3}$ have the asymptotic expansion \eqref{largencond}
and prove that $\kappa^{(\beta)}_{l+1}$ does too for any $l \geq 6$.
%I modified the paragraph above. It wasn't clear before that the expression for \nu follows from mathematical induction.
We shall discuss the left- and right-hand sides of Eq.~\eqref{reccond}
separately.

\subsubsection{Proof of the left-hand side of Eq.~\eqref{rec1}}
\label{sss:lhs_c}
Recall that 
\begin{align*}
  A(l) & := \eta^{(\beta)}_{1,4}l(l-1)(l-2)+\beta
  l(2l-1) - (6-3\beta)(\alpha+\delta/2+\beta n+2-\beta)l \\
  & \quad \; -(\alpha+\delta/2+\beta n)
  (\alpha+\delta/2+\beta n+2-\beta)(l+1)
\intertext{and}
B(i,l) &:= (2-\chi_{1,2}^{(\beta)})(l-i)\Bigl((l-i)(6i-2)+3\Bigr)
+(\chi_{1,2}^{(\beta)}-1)(l-i)^{2}(6i+2).
\end{align*}
The second term in the left-hand side of Eq.~\eqref{rec1} is 
\begin{multline}
\label{lhslimcond}
\lim_{n \to \infty}n^{l+\epsilon(l)-2}
\Biggl(\eta_{1,4}^{(\beta)}\sum_{i=0}^{l-1}\binom{l}{i}\kappa^{(\beta)}_{i+1}
\kappa^{(\beta)}_{l-i}B(i,l)-l(2l-1)(\alpha-\delta/2 +\beta n)
\kappa^{(\beta)}_{l}\\
-l(l-1)(l-2)\kappa^{(\beta)}_{l-1}\Biggr)
=\frac{1}{4}l(l-1)(4l-5)\kappa^{(\beta)}_{l-1}(0), \quad l\ge 6.
\end{multline} %I modified the paragraph below; \ref{lhslimcond} is a consequence of the inductive hypothesis (i.e. that we can use \ref{largencond}).
The computation of this limit is based on the following
considerations. Because of the asymptotic expansion
(\ref{largencond}), we see that each term of the sum in
Eq.~\eqref{lhslimcond} is of order
$\kappa^{(\beta)}_{i+1}\kappa^{(\beta)}_{l-i} =
O(n^{-l-\epsilon(l)+1})$ and so it does not contribute at leading
order. The exceptions to this come only from terms with
$i\in\{0,1,l-2,l-1\}$, which yield systematic contributions to the %%added a comma before which.
right-hand side of (\ref{lhslimcond}). By taking only these values of
$i$ in (\ref{lhslimcond}), together with the asymptotic formulae,
%When $l=6$ this limit can be verified using Eqs.\eqref{oddbcgenparams}
%and~\eqref{eq:even_ic_l} as well as the expansions
\begin{subequations}
\begin{align}
  \label{eq:k1as}
   \kappa_1^{(\beta)} & = \frac{n}{2} + \frac{\alpha -
     \delta/2}{2\beta} + O\left(n^{-1}\right), \qquad n\to
   \infty ,\\
    \label{eq:k2as}
     \kappa_2^{(\beta)}& = \frac{1}{8\beta} + O(n^{-1}), \qquad n \to \infty,
\end{align}
\end{subequations}
one easily derives (\ref{lhslimcond}) after a number of delicate cancellations.
%When $l$ is odd the limit~\eqref{lhslimcond} is straightforward, in
%the sense that all the terms not involving $\kappa_{l-1}^{(\beta)}$
%are of lower order and tend to zero as $n\to \infty$.  When $l$ is
%even this is not the case and the outcome is the consequence of
%delicate cancellations.

\subsubsection{Proof of the right-hand side of Eq.~\eqref{rec1}}
 
The right-hand side of the difference equation~\eqref{rec1} is
inherited from the right-hand side of the Pfaff-KP
equation~\eqref{pkpequation}, which is the cause of the increased
complexity of the symmetry classes with $\beta=1,4$ compared to those
with $\beta=2$. It is not surprising that it requires more attention
than the left-hand side.

The main problem when trying to solve directly the
ODE~\eqref{diffeqcond} or the finite-$n$ recurrence~\eqref{reccond} is
that both are difference equations in $n$ too.  However, one would
expect that this should not affect the leading order asymptotics of
Eq.~\eqref{reccond} as $n \to \infty$. Thus, the first step is to
compute the leading order term of the reduced moments
$\mu_{l}^{(\beta)}$, which by definition include cumulants of dimensions
$n \pm i(\beta)$.

Recall that the $\mu^{(\beta)}_{l}$'s are expressed in terms of the reduced
cumulants 
\begin{equation}
\label{eq:red_cum_2}
r^{(\beta)}_{l} := \kappa^{(\beta)}_{l}\Bigl\lvert_{n \to n-i(\beta)}
+\kappa^{(\beta)}_{l}\Bigr \rvert_{n \to n+i(\beta)}-2\kappa^{(\beta)}_{l}
\end{equation}
via the recurrence relation~\eqref{recmomcum}.  Since the parameters
$\alpha$ and $\delta$ are independent of $n$, by
assuming~\eqref{largencond} and inserting it into~\eqref{eq:red_cum_2},
we see that  $r^{(\beta)}_l$ admits the asymptotic expansion 
\begin{equation}
\label{reducedexpansion}
r^{(\beta)}_{l} \sim n^{-\nu(l)-2}\sum_{q=0}^{\infty}r^{(\beta)}_{l}(q)n^{-q},
\quad n \to \infty, \quad l \ge 6
\end{equation}
and that the leading order term is
\begin{equation}
\label{reducedlead}
\begin{split}
r^{(\beta)}_{l}(0) & = i(\beta)^2\nu(l)\bigl(\nu(l)+1\bigr)
\kappa^{(\beta)}_{l}(0)  \\
& = i(\beta)^2\bigl(l-\epsilon(l)\bigr)\bigl(l+1-\epsilon(l)\bigr)
\kappa^{(\beta)}_{l}(0).
\end{split}
\end{equation}

In order to relate the asymptotic limit of the reduced moments
$\mu^{(\beta)}_l$ to that of the $r^{(\beta)}_l$'s, it is
convenient to use the formula
\begin{equation}
\label{partitionformula}
\mu^{(\beta)}_{l} = \sum_{\pi}\prod_{B_i \in \pi}r^{(\beta)}_{\abs{B_i}},
\end{equation}
rather than the recurrence relation~\eqref{recmomcum}. In the
right-hand side of~\eqref{partitionformula}, $\pi=\{B_1,\dotsc,B_p\}$,
$1\le p \le l$ runs through all the partitions of a set
$U=\{1,2,\ldots,l\}$ of $l$ elements into disjoint
subsets.  %in disjoint elements replaced by "into disjoint subsets"
The notation $\left \lvert B_i \right\rvert$ indicates the cardinality
of the subset $B_i\subseteq U$. Now, the contribution of a partition
$\pi$ to the sum~\eqref{partitionformula} is given by the product
\begin{equation}
\label{bellanalysis}
r^{(\beta)}_{\abs{B_{1}}}\dotsm r^{(\beta)}_{\abs{B_{p}}} =
O\left(n^{\sum_{i=1}^{p}\left[-\abs{B_{i}}-2
+\epsilon\left(\abs{B_{i}}\right)\right]}\right)
=O\bigl(n^{-l-2p + \abs{\pi^{e}}}\bigr), \quad n \to \infty,
\end{equation}
where $\abs{\pi^{\rm e}}$ denotes the number of sets in the partition
$\pi$ with even cardinality. Since $\abs{\pi^{\rm e}}\leq p$, the asymptotic contribution of the product~\eqref{bellanalysis} is %removed asymptotically, replaced with "the asymptotic contribution". More fundamentally, removed the l/2 bound (how do we know l is even)? Now it is \leq p. Clearly the number of sets with even cardinality is \leq to the total number of sets. This is still enough to demonstrate that the asymptotic contribution from the product is greatest when p=1.
greatest when $p=1$, which implies $\pi = U$ and  $\abs{\pi^{\rm
    e}}=\epsilon(l)$. Since the number of terms in the
sum~\eqref{partitionformula} remains finite as $n\to \infty$, by
setting $p=1$ and combining Eqs.~\eqref{bellanalysis}
and~\eqref{reducedlead} we obtain %grammar: removed "by setting $p=1$ and combining ... gives"
\begin{equation}
\label{limitofmu}
\lim_{n \to \infty}n^{l+\epsilon(l)-2}\mu^{(\beta)}_{l-3} =
r^{(\beta)}_{l-3}(0)= i(\beta)^2
\bigl(l-3+\epsilon(l)\bigr)\bigl(l-4+\epsilon(l)\bigr)\kappa^{(\beta)}_{l-3}(0). 
\end{equation}

In order to complete the proof of Lemma~\ref{le:as_lim} we %removed "is enough". still need to conclude that \ref{largencond} is correct
observe that formula~\eqref{bnforms} gives %formulae replaced with formula
\begin{equation}
\label{limitofb}
b^{(\beta)}_{n}=\frac{1}{256} + O\left(n^{-1}\right), \quad n \to \infty.
\end{equation}
Since when $\beta \in \{1,4\}$, $i(\beta)^2 = 4/\beta$, we arrive at
\begin{equation}
\label{limitingrhs}
  \begin{split}
  \lim_{n\to \infty} n^{l +\epsilon(l)-2} 
\left(12 b_n^{(\beta)}/\beta\right)l(l-1)(l-2)
  \mu_{l-3}^{(\beta)}  & = \frac{3}{16\beta^2}l(l-1)(l-2)\\
   & \quad \times \bigl(l-3 +
  \epsilon(l)\bigr)\bigl(l-4+\epsilon(l)\bigr)\kappa_{l-3}^{(\beta)}.
  \end{split} 
\end{equation}
Combined with the limit \eqref{lhslimcond}, Eq. \eqref{limitingrhs} implies that the asymptotic expansion \eqref{largencond} is valid for every $l\geq3$ by induction. That is, we have
\begin{equation}
  \label{eq:f_l}
  \lim_{n\to \infty} n^{l + \epsilon(l)  - 2}A(l)\kappa^{(\beta)}_{l + 1}
  = -\beta^2(l+1)\kappa_{l+1}^{(\beta)}(0).
\end{equation}
Finally, inserting these limits into \eqref{reccond} yields the recurrence relation \eqref{rec1}.
\subsection{The Solution of the Limiting Recurrence
  Relation~\eqref{rec1}}
\label{ss:sol_rec_cond}
The limiting recurrence relation~\eqref{rec1} is linear and can be
solved using the method of generating functions.  Since either the %"the method of the generating functions" replaced with "the method of generating functions"
even or the odd cumulants appear in~\eqref{rec1}, we introduce
\begin{subequations}
 \label{eq:cgf_s}
  \begin{align}
   \label{oddgenfun1}
  F^{(\beta)}_{\rm o}(x) & := \sum_{l=1}^{\infty}
   \frac{\kappa^{(\beta)}_{2l+1}(0)}{(2l)!}x^{l},\\
  \label{evengenfun1}
    F^{(\beta)}_{\rm e}(x) & := \sum_{l=2}^{\infty}
    \frac{\kappa^{(\beta)}_{2l}(0)x^{l}}{(2l-1)!}.
  \end{align}
\end{subequations}

\begin{proposition}
  \label{pr:sol_rec_m}
  Let $\beta=1,4$. If at leading order the cumulants of the
  conductance satisfy the recurrence relation~\eqref{rec1} with
  initial conditions~\eqref{oddbcgenparams} and~\eqref{eq:even_ic_l},
  the generating functions~\eqref{eq:cgf_s} are
    \begin{subequations}
      \label{eq:cgf}
      \begin{align}
       \label{eq:ogf}
        F^{(\beta)}_{\rm o}(x)& = \frac{\left(\delta/2
          -\alpha\right)\left(\alpha + \delta/2 + 2 -
          \beta\right)}{2\beta^2}\frac{x/(4\beta^2)}{1- x/(4\beta^2)},\\
        \label{eq:egf}
       F^{(\beta)}_{\rm e}(x) & = \left(\frac{\beta}{2} - 1\right)
        \left(1 - \frac{x}{8\beta^2}
         - \sqrt{1 - x/(4\beta^2)}\right), 
      \end{align}
    \end{subequations}
for $-4\beta^2 < x < 4\beta^2$.
\end{proposition}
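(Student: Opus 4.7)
The approach is to solve the limiting recurrence \eqref{rec1} in closed form and then identify the generating functions. The key structural observation is that \eqref{rec1} couples only cumulants of the same parity: when $l$ is even the three indices $l+1,l-1,l-3$ are all odd, while when $l$ is odd they are all even. Hence \eqref{rec1} decouples into two independent three-term linear recurrences, one for $\{\kappa_{2m+1}^{(\beta)}(0)\}_{m\ge 1}$ and one for $\{\kappa_{2m}^{(\beta)}(0)\}_{m\ge 2}$. Because the coefficient of $\kappa_{l+1}^{(\beta)}(0)$ in \eqref{rec1} is $-\beta^{2}(l+1)\neq 0$, each sequence is uniquely determined by the two initial values supplied in \eqref{oddbcgenparams} and \eqref{eq:even_ic_l}. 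I therefore intend to guess a closed form for each sequence from the initial data, verify it by direct substitution into \eqref{rec1}, and then recognise the generating function.

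\textbf{Odd part.} The initial values \eqref{oddbcgenparams_1}--\eqref{oddbcgenparams_2} fit the ansatz
\begin{equation*}
\kappa_{2m+1}^{(\beta)}(0)=\frac{(\delta/2-\alpha)(\alpha+\delta/2+2-\beta)}{2\beta^{2}}\,\frac{(2m)!}{(4\beta^{2})^{m}},\qquad m\ge 1,
\end{equation*}
which also matches the leading asymptotics predicted by \eqref{oddmixed} for $k=0$. Substituting this into \eqref{rec1} with $l=2m$ (so $\epsilon(l)=1$, forcing the shifts $l-3+\epsilon(l)=2m-2$ and $l-4+\epsilon(l)=2m-3$), the overall prefactor $(\delta/2-\alpha)(\alpha+\delta/2+2-\beta)(2m-4)!/(4\beta^{2})^{m-2}$ divides out and the identity reduces to an elementary polynomial identity in $m$. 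Dividing by $(2m)!$ produces the coefficient $(4\beta^{2})^{-m}$ of $x^{m}$ in $F_{\rm o}^{(\beta)}(x)$, and summing the resulting geometric series yields exactly \eqref{eq:ogf} on $|x|<4\beta^{2}$.

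\textbf{Even part.} The initial values in \eqref{eq:even_ic_l} similarly suggest
\begin{equation*}
\kappa_{2m}^{(\beta)}(0)=\left(\frac{\beta}{2}-1\right)\binom{2m}{m}\frac{(2m-2)!}{(4\beta)^{2m}},\qquad m\ge 2,
\end{equation*}
which is consistent with \eqref{evenmixed} at $k=0$. I plan to verify this ansatz by substituting into \eqref{rec1} with $l=2m+1$ (so $\epsilon(l)=0$), using the binomial ratios $\binom{2m+2}{m+1}/\binom{2m}{m}=2(2m+1)/(m+1)$ and $\binom{2m-2}{m-1}/\binom{2m}{m}=m/(2(2m-1))$ to collapse everything to a polynomial identity in $m$. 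For the summation step, the coefficient of $x^{l}$ in $F_{\rm e}^{(\beta)}$ becomes $(\beta/2-1)\binom{2l}{l}/[(2l-1)(16\beta^{2})^{l}]$; using the identity $\binom{2l}{l}/(2l-1)=2C_{l-1}$ (where $C_{l}$ is the $l$th Catalan number) together with the classical Catalan generating function $\sum_{l\ge 0}C_{l}y^{l}=(1-\sqrt{1-4y})/(2y)$ gives $\sum_{l\ge 1}\binom{2l}{l}y^{l}/(2l-1)=1-\sqrt{1-4y}$. Subtracting the $l=1$ term, which equals $2y$, and setting $y=x/(16\beta^{2})$ produces \eqref{eq:egf} on $|x|<4\beta^{2}$.

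\textbf{Where the care lies.} The proof is essentially computational; neither of the two ansatz verifications is genuinely difficult. The main source of friction is book-keeping: in \eqref{rec1} one must correctly match the parity of $l$ with the value of $\epsilon(l)$ and with the shifts $l-3+\epsilon(l)$, $l-4+\epsilon(l)$ on the right-hand side, and one must separately check the edge values of $m$ not directly covered by the $l\ge 6$ hypothesis of \eqref{rec1} (these are already furnished by \eqref{oddbcgenparams} and \eqref{eq:even_ic_l}). Beyond this, the only non-trivial analytic input is the Catalan generating function, which accounts for the square-root singularity in the even-part formula; the geometric series accounts for the simple pole in the odd part.
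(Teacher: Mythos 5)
Your proposal is correct, and the computations check out: the closed forms you posit for $\kappa^{(\beta)}_{2m+1}(0)$ and $\kappa^{(\beta)}_{2m}(0)$ do match the initial data \eqref{oddbcgenparams}, \eqref{eq:even_ic_l}, they satisfy \eqref{rec1} identically (I verified both parity cases), and the summations via the geometric series and the Catalan generating function reproduce \eqref{eq:ogf} and \eqref{eq:egf}. However, your route is genuinely different from the paper's. The paper first rescales the cumulants as in \eqref{eq:r_cum} to strip out the dependence on $\alpha$, $\beta$, $\delta$, then converts the resulting parameter-free recurrence \eqref{rec1_r} into the first-order linear ODEs \eqref{eq:lin_odes} for the generating functions themselves, and solves those by integrating factors (Appendix~\ref{se:solrec}); no closed form for the individual coefficients is ever guessed. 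You instead work at the level of the coefficients: you posit explicit formulae (effectively the $k=0$ cases of \eqref{oddmixed} and \eqref{evenmixed}), verify them by substitution into the three-term recurrence, and only then sum. The trade-off is clear: the paper's method is constructive and would work even if the answer were not known in advance, while yours requires a correct ansatz but avoids any differential equations, relies only on two-term uniqueness of the step-two recurrence (which you correctly justify by noting the coefficient $-\beta^{2}(l+1)\neq 0$ and the two initial values per parity class), and has the side benefit of producing directly the coefficient formulae \eqref{eq:ic_as_o}--\eqref{eq:ic_as_e} that are needed later as initial conditions for the joint-cumulant analysis, which the paper must extract by re-expanding \eqref{eq:cgf}. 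Your attention to the parity bookkeeping in $\epsilon(l)$ and to the edge cases below the $l\ge 6$ threshold is exactly where the care is needed, and you have placed it correctly.
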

\begin{proof}
  In order to simplify the algebra, it is convenient to rescale the
  cumulants $\kappa_l^{(\beta)}$, so that the difference
  equation~\eqref{rec1} becomes independent of the parameters
  $\alpha$, $\beta$ and $\delta$.   Define
  \begin{equation}
    \label{eq:r_cum}
    \tilde{\kappa}_l = \frac{(2\beta)^{l +\epsilon(l) - 1}}%
    {C(\alpha,\delta,\beta)}
      \kappa_{l}^{(\beta)}(0), \qquad l\ge 3,
  \end{equation}
  where $C(\alpha,\delta,\beta)$ is constant in $l$ and is fixed by
  the initial conditions. In terms of the rescaled cumulants
  $\tilde{\kappa}_l$ Eq.~\eqref{rec1} becomes
\begin{multline}
\label{rec1_r}
-(l+1)\tilde{\kappa}_{l+1}
+l(l-1)(4l-5)\tilde{\kappa}_{l-1}\\
= 3l(l-1)(l-2)\bigl(l-3+\epsilon(l)\bigr)
\bigl(l-4+\epsilon(l)\bigr)\tilde{\kappa}_{l-3}, \quad l \ge 6.
\end{multline}
This recurrence relation is linear and independent of $\alpha$,
$\delta$ and $\beta$; therefore, these parameters can affect
the $\tilde{\kappa}_l$'s only through an overall multiplicative factor,
which can be removed by an appropriate choice of
$C(\alpha,\delta,\beta)$.   Then, we introduce
the \textit{rescaled} generating functions as
\begin{subequations}
  \label{eq:res_gf}
\begin{align}
  \label{eq:res_gf_odd}
  \tilde{F}_{\rm o}(y)& :=\frac{1}{C_{\rm o}(\alpha,\delta,\beta)}F^{(\beta)}_{\rm
    o}(4\beta^2 y) = \sum_{l=1}^{\infty}
   \frac{\tilde{\kappa}_{2l+1}(0)}{(2l)!}y^{l}, \\
   \label{eq:res_gf_even}
   \tilde{F}_{\rm e}(y)& :=\frac{1}{C_{\rm e}
    (\alpha,\delta,\beta)}F^{(\beta)}_{\rm
    e}(4\beta^2 y) = \sum_{l=2}^{\infty}
   \frac{\tilde{\kappa}_{2l}(0)}{(2l-1)!}y^{l}.
\end{align}
\end{subequations}

The initial conditions~\eqref{oddbcgenparams} and~\eqref{eq:even_ic_l}
suggest the choice of the  constants $C_{\rm o}$ and $C_{\rm e}$:
\begin{subequations}
\label{eq:C_ic}
\begin{align}
  \label{eq:C_ic_o}
  C_{\rm o}(\alpha,\delta,\beta) & = \frac{\left(\delta/2
          -\alpha\right)\left(\alpha + \delta/2 + 2 -
          \beta\right)}{2\beta^2}, \\
   \label{eq:C_ic_e}
 C_{\rm e}(\alpha,\delta,\beta) & = \left(\frac{\beta}{2} - 1\right).
\end{align}
\end{subequations}
Therefore, the rescaled initial conditions are
\begin{subequations}
  \label{eq:ic_res}
\begin{alignat}{2}
  \label{eq:ic_res_o}
  \tilde{\kappa}_3 & = 2, &\qquad  \qquad & \tilde{\kappa}_5 = 24,\\
     \label{eq:ic_res_e}
  \tilde{\kappa}_4 & = \frac34, && \tilde{\kappa}_{6} =\frac{15}{2}.
\end{alignat}
\end{subequations}

Combining the series expansions in the right-hand sides
of~\eqref{eq:res_gf_odd} and~\eqref{eq:res_gf_even} with the initial
conditions~\eqref{eq:ic_res_o} and~\eqref{eq:ic_res_e}, respectively,
turns the recurrence relation~\eqref{rec1_r} into the remarkably
simple ODEs
\begin{subequations}
  \label{eq:lin_odes}
  \begin{gather}
    \label{eq:lin_ode_o}
    2y(1-3y)(1-y)\frac{d\tilde{F}_{\rm o}(y)}{dy} + (6y^2 -3y +
  1)\tilde{F}_{\rm o}(y) -  3y(1-2y)=0, \\
  4(1-y)\frac{\tilde F_{\rm e}(y)}{dy} + 2\tilde F_{\rm e}(y) - y =0,
  \end{gather}
\end{subequations}
with initial conditions $\tilde{F}_{\rm o}(0)=0$ and $\tilde{F}_{\rm
  e}(0)= 0$, respectively. These equations can be integrated with
elementary methods.  We have
\begin{subequations}
  \label{eq:sim_solas}
  \begin{align}
    \label{eq:simp_sol_o}
     \tilde{F}_{\rm o}(y)& = \frac{y}{1-y},\\
       \label{eq:simp_sol_e}
     \tilde{F}_{\rm e}(y) & = 1 - \frac{y}{2} - \sqrt{1-y}.
  \end{align}
\end{subequations}
We give some additional details on this calculation in Appendix~\ref{se:solrec}.
\end{proof}
\begin{remark}
  The Taylor series of the right-hand sides of Eqs.~\eqref{eq:ogf}
  and~\eqref{eq:egf} are elementary and prove
  Theorem~\ref{maintheorem} when  $l\ge 3$ and $k=0$.
\end{remark}

\subsection{Joint Cumulants of Conductance and 
Shot Noise}
\label{ss:j_sn_c}

In order to complete the proof of Theorem~\ref{maintheorem}, we need
to extend the analysis of the leading order asymptotics of the
cumulants of $G$ to the joint cumulants of $G$ and $P$.   The strategy
follows a similar pattern to the approach used for the leading order
asymptotics of the conductance cumulants, but obviously the technical %%"asymptotics of the conductance" replaced with "asymptotics of the conductance cumulants"
difficulties increase.
\begin{lemma}
  \label{le:lim_rr_jc}
  Let $\beta=1$ or $\beta=4$ and assume that $\alpha$ is independent
  of $n$. %1,4 replaced with 1 or 4
  Then, for $(l,k) \notin \{(0,0),(1,0),(0,1),(0,2),(2,0)\}$
  the %replaced q with k here
  joint cumulants of the conductance and shot noise admit the
  asymptotic expansion
   \begin{equation}
    \label{eq:as_jc}
     \kappa_{l,k}^{(\beta)} \sim n^{-\nu(l,k)}\sum_{q=0}^\infty
     \kappa^{(\beta)}_{l,k}(q)n^{-q}, \quad n \to \infty,
   \end{equation}
  where $\nu(l,k) = l + k -\epsilon(l)$ and $\epsilon(l)$ is defined
  in~\eqref{epsilon}.   Furthermore, the leading order terms
  $\kappa^{(\beta)}_{l,k}(0)$ satisfy the double recurrence relation %14/03, added (0) here.
  \begin{multline}
\label{eq:l_rr_jc}
(2l+3k+2)\kappa^{(\beta)}_{l,k+1}(0)
-2\beta\kappa^{(\beta)}_{l+2,k}(0) +(k/2)\kappa^{(\beta)}_{l+2,k-1}(0)\\
=\frac{3k}{16\beta^2}
\bigl(l+k-1-\epsilon(l)\bigr)\bigl(l+k-\epsilon(l)\bigr)
  \kappa^{(\beta)}_{l,k-1}(0),
\end{multline}
with initial conditions given by the sequences
\begin{subequations}
    \label{eq:ic_as} \begin{align}
    \label{eq:ic_as_o}
    \kappa^{(\beta)}_{l,0}(0) & = (\delta/2 -\alpha)(\alpha + \delta/2 + 2 -
    \beta)\frac{(l-1)!}{\beta(2\beta)^l}, &&\text{when $l\ge 3$
      and odd,}\\
\intertext{and}
    \label{eq:ic_as_e}
\kappa_{l,0}^{(\beta)} (0)&
=\left(\frac{\beta}{2}-1\right)\frac{(l-2)!}{\left(4\beta\right)^l}
\binom{l}{l/2},  &&  
\text{when $l\ge 4$ and even.}
  \end{align}
\end{subequations}
\end{lemma}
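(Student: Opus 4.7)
The plan is to extend the asymptotic analysis of Sec.~\ref{cumuasympt} from the single-index recurrence~\eqref{reccond} to the two-index recurrence~\eqref{shotnoiserec}, following a completely parallel pattern. Since the coefficients of~\eqref{shotnoiserec} are rational in $n$ (the factor $b_n^{(\beta)}$ is explicitly rational by Appendix~\ref{bnapp}) and $\alpha,\delta$ are $n$-independent, the joint cumulants $\kappa_{l,k}^{(\beta)}$ are themselves rational functions of $n$, and therefore admit an asymptotic expansion of the form~\eqref{eq:as_jc}; what remains is to identify the correct exponent $\nu(l,k)=l+k-\epsilon(l)$ and to derive the limiting recurrence~\eqref{eq:l_rr_jc} for the leading coefficient.

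I would proceed by induction on $k$. The base case $k=0$ is supplied by Lemma~\ref{le:as_lim}: the initial conditions~\eqref{eq:ic_as} are exactly the Taylor coefficients of the generating functions~\eqref{eq:ogf} and~\eqref{eq:egf} produced in Proposition~\ref{pr:sol_rec_m}, and they obey $\nu(l,0)=l-\epsilon(l)$, consistent with the claim. For the inductive step, substitute the ansatz~\eqref{eq:as_jc} into~\eqref{shotnoiserec} and match powers of $n$. Using $\epsilon(l+2)=\epsilon(l)$, the linear terms $(2l+3k+2)\kappa_{l,k+1}^{(\beta)}$, $-k\kappa_{l+2,k-1}^{(\beta)}$, and $-2(\alpha+\delta/2+n\beta+2-\beta)\kappa_{l+2,k}^{(\beta)}$ (whose prefactor contributes an extra factor $\beta n$) all produce finite nonzero limits of common order $n^{-(l+k+1-\epsilon(l))}$, while $k\eta_{1,4}^{(\beta)}\kappa_{l+4,k-1}^{(\beta)}=O(n^{-l-k-3+\epsilon(l)})$ is two orders subleading and drops out.

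The quadratic sum is handled in the spirit of Sec.~\ref{sss:lhs_c}: generically a product $\kappa_{j+2,i}^{(\beta)}\kappa_{l-j+2,k-i-1}^{(\beta)}$ lies below the leading order, and only the boundary pairs $(j,i)\in\{(0,0),(l,k-1)\}$ survive, since each contributes the factor $\kappa_{2,0}^{(\beta)}\to 1/(8\beta)$ with nonzero $O(1)$ limit; the other cumulants $\kappa_{1,0}^{(\beta)}$, $\kappa_{0,1}^{(\beta)}$, $\kappa_{0,2}^{(\beta)}$ that escape the ansatz~\eqref{eq:as_jc} cannot appear, because the shifted indices $j+2$ and $l-j+2$ are always $\geq 2$. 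Adding the boundary contribution $6\eta_{1,4}^{(\beta)}k\cdot 2\cdot (1/(8\beta))=3\eta_{1,4}^{(\beta)}k/(2\beta)=3k/2$ (identical for $\beta=1$ and $\beta=4$) to the $-k$ from the linear piece yields the net coefficient $k/2$ of $\kappa_{l+2,k-1}^{(\beta)}(0)$ appearing in~\eqref{eq:l_rr_jc}. For the right-hand side I would lift the set partition argument around~\eqref{partitionformula} and~\eqref{bellanalysis} to two indices; the trivial one-block partition dominates, and combining $r_{l,k-1}^{(\beta)}(0)=(4/\beta)(l+k-1-\epsilon(l))(l+k-\epsilon(l))\kappa_{l,k-1}^{(\beta)}(0)$, obtained from a Taylor expansion as in~\eqref{reducedlead} with $i(\beta)^2=4/\beta$, with $b_n^{(\beta)}\to 1/256$ from~\eqref{limitofb} reproduces the factor $(3k/(16\beta^2))(l+k-1-\epsilon(l))(l+k-\epsilon(l))$.

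The main obstacle is the combinatorial bookkeeping of the quadratic sum and of the two-index reduced moments $\mu_{l,k}^{(\beta)}$: one must verify that every pair other than the two boundary ones is strictly subleading, that no hidden $O(n)$ or $O(1)$ cumulants enter through low-index accidents, and that in the partition-theoretic expansion of $\mu_{l,k-1}^{(\beta)}$ no partition with more than one block inflates the asymptotic order beyond the trivial partition. Once these checks are in place, the induction closes at each level and~\eqref{eq:l_rr_jc} follows, with the exponent $\nu(l,k)=l+k-\epsilon(l)$ propagating stably from the boundary data provided by Lemma~\ref{le:as_lim}.
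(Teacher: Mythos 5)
Your proposal is correct and follows essentially the same route as the paper: induction on $k$ starting from the $k=0$ data of Lemma~\ref{le:as_lim}, rationality of the cumulants to justify the expansion, isolation of the boundary pairs $(j,i)\in\{(0,0),(l,k-1)\}$ in the quadratic sum via the variance limit $\kappa_{2,0}^{(\beta)}\to 1/(8\beta)$, and the one-block-partition argument for $\mu_{l,k-1}^{(\beta)}$ combined with $r_{l,k-1}^{(\beta)}(0)=i(\beta)^2\nu(l,k-1)(\nu(l,k-1)+1)\kappa_{l,k-1}^{(\beta)}(0)$ and $b_n^{(\beta)}\to 1/256$. The only detail the paper handles more explicitly is the treatment of the low-$k$ base cases ($k=1,2$ requiring $l\neq 0$, then $k=3$ for all $l$) before the induction proper, but this does not change the substance of the argument.
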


The initial 
conditions~\eqref{eq:ic_as_o} and~\eqref{eq:ic_as_e} are obtained by
expanding the generating function in
Proposition~\ref{pr:sol_rec_m}. In the proof of Lemma~\ref{le:as_lim}, Sec.~\ref{cumuasympt}, it was
argued that the initial conditions $\kappa_{l,0}^{(\beta)}$ are
rational functions of $n$.  It immediately follows from
Eq.~\eqref{shotnoiserec} that the joint cumulants
$\kappa_{l,k}^{(\beta)}$ are rational functions of $n$ too and therefore have an asymptotic expansion in powers of $n^{-1}$ as $n \to \infty$. 

The precise form of the expansion     \eqref{eq:as_jc}, as well as the value of the exponent $\nu(l,k)$, will be determined by mathematical induction on $k$. First one must deal with a number of base cases. Note that if $k=0$, Eq. \eqref{eq:as_jc} reduces to our earlier result \eqref{largencond}; if $k=1,2$, Eq. \eqref{eq:as_jc} follows easily from the recurrence \eqref{shotnoiserec} provided $l\neq 0$. If $k=3$, we again use \eqref{shotnoiserec} to obtain \eqref{eq:as_jc}, this time for any $l\geq 0$. In the following we shall proceed by induction on $k$ to prove \eqref{eq:as_jc} for any $l\geq 0$ and $k>3$.
%Analogously to the conductance, I modified the above paragraph. We want to induct on k, but for small k the asymptotic expansion is not valid for every l. Therefore you have to treat k=0,1,2 separately. Then k=3 with l\geq 0 is the base case for proving the expansion inductively for k>3 \nd l\geq 0
We discuss the left- and right-hand sides of Eq.~\eqref{eq:l_rr_jc}
separately.  

\subsubsection{Proof of the left-hand side of Eq.~ \eqref{eq:l_rr_jc}}
\label{sss:lhs_jc}

By rescaling and taking the limit of the left-hand side of
Eq.~\eqref{shotnoiserec} we obtain
\begin{multline}
\label{shotasymptlhs}
\lim_{n \to \infty}n^{l+k+1-\epsilon(l)}\left(
  k(\eta_{1,4}^{(\beta)}\kappa^{(\beta)}_{l+4,k-1}
  -\kappa^{(\beta)}_{l+2,k-1})-2(\alpha+\delta/2+n\beta+2-\beta)
  \kappa^{(\beta)}_{l+2,k}\right.\\
\left. +(2l+3k+2)\kappa^{(\beta)}_{l,k+1} + 6\eta_{1,4}^{(\beta)}
  k\sum_{i=0}^{k-1}\binom{k-1}{i}\sum_{j=0}^{l}
  \binom{l}{j}\kappa^{(\beta)}_{j+2,i}\kappa^{(\beta)}_{l-j+2,k-i-1}\right)\\
= (2l+3k+2)\kappa^{(\beta)}_{l,k+1}(0)
-2\beta\kappa^{(\beta)}_{l+2,k}(0)
+ (k/2)\kappa^{(\beta)}_{l+2,k-1}(0).
\end{multline}
In order to prove this limit, one first has to verify that it holds
for the initial conditions~\eqref{eq:ic_as} by direct
substitution. Then, by assuming the expansion~\eqref{eq:as_jc}, we see
that the only leading order contributions from the double sum in
\eqref{shotasymptlhs} come from the choice of indices $(j,i)=(0,0)$
and $(j,i)=(l,k-1)$. Elementary manipulations involving the limiting
value of the variance of the conductance give the right-hand side of
\eqref{shotasymptlhs}.

%I modified the paragraph above. Obtaining the RHS is a consequence of the inductive hypothesis.
\subsubsection{Proof of the right-hand side of Eq.~\eqref{eq:l_rr_jc}}
\label{sss:rhs_jc}

As in the proof of Lemma~\ref{le:as_lim}, deriving the right-hand side of the
recurrence relation~\eqref{eq:l_rr_jc} presents more technical
difficulties than the left-hand side. The reason is that the finite-$n$ recurrence \eqref{shotnoiserec} involves the coefficients $\mu^{(\beta)}_{l,k}$, defined in terms of the auxiliary recurrence \eqref{jointrecmu} involving the two dimensional sequence of \textit{reduced} cumulants %%I spelt "auxillary" incorrectly. Should be "auxiliary".
\begin{equation}
\label{redgendefb1}
r^{(\beta)}_{l,k} := \kappa^{(\beta)}_{l,k}\Bigr \rvert_{n \to n-i(\beta)}
+\kappa^{(\beta)}_{l,k}\Bigr\rvert_{n \to n+i(\beta)}
-2\kappa^{(\beta)}_{l,k}.
\end{equation}

Since we assume that the parameters $\alpha$ and $\delta$ are
independent of $n$, we can insert the expansion~\eqref{eq:as_jc} into 
 \eqref{redgendefb1} to obtain

\begin{equation}
r^{(\beta)}_{l,k} \sim
n^{-\nu(l,k)-2}\sum_{q=0}^{\infty}r^{(\beta)}_{l,k}(q)n^{-q}, \quad
n\to \infty.
\end{equation}
The coefficients in this series may be computed in terms of those in
the expansion~\eqref{eq:as_jc}. At leading order in $n$, we find 
\begin{equation*}
\begin{split}
r^{(\beta)}_{l,k}(0) &= i(\beta)^2\nu(l,k)\bigl(\nu(l,k)+1\bigr)
\kappa^{(\beta)}_{l,k}(0)\\ 
& = i(\beta)^2\bigl(l+k-\epsilon(l)\bigr)\bigl(l+k+1
-\epsilon(l)\bigr)\kappa^{(\beta)}_{l,k}(0).
\end{split}
\end{equation*}

Let us write the coefficient
$\mu^{(\beta)}_{l,k}$ as a sum over partitions,
\begin{equation}
\label{partitionformula2}
\mu^{(\beta)}_{l,k} = \sum_{\pi}\prod_{B_i \in \pi}
r^{(\beta)}_{\abs{L(B_i)},\abs{M(B_i)}}.
\end{equation}
The sum varies over all partitions $\pi =\{B_1,\dotsc,B_p\}$, $1 \le p
\le l + k$, of a set $U = \{1,\ldots,l+k\}$ into disjoint subsets
$B_i$. For each partition $\pi$, and for each subset $B_i$, we further
divide $B_i$ into the two subsets $L(B_i) := \{j \in B_i: j \leq l\}$ and
$M(B_i) := B_i\setminus L(B_i)$. We argue that there is only one %replaced $B$ with $B_i$.
partition $\pi^{*}$ which gives the desired asymptotic contribution
as $n \to \infty$.

The contribution of a partition $\pi=\{B_{1},\ldots,B_{p}\}$ to the
sum~\eqref{partitionformula2} is given by the product
\begin{equation}
\label{pf2analysis}
\begin{split}
r^{(\beta)}_{\abs{L(B_{1})},\abs{M(B_{1})}}\times\dotsm \times 
r^{(\beta)}_{\abs{L(B_{p})},\abs{M(B_{p})}}&
=O\left(n^{-\sum_{i=1}^{p}\left[\abs{L(B_{i})}+\abs{M(B_{i})}-\epsilon(L(B_{i}))
  +2\right]}\right)\\
&=O\left(n^{-\sum_{i=1}^{p}\left[\abs{B_{i}}-\epsilon(L(B_{i}))+2\right]}\right)\\
&=O\left(n^{-l-k-2p+\abs{\pi^{\rm e}_{L}}}\right),
\end{split}
\end{equation}
where $\abs{\pi^{\rm e}_{L}}$ denotes the number of subsets $B_i$ in
the partition $\pi$ such that $L(B_i)$ has even cardinality. As with the conductance, %as with the conductance...d
the dominant contribution to \eqref{pf2analysis} corresponds to $p=1$,
\textit{i.e.} when the partition $\pi=\pi^{*} = U$. In this case, we
have $L(U)=\{1,\ldots,l\}$ and $M(U) = \{l+1,\ldots,l+k\}$, and so %corrected a typo on this line
$\abs{\pi^{\rm e}_{L}}=\epsilon(l)$. Therefore, we find that
\begin{equation}
\label{jointmulim}
\lim_{n \to \infty}n^{l+k+1-\epsilon(l)}\mu^{(\beta)}_{l,k-1} =
i(\beta)^2
\bigl(l+k-1-\epsilon(l)\bigr)\bigl(l+k-\epsilon(l)\bigr)
\kappa^{(\beta)}_{l,k-1}(0).
\end{equation}
Using the leading order of $b_n^{(\beta)}$~\eqref{limitofb} and the
equality $i(\beta)^2 = 4/\beta$ we arrive at
\begin{equation}
  \label{eq:fp_jc_rr}
  \lim_{n\to \infty}n^{l + k +1-\epsilon(l)}\left(\frac{12 %corrected a typo on this line
      b^{(\beta)}_n}{\beta}k \mu^{(\beta)}_{l,k-1}\right) =
  \frac{3}{16\beta^2} \bigl(l+k-1-\epsilon(l)\bigr)\bigl(l+k-\epsilon(l)\bigr)
\kappa^{(\beta)}_{l,k-1}(0).
\end{equation}
Combining Eq.~\eqref{eq:fp_jc_rr} with the limit \eqref{shotasymptlhs}
implies that the asymptotic expansion~\eqref{eq:as_jc} is valid for
the stated values of $l$ and $k$. Inserting these limits into the
finite-$n$ recurrence \eqref{shotnoiserec} finally completes the proof
of Lemma~\ref{le:lim_rr_jc}.

\subsection{The Solution of the Limiting Recurrence Relation}
\label{sse:l_s_n}

We now complete the proof of Theorem~\ref{maintheorem} by completing
the derivation of Eqs.~\eqref{oddmixed} and~\eqref{evenmixed}.

Firstly, we remove the dependence on $\alpha$, $\beta$ and $\delta$
from the recurrence relation~\eqref{eq:l_rr_jc} and its initial
conditions~\eqref{eq:ic_as} by rescaling the joint cumulants
$\kappa_{l,k}^{(\beta)}$.  We write
\begin{equation}
  \label{eq:rs_jc}
  \tilde{\kappa}_{l,k} = \frac{\left(4\beta\right)^k
    \left(2\beta\right)^{l + \epsilon(l) -1}}{C(\alpha,\beta,\delta)}
   \kappa_{l,k}^{(\beta)}(0).
\end{equation}
As discussed in Sec.~\ref{ss:sol_rec_cond}, since
Eq.~\eqref{eq:l_rr_jc} is linear, it is not affected by the constants
$C(\alpha,\beta,\delta)$, which are fixed by the initial
conditions~\eqref{eq:ic_as_o} and~\eqref{eq:ic_as_e}; therefore, for
convenience, we choose the same as in Eqs.~\eqref{eq:C_ic_o}
and~\eqref{eq:C_ic_e}.  Substituting~\eqref{eq:rs_jc}
into~\eqref{eq:l_rr_jc} yields
\begin{multline}
\label{solveme}
(2l+3k+2)\tilde{\kappa}_{l,k+1}+2k\tilde{\kappa}_{l+2,k-1}
-2\tilde{\kappa}_{l+2,k}\\
=3k\bigl(l+k-1-\epsilon(l)\bigr)\bigl(l+k-\epsilon(l)\bigr)
\tilde{\kappa}_{l,k-1}.
\end{multline}

We treat separately the cases $l$ even and $l$
odd. Replacing $l$ with $2l$ gives
\begin{equation}
\label{lrse}
(4l+3k+2)\tilde{\kappa}_{2l,k+1}+2k\tilde{\kappa}_{2l+2,k-1}
-2\tilde{\kappa}_{2l+2,k}=3k\bigl(2l+k-2\bigr)\bigl(2l+k-1\bigr)
\tilde{\kappa}_{2l,k-1}.
\end{equation}
We need to solve this difference equation subject to the initial
conditions
\begin{equation}
\label{conic1}
\tilde \kappa_{2l,0} = \frac{(2l-2)!}{4^l}\binom{2l}{l}.
\end{equation}
We claim that the unique solution is given by
\begin{equation}
\label{lrsesol}
\tilde \kappa_{2l,k} = (-1)^k \frac{(2l+k-2)!}{4^{l}}
\sum_{j=0}^{k}\binom{2j+2l}{j+l}\binom{k}{j}(-2)^{-j}.
\end{equation}
This formula is proved in Appendix~\ref{ap:p_th} using the properties
of Jacobi polynomials.
% \begin{remark}
% \label{altrem}
% If $l=0$ and $k$ odd the sum \eqref{lrsesol} has terms which are
% alternating and cancel each other completely giving $\tilde
% \kappa_{0,k}=0$ (see Remark~\ref{staircase}, Eq.~\eqref{eq:sum}).
% These cancellations explain the unusual ``staircase'' behaviour in the
% decay of the shot noise cumulants, first observed in~\cite{KSS09}. It
% means that to prove (\ref{poddconj}) requires the calculation of
% $\kappa_{0,k}(1)$, and hence $\kappa_{l,k}(1)$ for each $l$. These are
% higher order terms in the asymptotic series (\ref{eq:as_jc}), and are
% substantially harder to compute.
% \end{remark}

For the odd cumulants, replacing $l$ with $2l-1$ in~\eqref{solveme} leads to
\begin{equation}
\label{lrso}
(4l+3k)\tilde{\kappa}_{2l-1,k+1}+2k\tilde{\kappa}_{2l+1,k-1}
-2\tilde{\kappa}_{2l+1,k}=3k\bigl(2l+k-2\bigr)\bigl(2l+k-1\bigr)
\tilde{\kappa}_{2l-1,k-1}.
\end{equation}
We need to solve this recurrence subject to the initial condition
\begin{equation}
\tilde \kappa_{2l-1,0}= (2l-2)!
\end{equation}
It is straightforward to prove by direct substitution that the unique %corrected spelling here
solution is 
\begin{equation}
\tilde \kappa_{2l-1,k}= (k+2l-2)!.
\end{equation}

This completes the proof of Theorem~\ref{maintheorem}.

\subsection{Higher Order Asymptotics}
\label{higherorder}

As we discussed in Remark~\ref{staircase}, the cumulants of the shot
noise \eqref{poddconj} of odd index appear at sub-leading order in the
large-$n$ expansion \eqref{eq:as_jc}. Therefore, in order to
prove~\eqref{poddconj}, we need to calculate $\kappa^{(1)}_{0,k}(1)$
for $k$ odd.  The purpose of this section is to give a flavour of the
difficulties one encounters when trying to compute next to leading
order terms of the cumulants using our approach.

In order to simplify the analysis, we consider the case with
$\beta=1$, $\alpha=-1/2$ and, $\delta=0$, \textit{i.e.}  the
scattering matrix belongs to the COE and both leads carry the same
number of channels.  

Inserting the asymptotic expansion~\eqref{eq:as_jc} into the
recurrence relation~\eqref{shotnoiserec}, setting $l=0$ and equating
terms of order $n^{-k}$ for $k$ odd  yields
\begin{multline*}
  (k/2)\kappa^{(1)}_{2,k-1}(1)+(3k+2)\kappa^{(1)}_{0,k+1}(1)
-2\kappa^{(1)}_{2,k}(1)-(3/16)k^{2}(k-1)\kappa^{(1)}_{0,k-1}(1)\\
+6k\sum_{i=1}^{k-2}\binom{k-1}{i}\kappa^{(1)}_{2,k-i-1}(0)
\kappa^{(1)}_{2,i}(0)-(3k/4)\kappa^{(1)}_{2,k-1}(0)-\kappa^{(1)}_{2,k}(0).
\end{multline*}
From this equation we can see that the cumulants
$\kappa^{(1)}_{0,k}(1)$ are expressed in terms of the mixed cumulants
$\kappa^{(1)}_{2,k-1}(1)$ and $\kappa^{(1)}_{2,k}(1)$. Thus, it becomes
clear that to compute the $\kappa^{(1)}_{0,k}(1)$'s one requires the
knowledge of $\kappa^{(1)}_{2l,k}(1)$ for each $l$. 

Define $g_{l,k}:=\kappa^{(1)}_{2l,k}(1)$.  Then, at next to leading
order Eq.~\eqref{shotnoiserec} becomes
\begin{equation*}
(k/2)g_{l+1,k-1}+(4l+3k+2)g_{l,k+1}-2g_{l+1,k}-
(3k/16)(2l+k-1)(2l+k)g_{l,k-1}=\Theta(l,k), %21/03 - Corrected
\end{equation*}
where $\Theta(l,k)$ is a complicated function of the known sequence
$\kappa^{(1)}_{2i,j}(0)$. 

Using \texttt{MAPLE} we were able to surmise that for $l>1$ the
quantities
\begin{equation*}
h_{l,k} = \frac{\kappa^{(1)}_{2l,k}(1)}{(2l+k-1)!}
-\frac{4\kappa^{(1)}_{2l+2,k}(1)}{(2l+k+1)!}
\end{equation*}
can be extracted from the generating function
\begin{equation*}
h_{l,k} = \res_{x=0}\left(x^{-l-k-2}\sqrt{4-x}\res_{y=0}\left(y^{1-k} %21/03 - Corrected
F(x,y)\right)\right),
\end{equation*}
where 
\begin{equation}
\label{fxy}
F(x,y) = \frac{\left(\int_{}^{y}
\frac{\sqrt{y'}P(x,y')}{4(4-y'x)\sqrt{y'x-8y'+4}}dy'\right)}%21/03 - Corrected
{\bigl((yx-8y+4)y\bigr)^{3/2}}.
\end{equation}
The function $P(x,y)$ is the polynomial
\begin{equation*}
P(x,y)=-512+256x-20x^{2}+64y^{2}x^{2}-128yx+y^{2}x^{4}-16y^{2}x^{3}
+3yx^{3}-8yx^{2}
\end{equation*}

We remark that the integral in Eq.~\eqref{fxy} can be computed in
closed form, but it is rather long and cumbersome to be reported
here.  

%Using the random matrix approach for $\beta=2$ and $n=m$ (equal channels), the following result is known, \cite{OK08}
%\begin{equation}
%\kappa^{(2)}_{2l} \sim n^{-2l}\frac{1}{4}\frac{(2l-1)!}{4^{2l}}\left(1+l(3l^{2}-1)/
%(8n^{2})+O(n^{-4})\right)
%\end{equation}
%with odd cumulants vanishing identically. In general, if $n \neq m$ the odd cumulants will be non-zero, and it is a relatively simple matter to generalise the %above result to $n \neq m$. Assuming that $\nu = m-n$ is fixed we obtain
%\begin{equation}
%\kappa^{(2)}_{l} \sim n^{-l}\left(\kappa^{(2)}_{l}(0)+\kappa^{(2)}_{l}(1)/n+\kappa^{(2)}_{l}(2)/n^{2}+O(n^{-3})\right)
%\end{equation}
%where
%\begin{equation}
%\begin{split}
%\kappa^{(2)}_{l}(0) &= \frac{1}{8}\frac{(l-1)!}{4^{l}}\left((-1)^{l}/4-1/4+\nu^{2}\right)\\
%\kappa^{(2)}_{l}(1) &= \frac{1}{16}\frac{(l-1)!}{4^{l}}\nu l(4\nu^{2}-1-(-1)^{l})\\
%\end{split}
%\end{equation}
\section{The Wigner Delay Time}
\label{se:wdt}
The analysis of the random variable~\eqref{eq:w_dt2} follows the same
strategy that we adopted to study the conductance.  The main
difference is that the exponent~\eqref{eq:par_lag} depends on $n$
explicitly, which means that the analysis of the asymptotic behaviour
of the cumulants is more complicated.  A direct study of the
integral~\eqref{wignermgf} would require the solution of a delicate
double scaling problem. In addition, the linear homogeneous recursion relations obtained in the previous section %%this sentence contained some mistakes; corrected.
must be replaced with non-linear recursions that are harder to
solve. Another difficulty is the \textit{singular} nature of the
weighting function~\eqref{eq:wtd}, which implies that the cumulant
generating function is not analytic and only a finite number of
moments exist. 

\subsection{The Differential-Difference Equation}

The proof of Theorem~\ref{th:wdt_ode} is entirely analogous to that of
Theorem~\ref{ode:con}.  Therefore, we shall only outline the relevant
steps. 

Recall that the $\tau$-function associated to the
Wigner delay time is
\begin{equation}
\label{eq:tau_wdt}
\tau_{n}^{(\beta)}(\mathbf{t};z) := \frac{1}{n!}\int_{\mathbb{R}_+^{n}}
 \prod_{j=1}^{n}\rho_z(\lambda_{j})\exp\left(
 \sum_{i=1}^{\infty}t_{i}\lambda_{j}^{i}\right)
 \prod_{1\le j < k \le n}\abs{\lambda_k -
   \lambda_j}^{\beta}d\lambda_1\dotsm d\lambda_n,
\end{equation}
where $\rho_z(\lambda)$ was defined in Eq.~\eqref{eq:wtd}.  It is
related to the moment generating function by the equation
\begin{equation*}
  M_n^{(\beta)}(\beta y/2) =
  \frac{n!}{m^{(\beta)}_n}\tau(\mathbf{t};\beta y/2)\biggr
  \rvert_{\mathbf{t}= \mathbf{0}}.
\end{equation*}
Because of its universal usage and since there is no risk of
ambiguity, we adopt the same notation to indicate the $\tau$-function
that we used for the conductance and for the shot noise.

This $\tau$-function satisfies the KP equation \eqref{kpequation} or
Pfaff-KP equation \eqref{pkpequation} depending on the value of
$\beta$. The definition~\eqref{eq:tau_wdt} gives
\begin{equation}
\label{eq:zt1d_wdt}
\frac{\partial}{\partial t_{1}}\tau^{(\beta)}_{n}(\mathbf{t};z) 
= \frac{\partial}{\partial z}\tau^{(\beta)}_{n}(\mathbf{t};z).
\end{equation}
Thus, the derivative
\begin{equation*}
  \xi^{(\beta)\prime}_n(z) = \frac{d \log M^{(\beta)}_n(z)}{dz}
\end{equation*}
can be expressed in terms of the projection at $\mathbf{t}=
\mathbf{0}$ of the right-hand side of Eq.~\eqref{eq:zt1d_wdt}.  It
remains to express the partial derivatives
\begin{equation}
\label{missingpds_wdt}
\frac{\partial^{2}}{\partial
  t_{2}^{2}}\log\tau^{(\beta)}_{n}(\mathbf{t};z)
\biggr \rvert_{\mathbf{t}=\mathbf{0}}\quad \text{and} \quad 
 \frac{\partial^{2}}{\partial t_{1}\partial t_{3}}
\log\tau^{(\beta)}_{n}(\mathbf{t};z)\biggr \rvert_{\mathbf{t}=\mathbf{0}}
\end{equation}
in terms of the derivative of $\xi^{(\beta)}_n(z)$.  The extra relations needed for %replaced to with for
this purpose are provided by the Virasoro constraints.
 
The only Virasoro operators needed for these calculations are %%"this calculations" replaced with "these calculations"
$\mathcal{V}_{-1}$ and $\mathcal{V}_0$.  They are expressed in terms
of the parameters $(a_i)_{i=0}^\infty$  and $(b_i)_{i=0}^\infty$ %%typos in these formulae, index i should start at 0. Corrected.
defined in Eq.~\eqref{eq:b_c}, which are derived from the
relation
\begin{equation*}
 \frac{d \log \rho_z(\lambda)}{d\lambda} 
= \frac{z\lambda^2 - b\lambda   +1}{\lambda^2}.
\end{equation*}
This equation gives
\begin{equation*}
a_{2}=1, \quad b_{0}=-1, \quad b_{1}=b,\quad b_{2}=-z,
\end{equation*}
and otherwise $a_{i}=0$ and  $b_{i}=0$. Finally, a lengthy but
straightforward calculation analogous to that in
Appendix~\ref{virconapp} leads to the
ODE~\eqref{wigode}. 

\subsection{Finite $n$ Recursion Relation}
\label{ss:fn_wdt}

Let us introduce the \textit{reduced cumulants}
\begin{equation}
\label{eq:red_cum_wdt}
\rho^{(\beta)}_{l} := K^{(\beta)}_{l}\Bigr \rvert_{n \to n-i(\beta)}
+K^{(\beta)}_{l}\Bigr \rvert_{n \to n+i(\beta)}-2K^{(\beta)}_{l},
\end{equation}
where the $K_l^{(\beta)}$'s are the coefficients of the
expansion~\eqref{eq:cum_ex_wdt}, as well as the \textit{reduced moments } 
\begin{equation*}
%\label{recmomwig}
\phi^{(\beta)}_{l}=\sum_{j=0}^{l-1}\binom{l-1}{j}
\rho^{(\beta)}_{l-j}\phi^{(\beta)}_{j}, \qquad \phi^{(\beta)}_{0}=1.
\end{equation*}
We also define 
\begin{subequations}
\label{eq:c_rr_wdt}
\begin{align}
\label{eq:c_rr_wdt_a}
A(l) & := \eta_{1,4}^{(\beta)}l(l-1)(l-2)+\beta l(2l-1) %typo corrected here. 15/03
-\omega(\omega-4+2\beta)l
-\omega(\omega+2-\beta), \\
\label{eq:c_rr_wdt_b}
B(i,l) & := (l-i)\Bigl(6\eta_{1,4}^{(\beta)}(l-i-1)i+\beta+4\beta i\Bigr), %typo corrected here. 15/03
\end{align}
\end{subequations}
where $\omega := b-2-\beta n +\beta$.  Recall also that $q :=
\left \lfloor b-2 - \beta(n-1)\right \rfloor$. 

\begin{lemma}
  \label{le:rr_wtd}
  The cumulants of the Wigner delay time are the unique solution of
  the recurrence relation
\begin{multline}
\label{wigrec}
A(l)K^{(\beta)}_{l+1}+\sum_{i=0}^{l-1}\binom{l}{i}B(i,l)K^{(\beta)}_{l-i}
K^{(\beta)}_{i+1}+\beta l(2l-1)K^{(\beta)}_{l}\\
=(12d^{(\beta)}_{n}/\beta)l(l-1)(l-2)(\beta/2)^{4}\phi^{(\beta)}_{l-3}, \quad 3 %%I made this change the first time but forgot to comment it. I modified the appearance of the RHS to look as similar as possible to the analogous recurrences for the conductance and shot noise. The factor (\beta/2)^{4} is due to the factor (\beta/2) inside the exponential in the jpdf of proper delay times.
\le l \le q-1,
\end{multline}
with initial conditions given by Eqs.~\eqref{wigcumus}.   
\end{lemma}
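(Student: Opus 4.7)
The plan is to mimic the derivation of Lemma \ref{le:cc_rr}. The starting point is Theorem \ref{th:wdt_ode}, which gives a differential--difference equation satisfied by the cumulant generating function $\xi_n^{(\beta)}(z) = \log M_n^{(\beta)}(z)$. I would substitute the power series expansion \eqref{eq:cum_ex_wdt} into \eqref{wigode} and extract the recurrence by equating coefficients of $z^l$ for each $l$ with $3 \le l \le q-1$ (the upper bound being dictated by the finite radius of analyticity of $M_n^{(\beta)}$, which forces the expansion to terminate at order $q$).

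The first step is to compute the left-hand side contributions term by term. The linear terms $\eta_{1,4}^{(\beta)} z^3 \xi_n^{(\beta)\prime\prime\prime\prime}$ and $2\beta z^2 \xi_n^{(\beta)\prime\prime\prime}$, together with the polynomial-coefficient terms $p_1^{(\beta)}(z)\xi_n^{(\beta)\prime}$ and $p_2^{(\beta)}(z)\xi_n^{(\beta)\prime\prime}$, produce linear combinations of single cumulants $K_l^{(\beta)}$. After collecting all linear contributions at order $z^l$ and accounting for the rescaling factor $(2/\beta)^l$ arising from the definition \eqref{eq:cum_ex_wdt}, the coefficient of $K_{l+1}^{(\beta)}$ assembles into precisely $A(l)$ defined in \eqref{eq:c_rr_wdt_a}, while that of $K_l^{(\beta)}$ becomes $\beta l(2l-1)$, in agreement with \eqref{wigrec}. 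The bilinear terms $\beta z (\xi_n^{(\beta)\prime})^2$, $4\beta z^2 \xi_n^{(\beta)\prime\prime}\xi_n^{(\beta)\prime}$ and $6\eta_{1,4}^{(\beta)} z^3 (\xi_n^{(\beta)\prime\prime})^2$ each yield a Cauchy convolution; summing them produces the double sum $\sum_{i=0}^{l-1} \binom{l}{i} B(i,l) K_{l-i}^{(\beta)} K_{i+1}^{(\beta)}$ with the combinatorial coefficient $B(i,l)$ as given in \eqref{eq:c_rr_wdt_b}.

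The right-hand side requires expanding $\exp(\xi_{n-i(\beta)}^{(\beta)} + \xi_{n+i(\beta)}^{(\beta)} - 2\xi_n^{(\beta)})$. By the definition \eqref{eq:red_cum_wdt}, this exponent is precisely $\sum_{l\ge 1} \frac{\rho_l^{(\beta)}}{l!}(2z/\beta)^l$. Exponentiating via the standard cumulant-to-moment identity
\begin{equation*}
\exp\left(\sum_{l\ge 1} \frac{\rho_l^{(\beta)} u^l}{l!}\right) = 1 + \sum_{l\ge 1} \frac{\phi_l^{(\beta)} u^l}{l!},
\end{equation*}
with $u = 2z/\beta$, yields a series whose coefficients $\phi_l^{(\beta)}$ satisfy the stated recurrence $\phi_l^{(\beta)} = \sum_{j=0}^{l-1}\binom{l-1}{j}\rho_{l-j}^{(\beta)}\phi_j^{(\beta)}$ with $\phi_0^{(\beta)}=1$. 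The prefactor $z^3$ in the right-hand side of \eqref{wigode} then makes the coefficient of $z^l$ on the right-hand side proportional to $\phi_{l-3}^{(\beta)}$; tracking the conversion factor $(2/\beta)^{l-3}$ together with $(12d_n^{(\beta)}/\beta)$ reproduces exactly the right-hand side of \eqref{wigrec}, including the factor $(\beta/2)^4$ and $l(l-1)(l-2)$.

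The initial conditions \eqref{wigcumus} are extracted by equating coefficients of $z^0$, $z^1$ and $z^2$ separately: the right-hand side of \eqref{wigode} contributes only from $z^3$ onward, so these three identities involve only $K_1^{(\beta)}$, $K_2^{(\beta)}$, $K_3^{(\beta)}$ and can be solved algebraically to give \eqref{wigmean}--\eqref{wigskew}. Uniqueness of the solution follows because for each $l \ge 3$ with $l + 1 \le q$, \eqref{wigrec} expresses $K_{l+1}^{(\beta)}$ explicitly in terms of lower-index cumulants with leading coefficient $A(l) \ne 0$, so the recursion determines every higher cumulant from the initial data. The main obstacle is simply the bookkeeping of the three bilinear contributions: one must verify that the combined weight produced by $\beta z (\xi_n^{(\beta)\prime})^2$, $4\beta z^2 \xi_n^{(\beta)\prime\prime}\xi_n^{(\beta)\prime}$ and $6\eta_{1,4}^{(\beta)} z^3 (\xi_n^{(\beta)\prime\prime})^2$, together with the $(2/\beta)^l$ rescaling, collapses to exactly $B(i,l)$. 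This is a careful but routine calculation, entirely analogous to the conductance case \eqref{reccond}, and raises no conceptual difficulty.
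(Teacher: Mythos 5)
Your proposal is correct and follows exactly the route the paper takes: the authors prove Lemma~\ref{le:rr_wtd} by inserting the expansion~\eqref{eq:cum_ex_wdt} into the ODE~\eqref{wigode}, equating the first three powers of $z$ to get the initial conditions~\eqref{wigcumus}, and equating higher powers (with the right-hand side expanded via the reduced cumulants $\rho_l^{(\beta)}$ and moments $\phi_l^{(\beta)}$) to get~\eqref{wigrec}, in direct analogy with Lemma~\ref{le:cc_rr}. Your bookkeeping of the $(2/\beta)^l$ rescaling and the origin of the $(\beta/2)^4$ factor on the right-hand side is consistent with the stated recurrence.
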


The proof is entirely analogous to that of Lemma~\ref{le:cc_rr}.
Inserting the expansion~\eqref{eq:cum_ex_wdt} into~\eqref{wigode}
and equating the first three powers of $z$ gives a system of linear
equations whose solutions are the initial conditions~\eqref{wigcumus}.
Equating higher powers leads to~\eqref{wigrec}. %replaced "lead" with "leads" 15/03

\begin{remark}
  Equation~\eqref{eq:red_cum_wdt} depends implicitly on the exponent
  $b$ that appears in $\rho_z(\lambda)$.  Since the Pfaff-KP
  equation~\eqref{pkpequation} was derived with the assumption that
  the weighting function in the integral~\eqref{taugencond} should be
  the same for any dimension $n$ of the $\tau$-function, the reduced
  cumulants are well defined only if all the $K_l^{(\beta)}$'s in the
  right-hand side are computed for fixed $b$.  Once a solution
  $\bigl(K_l^{(\beta)}\bigr)_{l=1}^q$ of the difference
  equation~\eqref{wigrec} is found, the exponent $b$ can be assigned
  the value~\eqref{eq:par_lag} to recover the cumulants.
\end{remark}

The implicit dependence of the reduced
cumulants~\eqref{eq:red_cum_wdt} on the exponent~\eqref{eq:par_lag}
means that the asymptotic analysis of the cumulants of
$\tau_{\mathrm{W}}$ for $\beta =1,4$ involves a double scaling limit;
so does a direct study of the asymptotic behaviour of integral
representation~\eqref{wignermgf} of $M_n(z)$. These problems go beyond
the scope of this paper and remain open.  In the next section we shall
study the asymptotic limit of the cumulants for $\beta=2$.  In this case the right-hand side of the recurrence
relation~\eqref{wigrec} is zero and a direct study of the limit $n \to
\infty$ presents less challenges.

\subsection{The Limiting Recurrence Relation: $\beta=2$} %21/03 - changed limiting to Limiting
\label{ss:lim_rec_wdt}

The proof of Theorem~\ref{th:lim_cum_wdt} combines the approaches in
Lemma~\ref{le:as_lim} and Proposition~\ref{pr:sol_rec_m}, which provide
the weak localization corrections for the cumulants of the
conductance.  The main difference is that the limiting recurrence
relation is not linear.  However, it can still be solved using the
method of generating functions. %%"solved using the generating function" replaced with "solved using the method of generating functions"

Equations~\eqref{eq:c_rr_wdt} and~\eqref{wigrec} imply that the %21/03 - changed "equations x and y implies..." to "equations x and y imply..."'
cumulants are a rational function of $n$.  A direct calculation %21/03 "are rational function..." replaced with "are a rational function.."
shows that $K^{(\beta)}_l$ admits the asymptotic expansion
\begin{equation}
\label{wigasy}
K^{(\beta)}_{l} \sim
n^{2-2l}\sum_{p=0}^{\infty}K^{(\beta)}_{l}(p)n^{-p}, 
\quad n \to \infty, \quad l>0.
\end{equation}
This formula is correct for all the symmetry classes $\beta \in
\left\{1,2,4\right\}$.  Note that the exponent in front of the sum
does not contain a term analogous to $\nu(l)$; therefore, we do not see %%analogue to replaced with analogous to
the ``staircase'' asymptotic behaviour which characterized the
cumulants of conductance and shot noise (see
Remark~\ref{staircase}). 
%%I changed the presentation below quite considerably, I think it lacked clarity before. Feel free to change it a bit, but I think it is clearer what's going on now. Furthermore, it emphasizes that when $\beta=2$ our recurrence relation becomes very concise.

When $\beta=2$, our finite-$n$ recurrence relation (\ref{wigrec}) simplifies considerably:
\begin{equation}
\label{beta2wigfn}
(l+1)(n^{2}-l^{2})K^{(2)}_{l+1}=2l(2l-1)K^{(2)}_{l}+2\sum_{i=0}^{l-1}(3i+1)\binom{l}{i}(l-i)^{2}K^{(2)}_{i+1}K^{(2)}_{l-i},	
\end{equation}
where we made use of the identity
\begin{equation}
\sum_{i=0}^{l-1}\binom{l}{i}(l-i)(6(l-i-1)i+2+8i)x_{i+1}x_{l-i} = 2\sum_{i=0}^{l-1}(3i+1)\binom{l}{i}(l-i)^{2}x_{i+1}x_{l-i},
\end{equation}
which follows from equating coefficients of the indeterminate
$x_{i+1}x_{l-i}$. Inserting the asymptotic expansion (\ref{wigasy})
into (\ref{beta2wigfn}) and taking the leading order terms 
gives the limiting recurrence~\eqref{limwigrecintro}, after
identifying the new variable
\begin{equation}
p_{l}=\frac{K^{(2)}_{l}(0)}{(l-1)!}.
\end{equation}

Difference equations similar to~\eqref{limwigrecintro} have been
studied in~\cite{HJK02,JK01} using formal power series solutions of
Painlev\'e I. In a similar spirit Garoufalidis \textit{et
  al.}~\cite{GIKM12} used the Riemann-Hilbert approach to study
asymptotics of non-linear recurrence relations.

Recall that 
\begin{equation}
\label{eq:gf_c2}  
F(z) = \sum_{i=1}^\infty p_iz^i.
\end{equation}
This generating function $F(z)$ turns the difference
equation~\eqref{limwigrecintro} into the non-linear ODE %%nonlinear -> non-linear, consistency.
\begin{equation}
\label{limwigode}
2F(z)+F'(z)-4zF'(z)-6zF'(z)^{2}+4F(z)F'(z)=1,
\end{equation}
with initial condition $F(0)=0$.  The constant in the right-hand side
is fixed by the condition that the first cumulant is normalized to
one.  Equation~\eqref{limwigode} is of d'Alembert type and can be %%I think this should be d'Alembert? Was previously D'Alambert. It is surprisingly hard to find anything about this kind of ODE, but it appears to be to some extent classical (hence admitting exact and more or less elementary solutions). They are generalizations of Clairaut ODEs.
integrated.  We have
\begin{equation}
\label{solwigode2}
F(z) = 3/2-(3/2)\Omega(z)+2z\Omega(z)^{3}+3z\Omega(z)^{2}+2z\Omega(z), %15.03 - added parenthesis to 3/2, to make it clearer.
\end{equation}
where $\Omega(z)$ is the unique solution of the quartic equation.
\begin{equation}
\label{quartic}
4z\Omega^{4}+8z\Omega^{3}+(4z-3)\Omega^{2}+2\Omega+1=0,
\end{equation}
with $\Omega(0)=1$. 

Solving (\ref{quartic}) for $z$ in terms of $\Omega$ gives %%added an equation reference here for extra clarity.
\begin{equation}
z(\Omega) = \frac{1}{4}\frac{(3\Omega+1)(\Omega-1)}
{\Omega^{2}(\Omega+1)^{2}}. %%Removed "G(\Omega)", it's not necessary, we can just write "z(\Omega)".
\end{equation}
Since, as a function of $\Omega$, $z(\Omega)$ is analytic at $1$, $z(1)=0$ and $z'(1) \neq 0$, we can %%Again, replaced G(\Omega) with z(\Omega). I think it makes the presentation clearer.
apply the Lagrange inversion formula and obtain the formal power
series
\begin{equation}
\label{eq:om_ex}
\Omega(z) = 1+\sum_{k=1}^{\infty}\lim_{w \to 1}
\frac{d^{k-1}}{dw^{k-1}}\left(\frac{w-1}{z(w)}\right)^{k}\frac{z^{k}}{k!} 
= 1+\sum_{k=1}^{\infty}\zeta_{k}z^{k},
\end{equation}
where 
\begin{equation}
\label{intwig}
\begin{split}
\zeta_{k} &= \lim_{w \to 1}\frac{d^{k-1}}{dw^{k-1}}
\left(\frac{4w^{2}(w+1)^{2}}{3w+1}\right)^{k}\frac{1}{k!}\\
&=\frac{4}{k}\sum_{i=0}^{k-1}
\binom{2k-i-2}{k-1}\left(\sum_{p=0}^{i}
\binom{2k}{p}\binom{2k}{i-p}2^{i+p}\right)(-3)^{k-1-i}.
\end{split}
\end{equation}
The right-hand side follows from computing the derivatives and
rewriting the resulting factorials in terms of binomial
coefficients. 

In order to complete the proof of Theorem~\eqref{th:lim_cum_wdt}, we
are left to show that the Taylor coefficients of the generating
function~\eqref{eq:gf_c2}, which are the solution
to~\eqref{limwigrecintro}, are integers. It follows from
Eqs.~\eqref{solwigode2} and~\eqref{eq:om_ex} that they are integers if
the coefficients $\zeta_k$ are integers too. This is a straightforward
consequence of the following.
\begin{lemma}
Let $k$, $i$ and $p$ be integers such that $k >0$, $i=0,\dotsc,k-1$ and
$p=0,\dotsc,i$. The product
\begin{equation}
  \label{eq:prod_bin}
  \binom{2k}{p}\binom{2k}{i-p}\binom{2k-i -2}{k-1}
\end{equation}
is divisible by $k$.  
\end{lemma}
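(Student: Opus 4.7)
My plan is to exhibit three integer multipliers $\mu_{1},\mu_{2},\mu_{3}$, each of which manifestly annihilates the product \eqref{eq:prod_bin} modulo $k$, and then to show that these multipliers are globally coprime; a Bézout combination will then force $k$ to divide the product itself.

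First I would establish a Pascal-type identity that isolates $k$ from the last factor. Starting from $\binom{2k-i-2}{k}=\tfrac{k-i-1}{k}\binom{2k-i-2}{k-1}$ (valid for $i\le k-1$, with $\binom{k-1}{k}=0$ handling the boundary $i=k-1$), a direct rearrangement yields
\[
(i+1)\binom{2k-i-2}{k-1}=k\left[\binom{2k-i-2}{k-1}-\binom{2k-i-2}{k}\right],
\]
so $\mu_{1}:=i+1$ works. Next, the standard absorption identities
\[
p\binom{2k}{p}=2k\binom{2k-1}{p-1},\qquad (i-p)\binom{2k}{i-p}=2k\binom{2k-1}{i-p-1}
\]
(which collapse to the trivial $0=0$ when $p=0$ or $p=i$) supply the further multipliers $\mu_{2}:=p$ and $\mu_{3}:=i-p$, both of which send the product to a multiple of $2k$ and hence of $k$.

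The remaining step is a gcd calculation. Since $\gcd(p,i-p)$ divides $p+(i-p)=i$ and $\gcd(i,i+1)=1$, the triple $(i+1,p,i-p)$ is globally coprime; so some integer linear combination equals $1$, and applying this combination to the three congruences $\mu_{j}\cdot\text{(product)}\equiv 0\pmod{k}$ forces $k$ to divide the product itself.

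I expect the only genuine obstacle to be spotting the Pascal-type identity for $\binom{2k-i-2}{k-1}$, which is the heart of the matter: it encodes the fact that $\tfrac{1}{k}\binom{2k-2}{k-1}$ is the Catalan number $C_{k-1}$ and generalizes this observation to $i\ge 1$. The remaining verifications are routine, but one must be attentive to the boundary cases $p\in\{0,i\}$ and $i=k-1$, where some of the multiplier identities degenerate to a trivial $0\equiv 0$ statement; such degeneracies are still enough for the gcd argument to push through, so the apparent restriction $0\le p\le i\le k-1$ in the hypothesis is harmless.
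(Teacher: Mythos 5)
Your proof is correct, and it takes a genuinely different route from the one in the paper. The paper's argument extracts from each binomial factor an explicit divisor of the form $k/\gcd(k,\cdot)$ — namely $k/\gcd(k,2k-i-1)$ from $\binom{2k-i-2}{k-1}$ and $k/\gcd(k,p)$, $k/\gcd(k,i-p)$ from the other two — and then assembles these via an $\lcm$/$\gcd$ computation with prime factorizations, concluding because $\gcd(k,i,p)$ and $\gcd(k,2k-i-1)$ are coprime. You instead produce three integer multipliers $i+1$, $p$, $i-p$, each of which visibly sends the product into $k\mathbb{Z}$ (the first via the identity $(i+1)\binom{2k-i-2}{k-1}=k\bigl[\binom{2k-i-2}{k-1}-\binom{2k-i-2}{k}\bigr]$, which I checked, including the boundary $i=k-1$ where $\binom{k-1}{k}=0$; the other two via absorption), and then close with a B\'ezout combination using $\gcd(i+1,p,i-p)=1$. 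Your route is arguably cleaner: it avoids the multiplicative bookkeeping with $\mathfrak{g}$ and $\mathfrak{m}$ entirely, and the degenerate cases $p\in\{0,i\}$ and $i=0$ are handled uniformly since the surviving multipliers still have gcd $1$ (when $i=0$ the multiplier $i+1=1$ alone already gives $k\mid\binom{2k-2}{k-1}$, the Catalan-number case that the paper also uses as its base case). What the paper's approach buys in exchange is slightly finer information — it identifies which specific divisors of $k$ come from which factor — but for the statement as posed your argument is complete and more economical.
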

\begin{proof}
Since $\frac{1}{k}\binom{2k-2}{k-1}$ is a Catalan number and 
\[
\binom{2k}{1}= 2k, \qquad \binom{2k}{2} = k(2k-1),
\] 
we can assume that $k \ge 4$, $i \ge 3$ and $p \ge3$.

Denote by $\gcd(a_1,\dotsc,a_m)$ and $\lcm(a,\dotsc,a_m)$ the greatest common
divisor and least common multiple of the integers $a_1,\dotsc,a_m$. For any
$a > b > 1$ the identity
$$
\binom{a}{b} = \frac{a}{b}\binom{a-1}{b-1}
$$
implies that $ a/\gcd(a,b)$ and $b/\gcd(a,b)$ divide
$\binom{a}{b}$ and $\binom{a-1}{b-1}$, respectively. 
Therefore, 
\[
k_1 := \frac{k}{\gcd(k,2k- i - 1)}
\]
is a factor of $\binom{2k-i - 2}{k-1}$; similarly, $k/\gcd(k,p)$
and $k/\gcd(k,i-p)$
divide $\binom{2k}{p}$ and $\binom{2k}{i-p}$, respectively.\footnote{When $i=p$,
  we set by convention $\gcd(k,0)=k$.} 

If a common divisor of $k$ and $i-p$ is a factor of either $i$ or $p$, then it
must divide both $i$ and $p$.  It follows that
\begin{align*}
 \mathfrak{g} & :=  \gcd\bigl(\gcd(k,p),\gcd(k,i-p)\bigr) = \gcd(k,i,p) \\
  \mathfrak{m} & := \lcm\bigl(\gcd(k,p),\gcd(k,i-p)\bigr) =
  \frac{\gcd(k,p)\gcd(k,i-p)}{\gcd(k,i,p)}.
\end{align*}
Therefore, $\binom{2k}{p}\binom{2k}{i-p}$ is divisible by $k^2/(\mathfrak{gm})$.

Consider the prime factorizations 
\[
\gcd(k,p) = \prod_\rho q_\rho^{\alpha_\rho} \quad
\text{and}  \quad \gcd(k,i-p) = \prod_\rho q_\rho^{\beta_\rho}.
\]
%where $\alpha_\rho=0$ or $\beta_\rho=0$ whenever $q_\rho$ does not belong to the %21/03 - I'm not sure why you need this sentence. It follows immediately from the definition of a prime factorization. I vote to remove this and start here with "By definition..."
%factorization of $\gcd(k,p)$ or $\gcd(k,i-p)$, respectively.  
By definition $\mathfrak{m} = \prod_\rho q_\rho^{\gamma_\rho}$, where $\gamma_\rho :=
\max\{\alpha_\rho,\beta_\rho \}$.  Since for each $\rho$, $q_\rho^{\gamma_\rho}
\mid k$, then $\mathfrak{m} \mid k$ too.  As a consequence,
$k^2/(\mathfrak{gm})$ is a multiple of
\[
k_2 := \frac{k}{\mathfrak{g}} = \frac{k}{\gcd(k,i,p)}
\]
and $k_1k_2$ divides the product~\eqref{eq:prod_bin}. Finally, since a common
divisor of $k$ and $i$ cannot divide $2k-i - 1$, $\gcd(k,i,p)$ and $\gcd(k,2k -
i - 1)$ are mutually prime; therefore, $k$ is a factor of $k_1k_2$.
\end{proof}

A natural question would be to ask whether the cumulants of the Wigner
delay time for the symmetry classes $\beta=1$ and $\beta=4$ are integers %15/03 - "are integer" changed to "are integers"
too. In Table~\ref{pie} we list the leading order coefficients of the
first eight cumulants, which we computed numerically, for $\beta \in
\left \{1,2,4\right\}$.  We immediately see that when $\beta=4$ this
is not the case; when $\beta=1$ it remains an open
question.

\begin{table}[h]
\caption{The limiting cumulants 
of the Wigner delay time for each $\beta \in \{1,2,4\}$}
\centering
    \begin{tabular}{  |c | c |c | c| }
    \hline\hline
    l & $K^{(1)}_{l}(0)$ & $K^{(2)}_{l}(0)$ & $K^{(4)}_{l}(0)$ \\ \hline
    $1$ & $1$ & $1$ & $1$ \\ %\hline 
    $2$ & $4$ & $2$ & $1$\\ % \hline
    $3$ & $96$ & $4$ & $6$  \\ % \hline
    $4$ & $5088$ & $636$ & $159/2$ \\ % \hline
    $5$ & $437760$ & $27360$ & $1710$ \\ % \hline
    $6$ & $53038080$ & $1657440$ & $51795$ \\ % \hline
    $7$ & $8353013760$ & $130515840$ & $2039310$ \\ %\hline
    $8$ & $1625430159360$ & $12698673120$ & $396833535/4$ \\
    \hline\hline
    \end{tabular}
\label{pie}
\end{table}

\section{Conclusions and Open Problems}
\label{se:conc}

In this paper we study several properties of the cumulants and of
their generating functions of the conductance, shot noise and Wigner
delay time in ballistic quantum dots for all the symmetry classes
$\beta \in \left \{1,2,4 \right \}$.  We focus on the behaviour for
finite numbers of quantum channels $n$ and at leading order as $n \to %"we focus on the behaviour for finite number of" replaced with "we focus on the behaviour for finite numbers of"
\infty$.  The number of channels can be interpreted as the
semiclassical parameter.

The main results are the limiting formulae as $n\to \infty$ of the
cumulants given in Theorems~\ref{maintheorem}
and~\ref{th:lim_cum_wdt}, as well as the differential
equations~\eqref{diffeqshot},~\eqref{diffeqcond} and~\eqref{wigode},
which are satisfied by the cumulant generating functions at finite
$n$.  Theorem~\ref{maintheorem} provides the weak localization
corrections of the mixed cumulants of the conductance and shot noise
for the symmetry classes $\beta=1,4$. Theorem~\ref{th:lim_cum_wdt}
concerns the leading order asymptotics of the cumulants of the Wigner
delay time when $\beta=2$.  The differential
equations~\eqref{diffeqshot}, \eqref{diffeqcond} and~\eqref{wigode}
apply to all symmetry classes $\beta \in \left \{1,2,4\right\}$.  When
$\beta=2$ we show that the cumulant generating function of the Wigner
delay time can be expressed in terms of the Painlev\'e
III${}^{\prime}$ transcendent and discover an interesting identity
between the probability distributions of the conductance and shot
noise.  Finally, we derive finite $n$ as well as asymptotic recurrence
relations that are very effective to compute cumulants of arbitrary
order.

The results for the conductance and shot noise apply not only to the
Wigner-Dyson symmetry classes, but also to those introduced by
Zirnbauer~\cite{Zir96} and Altland and Zirnbauer~\cite{AZ96,AZ97},
which characterize superconducting quantum dots (also known as Andreev
quantum dots).  However, the formulae for the Wigner delay time are
limited to the Wigner-Dyson symmetries.

A number of conjectures proposed by Khoruzhenko \textit{et
  al.}~\cite{KSS09} are proved as particular cases of
Theorem~\ref{maintheorem}.  They concern the weak localization
corrections of all the cumulants of the conductance (see
Eqs.~\eqref{goddconj} and~\eqref{gevenconj}) as well those of the even
cumulants of the shot noise (see Eq.~\eqref{pevenconj}).  In the same
article they conjectured the leading order~\eqref{poddconj} of the odd
cumulants of the shot noise.  Unfortunately, the latter are of
subleading order to our formulae; their proof remains an open problem.

Our study is far from being exhaustive and several interesting
questions remain unsolved.  Our approach is based on the integrable
theory of $\tau$-functions that are deformations of certain matrix
integrals.  These techniques were mainly developed by Adler and van
Moerbeke~\cite{AvM95,AvM01a,AvM01b,AvM02} and Adler \textit{et
  al.}~\cite{ASvM95,ASvM98,ASvM02}.  While these ideas appear to be
very effective to study the leading order asymptotics of the
cumulants, going beyond the leading order seems a very difficult task.
We discuss this aspect in Sec.~\ref{higherorder}.

Another challenging project would be to compute the leading order
asymptotics of the cumulants of the Wigner delay time when %%time delay -> delay time
$\beta=1,4$.  Such computations would involve a double-scaling limit
that appears to be intractable using the techniques in this paper.
Interestingly, we discover that at leading order the cumulants of the
Wigner delay time are integers when $\beta=2$, but are not
when $\beta=4$; when $\beta=1$ it remains an open problem.  

% It would
% also be interesting to understand the asymptotics of the moment
% generating function~\eqref{wignermgf} directly, instead of the
% limiting behaviour of the coefficients of its Taylor expansion.
% However, the singularity of the weighting function and the dependence
% on $n$ of the exponent~\eqref{eq:par_lag} makes the analysis of the
% double-scaling limit particularly difficult.  Some partial results are
% available~\cite{CI10}, but little is known about the asymptotics.

There is some evidence that the limiting formulae in Theorem
\ref{maintheorem} hold for any real $\beta>0$ (see Remark~\ref{ntc}),
\textit{i.e.} for the $\beta$-ensembles, which are of significant
interest in many applications of modern RMT. Generalizing the techniques %%replaced "it would be worth" with this sentence just expanding a little bit.
applied in this paper to arbitrary $\beta>0$ poses some fundamental problems, namely
it is not known whether there exist analogues of Eqs. (\ref{kpequation}) and (\ref{pfaff}) for other values of $\beta$. 

We conclude by pointing out that Dahlhaus \textit{et al.}~\cite{DBB10}
have suggested that the Andreev quantum dots of symmetry types DIII and CI %%replaced "suggested the" with "suggested that the"
-- in the Cartan classification --- could be realized experimentally
using graphene.  This opens the possibility that some of our formulae %%"This open" replaced with "This opens".
could be verified experimentally.

% For the special value $\beta=2$ we uncovered new connections to
% equations of Painlev\'e type, which led to the hitherto unknown result
% (\ref{altland}). Physically, equation (\ref{altland}) is quite
% surprising as one expects the shot noise to contain information over
% and above that of the conductance. We feel it would be important to
% give a physical explanation of (\ref{altland}), which at the level of
% generating functions represents a non-trivial identity between
% Painlev\'e transcendents. In \cite{DBB10} it is proposed that the
% T-CRE and T-CQE could be realized in graphene. Could the truth of
% (\ref{altland}) be experimentally verified?

% For the Wigner time-delay, the mechanism which underlies the sequences listed in Table \ref{pie} is not understood for $\beta=1$ and $\beta=4$, neither is the cause (physical or mathematical) underlying the dependence of their integrality on $\beta$. Furthermore, we feel that it would be interesting to understand (\ref{limwigvar}) from the semiclassical perspective, and if possible the non-perturbative formula (\ref{wigvar}).

\section*{Acknowledgements}

We would like to express our gratitude to Dmitry Savin for
interesting discussions on the topics in this paper.

\appendix

\section{Proof of the Differential Equation~\eqref{diffeqcond}}
\label{virconapp}

The derivations of the ODEs in
Theorems~\ref{pde:mix_csn},~\ref{ode:con} and~\ref{th:wdt_ode} are
based on the same approach. The projections at the origin of some the
partial derivatives of the $\tau$-function in the KP and Pfaff-KP
equations~\eqref{kpequation} and~\eqref{pkpequation} are expressed in
terms of the derivatives of the moments generating function using the
Virasoro constraints.  The calculations are involved, but standard in
the theory of integrable systems.  Here we discuss the derivation of
the ODE~\eqref{diffeqcond}, which is satisfied by the cumulant
generating function of the conductance.  In this case we need to write
the projections~\eqref{missingpds} in terms of
$\sigma_n^{(\beta)}(z)$ and its derivatives.

For ease of notation, let us define
\begin{equation*}
  o_{1} := (n\beta+2-\beta),\quad  o_{2} := (n\beta+3-3\beta/2), \quad
c: =\alpha n+n\bigl((n-1)\beta/2+1\bigr).
\end{equation*}
Recall that
\begin{equation*}
  \sigma_n^{(\beta)}(z) = \log \mathcal M_n^{(\beta)}(z) = \log
  \tau_n(\mathbf{t};z)\bigr \rvert_{\mathbf{t}=\mathbf{0}}
\end{equation*}
  and
\begin{equation*}
 \frac{\partial^{k} \log \tau_n(\mathbf{t};z)}{\partial t_{1}^{k}}
\bigg|_{\mathbf{t}=0} = (-1)^{k}\frac{d^{k}\sigma_n^{(\beta)}(z)}{dz^{k}}.
\end{equation*}

Setting $w=0$ in (\ref{shotvira1}) and (\ref{shotvira2}), we arrive at
the following Virasoro constraints:
\begin{subequations}
\begin{align}
\label{logvc1}
\frac{\mathcal{V}_{-1}\tau_n(\mathbf{t};z)}{\tau_{n}(\mathbf{t};z)} &
=\left(\sum_{i=1}^{\infty}it_{i}\left(\frac{\partial}{\partial
      t_{i+1}} -\frac{\partial}{\partial t_{i}}\right)\right .\\
& \quad \left.  +(\alpha+z+\delta/2+o_{1})\frac{\partial}{\partial t_{1}}
-z\frac{\partial}{\partial t_{2}}\right)\log \tau_n(\mathbf{t};z)-c=0 \notag\\ %corrected two typos in this formula
\intertext{and}
\label{logvc2}
\frac{\mathcal{V}_{0}\tau_n(\mathbf{t};z)}%
  {\tau_{n}(\mathbf{t};z)} &= \left(\sum_{i=1}^{\infty}it_{i}
   \left(\frac{\partial}{\partial t_{i+2}}-\frac{\partial}{\partial
       t_{i+1}}  \right)+\frac{\beta}{2}\frac{\partial^{2}}%
  {\partial t_{1}^{2}}-(o_{1}+\alpha)\frac{\partial}{\partial t_{1}}\right.\\
& \quad \left.+(\alpha+z+\delta/2+o_{2})\frac{\partial}{\partial t_{2}}
  -z\frac{\partial}{\partial t_{3}}\right)\log
\tau_n(\mathbf{t};z)\notag \\
 & \quad +\frac{\beta}{2}\left(\frac{\partial 
 \log \tau_n(\mathbf{t};z)}{\partial t_{1}}\right)^{2}=0. \notag %corrected one typo in this formula
\end{align}
\end{subequations}

Setting $\mathbf{t}=\mathbf{0}$ in \eqref{logvc1} and \eqref{logvc2},
respectively, leads to the relations
\begin{align*}
X_{1} := \frac{\partial \log \tau_n(\mathbf{t};z)}%
{\partial t_{2}}\bigg|_{\mathbf{t}=\mathbf{0}} & = 
- \frac{1}{z}\bigl(c+(\alpha+z+\delta/2+o_{1})\sigma_n^{(\beta)\prime}\bigr),\\
X_{2} := \frac{\partial \log \tau_n(\mathbf{t};z)}{\partial t_{3}}
\bigg|_{\mathbf{t}=\mathbf{0}} &=\frac{1}{z}\left(\frac{\beta}{2}
\sigma_n^{(\beta)\prime \prime }+\frac{\beta}{2}\bigl(\sigma_n^{(\beta)
  \prime}\bigr)^{2} \right.\\
&\quad +(o_{1}+\alpha)\sigma_n^{(\beta) \prime}
+(\alpha+z+\delta/2 +o_{2})X_{1}\Bigr).
\end{align*}

Differentiating both \eqref{logvc1} and \eqref{logvc2} with respect to
$t_{1}$ and setting $\mathbf{t}=\mathbf{0}$ yields two further
relations
\begin{align*}
X_{3} := \frac{\partial^{2}\log \tau_n(\mathbf{t};z)}{\partial t_{1}
 \partial t_{2}}\bigg|_{\mathbf{t}=\mathbf{0}} &= \frac{1}{z}
\left(\sigma_n^{(\beta)\prime}+X_{1}+(\alpha+z+d/2+o_{1})\sigma_n^{(\beta)\prime %corrected one typo in this formula
  \prime}\right),\\
X_{4} := \frac{\partial^{2}\log \tau_n(\mathbf{t};z)}
{\partial t_{1} \partial t_{3}}\bigg|_{\mathbf{t}=\mathbf{0}} &= \frac{1}{z}
\Bigl(X_{2}-X_{1}-(\beta/2)\bigl(\sigma_n^{(\beta) 
\prime \prime \prime}+2\sigma_n^{(\beta)\prime}
\sigma_n^{(\beta)\prime \prime}\bigr)\\
&\quad -(o_{1}+\alpha)\sigma_n^{(\beta)\prime
    \prime}+(\alpha+z+\delta/2  +o_{2})X_{3}\Bigr). %corrected one typo in this formula
\end{align*}
Taking the partial derivatives of the constraint~\eqref{logvc1} 
with respect to $t_{2}$ and setting $\mathbf{t}=\mathbf{0}$ gives 
\begin{equation*}
X_{5} := \frac{\partial^{2}\log \tau_n(\mathbf{t};z)}{\partial t_{2}^{2}}
\bigg|_{\mathbf{t}=\mathbf{0}} = \frac{1}{z}\bigl(2X_{2}-2X_{1}
+(\alpha + \delta/2 +z+o_{1})X_{3}\bigr).
\end{equation*}
Therefore, each $X_{i}$, $i=1,\dotsc,5$, can be expressed solely in
terms of known parameters and derivatives of $\sigma^{(\beta)}_{n}(z)$.
Finally, evaluating the Pfaff-KP and KP equations~\eqref{pkpequation}
and~\eqref{kpequation} at $\mathbf{t}=\mathbf{0}$ and inserting
$X_{4}$ and $X_{5}$ yields~\eqref{diffeqcond}.

\section{The Constants $b^{(\beta)}_{n}$ and $d^{(\beta)}_{n}$}
\subsection{Conductance and Shot Noise}
\label{bnapp}
The constant $b^{(\beta)}_{n}$, as introduced in Eq.~\eqref{bnbeta},
is defined in terms of Selberg's integral, whose explicit expression
is given in Eq.~\eqref{eq:selberg}.

The  ratio of Selberg's integrals in \eqref{bnbeta} 
involves a large number of cancellations of Gamma
functions. Straightforwards algebra gives 
\begin{subequations}
\label{bnforms}
\begin{align}
\label{bn1form}
b^{(1)}_{n} & := \frac{n(n-1)}{(n+1)(n+2)}\frac{c^{(1)}_{n-2}c^{(1)}_{n+2}}%
{\bigl(c^{(1)}_{n}\bigr)^{2}} \\
&\;= \frac{n(n-1)(2\alpha+n)(2\alpha+n+1)(\delta
  + n)(\delta+n+1)}{16(\delta/2+\alpha+n)(\delta/2+\alpha+n+1)^{2}
(\delta/2+\alpha+n+2)}\notag \notag \\
&\quad \times \frac{(\delta+2\alpha+n+1)(\delta+2\alpha+n+2)}%
{(\delta+2\alpha+2n-1)(\delta+2\alpha+2n+1)^{2}(\delta+2\alpha+2n+3)},\notag
\intertext{while for $\beta=4$}
\label{bn4form}
b^{(4)}_{n} & :=
\frac{n}{n+1}\frac{c^{(4)}_{n-1}c^{(4)}_{n+1}}{\bigl(c^{(4)}_{n}\bigr)^{2}} \\
&\;  = \frac{2n(2n+1)(\alpha+2n)(\alpha+2n-1)(\delta/2+2n)(\delta/2+2n-1)}%
{(\delta/2+\alpha+4n)(\delta/2+\alpha+4n-2)^{2}(\delta/2+\alpha+4n-4)}\notag\\
&\quad\times \frac{(\delta/2+\alpha+2n-1)(\delta/2+\alpha+2n-2)}%
{(\delta/2+\alpha+4n+1)(\delta/2+\alpha+4n-1)^{2}(\delta/2+\alpha+4n-3)}.\notag 
\end{align}
\end{subequations}

\subsection{Wigner Delay Time}
\label{ap:dn}
The parameter $d^{(\beta)}_{n}$, see Eq.~\eqref{dnbeta},
has the same structure as $b_n^{(\beta)}$, with the difference that
Selberg's integral is replaced by the normalization constant
$m^{(\beta)}_{n}$, which appears in the definition of the partition
function~\eqref{wignermgf}.  Similar calculations to those leading
to Eqs.~\eqref{bn1form}  and~\eqref{bn4form} give
\begin{align*}
d^{(1)}_{n} & := \frac{n(n-1)}{(n+1)(n+2)}
\frac{m^{(1)}_{n-2}m^{(1)}_{n+2}}{\bigl(m^{(1)}_{n}\bigr)^{2}}\\
&= \frac{n(n-1)(2b-2-n)(2b-1-n)}%
{(b-n)(2b-2n+1)(b-n-2)(2b-2n-3)(b-1-n)^{2}(2b-2n-1)^{2}}
\intertext{and}
d^{(4)}_{n} & :=
\frac{n}{n+1}\frac{m^{(4)}_{n-1}m^{(4)}_{n+1}}{\bigl(m^{(4)}_{n}\bigr)^{2}}\\
  &=  \frac{2n(2n+1)(b+2-2n)(b+1-2n)}%
{(b+3-4n)(b+1-4n)^{2}(b+2-4n)^{2}(b-1-4n)(b-4n)(b-4n+4)}.
\end{align*}
\subsection{Dualities}
One easily sees that the constants $b_n^{(\beta)}$ and $d_n^{(\beta)}$
display the following dualities between the symmetry classes
$\beta=4$ and $\beta=1$:
\begin{subequations}
\begin{align}
\label{jacobidual}
b^{(4)}_{n} &= b^{(1)}_{n}\Bigr\rvert_{\delta \to -\delta/2, 
\alpha \to -\alpha/2, n \to -2n},\\
16d^{(4)}_{n} &= d^{(1)}_{n}\Bigr\rvert_{b \to -b/2,n \to -2n}.
\end{align}
\end{subequations}
The duality (\ref{jacobidual}), and other similar relations, were also
mentioned by Adler and van Moerbeke~\cite{AvM01a}.

\section{Proof of  Eq.~\eqref{lrsesol}}
\label{ap:p_th}
We now prove that \eqref{lrsesol} is the unique solution of the
initial value problem \eqref{lrse} with initial conditions
\eqref{conic1}.

First observe that the right-hand side of Eq.~\eqref{lrsesol} is
expressible in terms of Jacobi polynomials using the identity
\begin{equation}
\label{jident}
\sum_{j=0}^{k}\binom{2j+2l}{j+l}
\binom{k}{j}(-2)^{-j} = \frac{(2l)!k!}{l!(k+l)!}P^{(l,-k-1/2)}_{k}(-3).
\end{equation}
Recall that the Jacobi Polynomials can be defined by
\begin{equation*}
P^{(a,b)}_{n}(x) := \sum_{j=0}^{n}\binom{n+a}{j}\binom{n+b}{j+b}
\left(\frac{x-1}{2}\right)^{n-j}\left(\frac{x+1}{2}\right)^{j}.
\end{equation*}
The identity~\eqref{jident} is a consequence of this
formula.

Inserting the right-hand side of \eqref{jident} into \eqref{lrsesol}
and substituting $\tilde \kappa_{2l,k}$ into~\eqref{lrse} gives
\begin{multline}
\label{proveme1}
(4l+3k+2)(k+1)P^{(l,-k-3/2)}_{k+1}(-3)+(2l+1)(k+l+1)
P^{(l+1,-k+1/2)}_{k-1}(-3)\\
+(2l+k)(2l+1)P^{(l+1,-k-1/2)}_{k}(-3)-3(k+l)(k+l+1)P^{(l,-k+1/2)}_{k-1}(-3)=0.
\end{multline}
This identity can be proved using well known properties of the Jacobi
polynomials. Firstly, we identify the parameters of each Jacobi
polynomial in \eqref{proveme1} using the identities (see,
\textit{e.g.},~\cite{AS72}, Eqs.~22.7.18 and 22.7.19)
\begin{align*}
(2n+a+b)P^{(a,b-1)}_{n}(x)&=(n+a+b)P^{(a,b)}_{n}(x)+(n+a)P^{(a,b)}_{n-1}(x),\\
(2n+a+b)P^{(a-1,b)}_{n}(x) &= (n+a+b)P^{(a,b)}_{n}(x)-(n+b)P^{(a,b)}_{n-1}(x).
\end{align*}
Then, we repeatedly apply the three-term recurrence relation (see,
\textit{e.g.},~\cite{AS72}, Eq.~22.7.1)
\begin{equation*}
\begin{split}
xP^{(a,b)}_{n}(x)&=\frac{2(n+1)(a+b+n+1)(2n+a+b)P^{(a,b)}_{n+1}(x)}%
{(a+b+2n)(a+b+2n+1)(a+b+2n+2)}\\
& \quad +\frac{(2n+a+b+1)(b^{2}-a^{2})P^{(a,b)}_{n}(x)
+2(a+n)(b+n)(a+b+2n+2)P^{a,b}_{n-1}(x)}{(a+b+2n)(a+b+2n+1)(a+b+2n+2)}. 
\end{split}
\end{equation*}

We used a similar technique to prove equalities such
as~\eqref{proveme1} in a previous publication~\cite{MS12} (see,
\textit{e.g.}, Proposition 3.10).

\section{The Solution of the Recurrence Relation~\eqref{rec1}}
\label{se:solrec}

The recurrence relations~\eqref{rec1} and~\eqref{limwigrecintro} can
be solved using the method of generating functions.  We now give a
flavour of the main ideas involved by discussing in some detail the
example of Eq.~\eqref{rec1}.

The change of variables~\eqref{eq:r_cum} turns Eq.~\eqref{rec1}
into~\eqref{rec1_r}. Let us first consider the even cumulants. By
replacing $l$ with $2l-1$ in \eqref{rec1_r}, we obtain
\begin{equation}
\label{wigevencums}
-2l\tilde{\kappa}_{2l}+(2l-1)(2l-2)(8l-9)\tilde{\kappa}_{2l-2}
=3(2l-1)(2l-2)(2l-3)(2l-4)(2l-5)\tilde{\kappa}_{2l-4},
\end{equation}
which we need to solve subject to the initial conditions
$\tilde{\kappa}_{4}=3/4$ and $\tilde{\kappa}_{6}=15/2$. Introducing
the sequence $p_{l}=\tilde{\kappa}_{2l}/(2l-1)!$, $l=2,3,\dotsc,$ the
recurrence \eqref{wigevencums} becomes
\begin{equation}
\label{precapp}
-2lp_{l}+(8l-9)p_{l-1}=3(2l-5)p_{l-2},
\end{equation}
with initial conditions $p_{2}=1/8$ and
$p_{3}=1/16$. 

The difference equation~\eqref{precapp} is transformed into a
differential equation for the generating function $\tilde{F}_{\rm
  e}(y)$, defined in \eqref{eq:res_gf_even}, by multiplying both side
by $y^l$ and summing over $l$ from $4$ to $\infty$. For example, we
have
\begin{equation*}
\sum_{l=4}^{\infty}(-2lp_{l})y^{l}=-2y\frac{d}{dy}
\sum_{l=4}^{\infty}p_{l}y^{l}=-2y\frac{d}{dy}
\left(\tilde{F}_{\rm e}(y)-y^{2}/8-y^{3}/16\right).
\end{equation*}
Repeating this procedure for each term in (\ref{precapp}) and
simplifying leads to
\begin{equation}
\label{precapp2}
y(3y-1)\left(y-2\tilde{F}_{\rm e}+4(y-1)\frac{d\tilde{F}_{\rm e}}{dy}\right)=0,
\end{equation}
with initial condition $\tilde F_{\rm e}(0)=0$.

For any $y$ in the punctured disk $\{y \in \mathbb{C} \mid 0 < |y| <
1/3\}$, we divide both sides of~\eqref{precapp2} by $y(3y-1)$ and %%typo, replaced "bu" with "by".
solve the resulting first order linear equation by the method of integrating %%using the integrating factor -> by the method of integrating factors.
factors. The solution is 
\begin{equation*}
\tilde{F}_{e}(y) = 1-\frac{y}{2}- \sqrt{1-y}.
\end{equation*}

The odd cumulants can be found in a similar fashion.  The variables
$p_{l}=\tilde{\kappa}_{2l+1}/(2l)!$ satisfy the recurrence
\begin{equation}
\label{precapp3}
-(2l+1)p_{l}+(8l-5)p_{l-1}=6(l-1)p_{l-2},
\end{equation}
with initial conditions $p_{1}=1$ and $p_{2}=1$. Then, one can show
that the generating function $\tilde{F}_{\rm o}(y)$, defined in
Eq.~\eqref{eq:res_gf_odd} satisfies the linear ODE
\begin{equation*}
  2y(1-3y)(1-y)\frac{d\tilde{F}_{\rm o}(y)}{dy} + (6y^2 -3y +
  1)\tilde{F}_{\rm o}(y) -  3y(1-2y)=0, 
\end{equation*}
with initial condition $\tilde{F}_{\rm o}(0)=0$. This equation can be
solved by the method of integrating factors. The solution is
\begin{equation*}
\tilde{F}_{\rm o}(y)=\frac{y}{1-y}.
\end{equation*}

Alternatively, one easily verifies by direct substitution that
$p_{l}=1$ for all $l>0$ solve Eq.~\eqref{precapp3}.

\bibliographystyle{amsplain}
\bibliography{chaotic_cav_bib}

\providecommand{\bysame}{\leavevmode\hbox to3em{\hrulefill}\thinspace}
\providecommand{\MR}{\relax\ifhmode\unskip\space\fi MR }
% \MRhref is called by the amsart/book/proc definition of \MR.
\providecommand{\MRhref}[2]{%
  \href{http://www.ams.org/mathscinet-getitem?mr=#1}{#2}
}
\providecommand{\href}[2]{#2}
\begin{thebibliography}{10}

\bibitem{AS72}
M.~Abramowitz and I.~A. Stegun, \emph{Handbook of mathematical functions},
  Dover Publications Inc., New York, NY, 1972.

\bibitem{ASvM95}
M.~Adler, T.~Shiota, and P.~van Moerbeke, \emph{Random matrices, vertex
  operators and the {V}irasoro algebra}, Phys. Lett. A \textbf{208} (1995),
  no.~1--2, 67--78.

\bibitem{ASvM98}
\bysame, \emph{Random matrices, {V}irasoro algebras, and noncommutative {KP}},
  Duke Math. J. \textbf{94} (1998), no.~2, 379--431.

\bibitem{ASvM02}
\bysame, \emph{Pfaff $\tau$-functions}, Math. Ann. \textbf{322} (2002), no.~3,
  423--476.

\bibitem{AvM95}
M.~Adler and P.~van Moerbeke, \emph{Matrix integrals, {T}oda symmetries,
  {V}irasoro constraints, and orthogonal polynomials}, Duke Math. J.
  \textbf{80} (1995), no.~3, 863--911.

\bibitem{AvM01a}
\bysame, \emph{Hermitian, symmetric and symplectic random ensembles: {PDE}s for
  the distribution of the spectrum}, Ann. Math. (2) \textbf{153} (2001), no.~1,
  149--189.

\bibitem{AvM01b}
\bysame, \emph{Integrals over classical groups, random permutations, {T}oda and
  {T}oeplitz lattices}, Commun. Pure Appl. Math. \textbf{54} (2001), no.~2,
  153--205.

\bibitem{AvM02}
\bysame, \emph{Toda versus {P}faff lattice and related polynomials}, Duke Math.
  J. \textbf{112} (2002), no.~1, 1--58. \MR{1890646 (2003i:37079)}

\bibitem{AZ96}
A.~Altland and M.~R. Zirnbauer, \emph{Random matrix theory of a chaotic
  {A}ndreev quantum dot}, Phys. Rev. Lett. \textbf{76} (1996), no.~18,
  3420--3423.

\bibitem{AZ97}
\bysame, \emph{Nonstandard symmetry classes in mesoscopic
  normal-superconducting hybrid structures}, Phys. Rev. B \textbf{55} (1997),
  no.~2, 1142--1161.

\bibitem{Alt85}
B.~L. Altshuler, \emph{Fluctuations in the extrinsic conductivity of disordered
  conductors}, JETP Lett. \textbf{41} (1985), no.~12, 648--651.

\bibitem{BM94}
H.~U. Baranger and P.~A. Mello, \emph{Mesoscopic transport through chaotic
  cavities: {A} random {S}-matrix theory approach}, Phys. Rev. Lett.
  \textbf{73} (1994), no.~1, 142--145.

\bibitem{Bee93}
C.~W.~J. Beenakker, \emph{Universality in the random-matrix theory of quantum
  transport}, Phys. Rev. Lett. \textbf{70} (1993), no.~8, 1155--1158.

\bibitem{Bee97}
\bysame, \emph{Random-matrix theory of quantum transport}, Rev. Mod. Phys.
  \textbf{69} (1997), no.~3, 731--808.

\bibitem{BHN08}
G.~Berkolaiko, J.~Harrison, and M.~Novaes, \emph{Full counting statistics of
  chaotic cavities from classical action correlations}, J. Phys. A: Math.
  Theor. \textbf{41} (2008), no.~36, 365102 (12pp).

\bibitem{BK10}
G.~Berkolaiko and J.~Kuipers, \emph{Moments of the {W}igner delay times}, J.
  Phys. A: Math. Theor. \textbf{43} (2010), no.~3, 035101 (18pp).

\bibitem{BK11}
\bysame, \emph{Transport moments beyond the leading order}, New J. Phys.
  \textbf{13} (2011), no.~6, 063020 (40pp).

\bibitem{BK12}
\bysame, \emph{Universality in chaotic quantum transport: {T}he concordance
  between random-matrix and semiclassical theories}, Phys. Rev. E \textbf{85}
  (2012), no.~4, 045201.

\bibitem{Ber85b}
M.~V. Berry, \emph{Semiclassical theory of spectral rigidity}, Proc. R. Soc.
  Lond. A \textbf{400} (1985), no.~1819, 229--251.

\bibitem{BS88}
R.~Bl\"umel and U.~Smilansky, \emph{Classical irregular scattering and its
  quantum-mechanical implications}, Phys. Rev. Lett. \textbf{60} (1988), no.~6,
  477--480.

\bibitem{BS90}
\bysame, \emph{Random-matrix description of chaotic scattering: Semiclassical
  approach}, Phys. Rev. Lett. \textbf{64} (1990), no.~3, 241--244.

\bibitem{BGS84}
O.~Bohigas, M.~J. Giannoni, and C.~Schmit, \emph{Characterization of chaotic
  quantum spectra and universality of level fluctuation laws}, Phys. Rev. Lett.
  \textbf{52} (1984), no.~1, 1--4.

\bibitem{BG13b}
G.~Borot and A.~Guionnet, \emph{Asymptotic expansion of {$\beta$} matrix models
  in the multi-cut regime}, eprint: \texttt{arXiv:1303.1045v2}, 2013.

\bibitem{BG13a}
\bysame, \emph{Asymptotic expansion of {$\beta$} matrix models in the one-cut
  regime}, Commun. Math. Phys. \textbf{317} (2013), no.~2, 447--483.
  \MR{3010191}

\bibitem{BHMH06}
P.~Braun, S.~Heusler, S.~M\"uller, and F.~Haake, \emph{Semiclassical prediction
  for shot noise in chaotic cavities}, J. Phys. A: Math. Gen. \textbf{39}
  (2006), no.~11, L159--L165.

\bibitem{BFB97}
P.~W. Brouwer, K.~M. Frahm, and C.~W.~J. Beenakker, \emph{Quantum mechanical
  time-delay matrix in chaotic scattering}, Phys. Rev. Lett. \textbf{78}
  (1997), no.~25, 4737--4740.

\bibitem{CI10}
Y.~Chen and A.~Its, \emph{{P}ainlev\'e {III} and a singular linear statistics
  in {H}ermitian random matrix ensembles, {I}}, J. Approx. Theor. \textbf{162}
  (2010), no.~2, 270--297.

\bibitem{Cos00}
C.~M. Cosgrove, \emph{Chazy classes {IX}-{XI} of third-order differential
  equations}, Stud. Appl. Math. \textbf{104} (2000), no.~3, 171--228.

\bibitem{CS93}
C.~M. Cosgrove and G.~Scoufis, \emph{Painlev\'e classification of a class of
  differential equations of the second order and second degree}, Stud. Appl.
  Math. \textbf{88} (1993), no.~1, 25--87.

\bibitem{DBB10}
J.~P. Dahlhaus, B.~B\'{e}ri, and C.~W.~J. Beenakker, \emph{Random-matrix thory
  of thermal conduction in superconducting quantum dots}, Phys. Rev. B
  \textbf{82} (2010), no.~1, 014536.

\bibitem{DE01}
P.~Diaconis and S.~N. Evans, \emph{Linear functionals of eigenvalues of random
  matrices}, Trans. Amer. Math. Soc. \textbf{353} (2001), no.~7, 2615--2633.
  \MR{1828463 (2002d:60003)}

\bibitem{DS94}
P.~Diaconis and M.~Shahshahani, \emph{On the eigenvalues of random matrices},
  J. Appl. Probab. \textbf{31A} (1994), 49--62. \MR{1274717 (95m:60011)}

\bibitem{Due04}
E.~Due\~{n}ez, \emph{Random matrix ensembles associated to compact symmetric
  spaces}, Commun. Math. Phys. \textbf{244} (2004), no.~1, 29--61.

\bibitem{DE06}
I.~Dumitriu and A.~Edelman, \emph{Global spectrum fluctuations for the
  $\beta$-{H}ermite and $\beta$-{L}aguerre ensembles via matrix models}, J.
  Math. Phys. \textbf{47} (2006), no.~6, 063302.

\bibitem{DP12}
I.~Dumitriu and E.~Paquette, \emph{Global fluctuations for linear statistics of
  $\beta$-{J}acobi ensembles}, Random Matrices: Theo. Appl. \textbf{1} (2012),
  no.~4, 1250013 (60 pp.).

\bibitem{For06a}
P.~J. Forrester, \emph{Evenness symmetry and inter-relationships between gap
  probabilities in random matrix theory}, Forum Math. \textbf{18} (2006),
  no.~5, 711--743.

\bibitem{For06b}
\bysame, \emph{Quantum conductance problems and the {J}acobi ensemble}, J.
  Phys. A: Math. Gen. \textbf{39} (2006), no.~22, 6861--6870.

\bibitem{For10}
\bysame, \emph{Log-gases and random matrices}, Princeton University Press,
  Priceton, NJ, 2010.

\bibitem{FSS97}
Y.~V. Fyodorov, D~V. Savin, and H.-J. Sommers, \emph{Parametric correlations of
  phase shifts and statistics of time delays in quantum chaotic scattering:
  {C}rossover between unitary and orthogonal symmetries}, Phys. Rev. E
  \textbf{55} (1997), no.~5, R4857--R4860.

\bibitem{FS97}
Y.~V. Fyodorov and H.-J. Sommers, \emph{Statistics of resonance poles, phase
  shifts and time delays in quantum chaotic scattering: {R}andom matrix
  approach for systems with broken time-reversal invariance}, J. Math. Phys.
  \textbf{38} (1997), no.~4, 1918--1981.

\bibitem{GIKM12}
S.~Garoufalidis, A.~Its, A.~Kapaev, and M.~Mari{\~{n}}o, \emph{Asymptotics of
  the instantons of {P}ainlev\'e {I}}, Int. Math. Res. Not. \textbf{2012}
  (2012), no.~3, 561--606.

\bibitem{GMMMO91}
A.~Gerasimov, A.~Marshakov, A.~Mironov, A.~Morozov, and A.~Orlov, \emph{Matrix
  models of two-dimensional gravity and {T}oda theory}, Nucl. Phys. B
  \textbf{357} (1991), 565--618.

\bibitem{HMBH06}
S.~Heusler, S.~M\"uller, P.~Braun, and F.~Haake, \emph{Semiclassical theory of
  chaotic conductors}, Phys. Rev. Lett. \textbf{96} (2006), 066804.

\bibitem{HJK02}
A.~N.~W. Hone, N.~Joshi, and A.~V. Kitaev, \emph{An entire function defined by
  a nonlinear recurrence relation}, J. Lond. Math. Soc. (2) \textbf{66} (2002),
  no.~2, 377--387.

\bibitem{IWZ90b}
S.~Iida, H.~A. Weidenm\"uller, and J.~A. Zuk, \emph{Statistical
  scattering-theory, the supersymmetry method and universal conductance
  fluctuations}, Ann. Phys. \textbf{200} (1990), no.~2, 219--270.

\bibitem{IWZ90a}
\bysame, \emph{Wave propagation through disordered media and universal
  conductance fluctuations}, Phys. Rev. Lett. \textbf{64} (1990), no.~5,
  583--586.

\bibitem{JPB94}
R.~A. Jalabert, J.-L. Pichard, and C.~W.~J. Beenakker, \emph{Universal quantum
  signatures of chaos in ballistic transport}, EPL \textbf{27} (1994), no.~4,
  255--260.

\bibitem{Joh97}
K.~Johansson, \emph{On random matrices from the compact classical groups}, Ann.
  of Math. (2) \textbf{145} (1997), no.~3, 519--545. \MR{1454702 (98e:60016)}

\bibitem{Joh98}
\bysame, \emph{On fluctuations of eigenvalues of random {H}ermitian matrices},
  Duke Math. J. \textbf{91} (1998), no.~1, 151--204. \MR{1487983 (2000m:82026)}

\bibitem{JK01}
N.~Joshi and A.V. Kitaev, \emph{On {B}outroux's tritronqu\'ee solutions of the
  first {P}ainlev\'e equation}, Stud. Appl. Math. \textbf{107} (2001),
  253--291.

\bibitem{KSS09}
B.~A. Khoruzhenko, D.~V. Savin, and H.-J. Sommers, \emph{Systematic approach to
  statistics of conductance and shot-noise in chaotic cavities}, Phys. Rev. B
  \textbf{80} (2009), no.~12, 125301.

\bibitem{KS10}
T.~Kriecherbauer and M.~Shcherbina, \emph{Fluctuations of eigenvalues of matrix
  models and their applications}, eprint: \texttt{arXiv:1003.6121}, 2010.

\bibitem{KEBPWR11}
J.~Kuipers, T.~Engl, G.~Berkolaiko, C.~Petitjean, D.~Waltner, and K.~Richter,
  \emph{Density of states of chaotic {A}ndreev billiards}, Phys. Rev. B
  \textbf{83} (2011), no.~19, 195316.

\bibitem{KS07}
J.~Kuipers and M.~Sieber, \emph{Semiclassical expansion of parametric
  correlation functions of the quantum time delay}, Nonlinearity \textbf{20}
  (2007), no.~4, 909--926.

\bibitem{KS08}
\bysame, \emph{Semiclassical relation between open trajectories and periodic
  orbits for the {W}igner time delay}, Phys. Rev. E \textbf{77} (2008), no.~4,
  046219.

\bibitem{KP10b}
S.~Kumar and A.~Pandey, \emph{Conductance distributions in chaotic mesoscopic
  cavities}, J. Phys. A: Math. Theor. \textbf{43} (2010), no.~28, 285101.

\bibitem{KP10a}
\bysame, \emph{Jacobi crossover ensembles of random matrices and statistics of
  transmission eigenvalues}, J. Phys. A: Math. Theor. \textbf{43} (2010),
  no.~8, 085001.

\bibitem{LS85}
P.~A. Lee and A.~D. Stone, \emph{Universal conductance fluctuations in metals},
  Phys. Rev. Lett. \textbf{55} (1985), no.~15, 1622--1625.

\bibitem{LSSS95}
N.~Lehmann, D.~V. Savin, V.~V. Sokolov, and H.-J. Sommers, \emph{Time delay
  correlations in chaotic scattering: random matrix approach}, Physica D
  \textbf{86} (1995), no.~4, 572--585.

\bibitem{LV04}
C.~H. Lewenkopf and R.~O. Vallejos, \emph{Open orbits and the semiclassical
  dwell time}, J. Phys. A: Math. Gen. \textbf{37} (2004), no.~1, 131--136.

\bibitem{LV11}
G.~Livan and P.~Vivo, \emph{Moments of {W}ishart-{L}aguerre and {J}acobi
  ensembles of random matrices: application to the quantum transport problem in
  chaotic cavities}, Acta Phys. Pol. B \textbf{42} (2011), no.~5, 1081--1104.

\bibitem{LTV03}
A.~Lozano, A.~M. Tulino, and S.~Verd\'u, \emph{Multiple-antenna capacity in the
  low-power regime}, IEEE Trans. Infor. Theo. \textbf{49} (2003), no.~10,
  2527--2544.

\bibitem{Meh04}
M.~L. Mehta, \emph{Random matrices}, third ed., Elsevier Inc., San Diego, CA,
  2004.

\bibitem{MS11}
F.~Mezzadri and N.~J. Simm, \emph{Moments of the transmission eigenvalues,
  proper delay times and random matrix theory. {I}}, J. Math. Phys. \textbf{52}
  (2011), no.~10, 103511. \MR{2894609}

\bibitem{MS12}
\bysame, \emph{Moments of the transmission eigenvalues, proper delay times and
  random matrix theory {II}}, J. Math. Phys. \textbf{53} (2012), no.~5, 053504
  (42pp).

\bibitem{MM90}
A.~Mironov and A.~Morozov, \emph{On the origin of {V}irasoro constraints in
  matrix models: {L}agrangian approach}, Phys. Lett. B. \textbf{252} (1990),
  47--52.

\bibitem{MHBHA04}
S.~M\"uller, S.~Heusler, P.~Braun, F.~Haake, and A.~Altland,
  \emph{Semiclassical foundation of universality in quantum chaos}, Phys. Rev.
  Lett. \textbf{93} (2004), no.~1, 014103.

\bibitem{MHBHA05}
\bysame, \emph{Periodic-orbit theory of universality in quantum chaos}, Phys.
  Rev. E \textbf{72} (2005), no.~4, 046207.

\bibitem{Nov07}
M.~Novaes, \emph{Full counting statistics of chaotic cavities with many open
  channels}, Phys. Rev. B \textbf{75} (2007), no.~7, 073304.

\bibitem{Nov08}
\bysame, \emph{Statistics of quantum transport in chaotic cavities with broken
  time-reversal symmetry}, Phys. Rev. B \textbf{78} (2008), no.~3, 035337.

\bibitem{OK07}
V.~A. Osipov and E.~Kanzieper, \emph{Are bosonic replicas faulty?}, Phys. Rev.
  Lett. \textbf{99} (2007), no.~5, 050602.

\bibitem{OK08}
\bysame, \emph{Integrable theory of quantum transport in chaotic cavities},
  Phys. Rev. Lett. \textbf{101} (2008), no.~17, 176804.

\bibitem{OK09}
\bysame, \emph{Statistics of thermal to shot noise crossover in chaotic
  cavities}, J. Phys. A: Math. Theor. \textbf{42} (2009), no.~47, 475101.

\bibitem{Pol89}
H.~D. Politzer, \emph{Random-matrix description of the distribution of
  mesoscopic conductance}, Phys. Rev. B \textbf{40} (1989), no.~17, 917--919.

\bibitem{RS02}
K.~Richter and M.~Sieber, \emph{Semiclassical theory of chaotic quantum
  transport}, Phys. Rev. Lett. \textbf{89} (2002), no.~20, 206801.

\bibitem{SFS01}
D.~V. Savin, Y.~V. Fyodorov, and H.-J. Sommers, \emph{Reducing nonideal to
  ideal coupling in random matrix description of chaotic scattering:
  Application to the time-delay problem}, Phys. Rev. E \textbf{63} (2001),
  no.~3, 035202.

\bibitem{SS03}
D.~V. Savin and H.-J. Sommers, \emph{Delay times and reflection in chaotic
  cavities with absorption}, Phys. Rev. E \textbf{68} (2003), no.~3, 036211.

\bibitem{SS06}
\bysame, \emph{Shot noise in chaotic cavities with an arbitrary number of open
  channels}, Phys. Rev. B \textbf{73} (2006), no.~8, 081307.

\bibitem{SSW08}
D.~V. Savin, H.-J. Sommers, and W.~Wieczorek, \emph{Nonlinear statistics of
  quantum transport in chaotic cavities}, Phys. Rev. B \textbf{77} (2008),
  no.~12, 125332.

\bibitem{SSS01}
H.-J. Sommers, D.~V. Savin, and V.~V. Sokolov, \emph{Distribution of proper
  delay times in quantum chaotic scattering: A crossover from ideal to weak
  coupling}, Phys. Rev. Lett. \textbf{87} (2001), no.~9, 094101.

\bibitem{SWS07}
H.-J. Sommers, W.~Wieczorek, and D.~V. Savin, \emph{Statistics of conductance
  and shot-noise power for chaotic cavities}, Acta Phys. Pol. A \textbf{112}
  (2007), no.~4, 691--697.

\bibitem{TM13}
C.~Texier and S.~N. Majumdar, \emph{Wigner time-delay distribution in chaotic
  chavities and freezing transition}, eprint: \texttt{arXiv:1302.1881}, 2013.

\bibitem{TSY96}
M.~H. Tu, J.~C. Shaw, and H.~C. Yen, \emph{A note on integrability in matrix
  models}, Chinese J. Phys. \textbf{34} (1996), no.~5, 1211--1220.

\bibitem{VL01}
R.~O. Vallejos and C.~H. Lewenkopf, \emph{On the semiclassical theory for
  universal transmission fluctuations in chaotic systems: the importance of
  unitarity}, J. Phys. A: Math. Gen. \textbf{34} (2001), no.~13, 2713--2721.

\bibitem{VOdAL98}
R.~O. Vallejos, A.~M. Ozorio~de Almeida, and C.~H. Lewenkopf, \emph{Quantum
  time delay in chaotic scattering: a semiclassical approach}, J. Phys. A:
  Math. Gen. \textbf{31} (1998), no.~21, 4885--4897.

\bibitem{VK12}
P.~Vidal and E.~Kanzieper, \emph{Statistics of reflection eigenvalues in
  chaotic cavities with nonideal leads}, Phys. Rev. Lett. \textbf{108} (2012),
  no.~20, 206806.

\bibitem{VMB08}
P.~Vivo, S.~N. Majumdar, and O.~Bohigas, \emph{Distributions of conductance and
  shot noise and associated phase transitions}, Phys. Rev. Lett. \textbf{101}
  (2008), no.~21, 216809.

\bibitem{VMB10}
\bysame, \emph{Probability distributions of linear statistics in chaotic
  cavities and associated phase transitions}, Phys. Rev. B \textbf{81} (2010),
  no.~10, 104202.

\bibitem{WW86}
S.~Washburn and R.~A. Webb, \emph{Aharonov-{B}ohm effect in normal metal
  quantum coherence and transport}, Adv. Phys. \textbf{35} (1986), no.~4,
  375--422.

\bibitem{WFC00}
N.~S. Witte, P.~J. Forrester, and C.~M. Cosgrove, \emph{Gap probabilities for
  edge intervals in finite {G}aussian and {J}acobi unitary matrix ensembles},
  Nonlinearity \textbf{13} (2000), no.~5, 1439--1464. \MR{1781802
  (2001m:82038)}

\bibitem{Zir96}
M.~R. Zirnbauer, \emph{Riemannian symmetric superspaces and their origin in
  random-matrix theory}, J. Math. Phys. \textbf{37} (1996), no.~10, 4986--5018.

\end{thebibliography}

\end{document}